\numberwithin{equation}{section}
\declaretheoremstyle[bodyfont=\it,qed=\qedsymbol]{noproofstyle}
\declaretheorem[name=Observation,numbered=no]{observation*}
\declaretheorem[numberlike=equation]{theorem}
\declaretheorem[name=Theorem,numbered=no]{theorem*}
\declaretheorem[numberlike=equation]{lemma}
\declaretheorem[name=Lemma,numbered=no]{lemma*}
\declaretheorem[name=Corollary,numbered=no]{corollary*}
\declaretheorem[numberlike=equation]{proposition}
\declaretheorem[name=Proposition,numbered=no]{proposition*}
\declaretheorem[name=Claim,numbered=no]{claim*}
\declaretheorem[name=Conjecture,numbered=no]{conjecture*}
\declaretheorem[name=Question,numbered=no]{question*}
\declaretheoremstyle[bodyfont=\it]{defstyle} 
\declaretheorem[numberlike=equation,style=defstyle]{definition}
\declaretheorem[unnumbered,name=Definition,style=defstyle]{definition*}
\declaretheorem[unnumbered,name=Notation=defstyle]{notation*}
\declaretheorem[unnumbered,name=Construction,style=defstyle]{construction*}
\declaretheoremstyle[]{rmkstyle}
\declaretheorem[unnumbered,name=Example,style=rmkstyle]{example*}
\newcounter{algsubstate}
\renewcommand{\thealgsubstate}{\alph{algsubstate}}
\algnewcommand\algorithmicinput{\textbf{Input:}}
\algnewcommand\Input{\item[\algorithmicinput]}
\algnewcommand\algorithmicoutput{\textbf{Output:}}
\algnewcommand\Output{\item[\algorithmicoutput]}
\algnewcommand\algorithmicgoal{\textbf{Goal:}}
\algnewcommand\Goal{\item[\algorithmicgoal]}
\DeclarePairedDelimiter{\ceil}{\lceil}{\rceil}
\DeclarePairedDelimiter\floor{\lfloor}{\rfloor}
\newcommand{\eps}{\varepsilon}
\renewcommand{\epsilon}{\varepsilon}
\newcommand{\mcsp}{\textsf{Max-CSP}}
\newcommand{\Exp}{\mathop{\mathbb{E}}}
\newcommand{\cA}{\mathcal{A}}
\newcommand{\cD}{\mathcal{D}}
\newcommand{\cF}{\mathcal{F}}
\newcommand{\N}{\mathbb{N}}
\renewcommand{\R}{\mathbb{R}}
\newcommand{\bern}{\mathsf{Bern}}
\newcommand{\maxf}{\textsf{Max-CSP}(f)}
\newcommand{\bias}{\textsf{bias}}
\newcommand{\val}{\textsf{val}}
\newcommand{\sign}{\textsf{sign}}
\newcommand{\mon}{\textsf{MON}}
\newcommand{\maj}{\textsf{MAJ}}
\newcommand{\wmon}{\textsf{WMON}}
\newcommand{\one}{\mathds{1}}
\newcommand{\veca}{\mathbf{a}}
\newcommand{\vecb}{\mathbf{b}}
\newcommand{\vecj}{\mathbf{j}}
\newcommand{\vecu}{\mathbf{u}}
\newcommand{\vecv}{\mathbf{v}}
\newcommand{\vecx}{\mathbf{x}}
\newcommand{\veccx}{\mathbf{X}}
\newcommand{\vecy}{\mathbf{y}}
\newcommand{\vecsigma}{\boldsymbol{\sigma}}
\newcommand{\veclambda}{\boldsymbol{\lambda}}
\newcommand{\vecmu}{\boldsymbol{\mu}}
\title{Sketching Approximability of (Weak) Monarchy Predicates}
\author{
Chi-Ning Chou\thanks{School of Engineering and Applied Sciences, Harvard University, Cambridge, Massachusetts, USA. Partially supported by NSF grants DMS-2134157 and CCF-1565264, DOE grant DE-SC0022199, and the Simons foundation. Email: \texttt{chiningchou@g.harvard.edu}.}
\and Alexander Golovnev\thanks{Department of Computer Science, Georgetown University. Email: \texttt{alexgolovnev@gmail.com}.}
\and Amirbehshad Shahrasbi\thanks{Microsoft, USA. Email: \texttt{ashahrasbi@microsoft.com}. The author was with Harvard University and supported by CRA-CCC Computing Innovations Fellowship (CIFellowship 2020) during this work.}
\and Madhu Sudan\thanks{School of Engineering and Applied Sciences, Harvard University, Cambridge, Massachusetts, USA. Supported in part by a Simons Investigator Award and NSF Awards CCF 1715187 and CCF 2152413. Email: \texttt{madhu@cs.harvard.edu}.}
\and Santhoshini Velusamy\thanks{School of Engineering and Applied Sciences, Harvard University, Cambridge, Massachusetts, USA. Supported in part by a Google Ph.D. Fellowship, a Simons 
Investigator Award to Madhu Sudan, and NSF Awards CCF 1715187 and CCF 2152413. Email: \texttt{svelusamy@g.harvard.edu}.}
}
\date{}
\begin{document}

\maketitle

\begin{abstract}
    We analyze the sketching approximability of constraint satisfaction problems on Boolean domains, where the constraints are balanced linear threshold functions applied to literals. In~particular, we explore the approximability of monarchy-like functions where the value of the function is determined by a weighted combination of the vote of the first variable (the president) and the sum of the votes of all remaining variables. The pure version of this function is when the president can only be overruled by when all remaining variables agree. For every $k \geq 5$, we show that CSPs where the underlying predicate is a pure monarchy function on $k$ variables have no non-trivial sketching approximation algorithm in $o(\sqrt{n})$ space. We also show infinitely many weaker monarchy functions for which CSPs using such constraints are non-trivially approximable by $O(\log(n))$ space sketching algorithms. Moreover, we give the first example of sketching approximable asymmetric Boolean CSPs.
    Our results work within the framework of Chou, Golovnev, Sudan, and Velusamy (FOCS 2021) that characterizes the sketching approximability of all CSPs. Their framework can be applied naturally to get a computer-aided analysis of the approximability of any specific constraint satisfaction problem. The novelty of our work is in using their work to get an analysis that applies to \emph{infinitely} many problems simultaneously.
\end{abstract}
\thispagestyle{empty}
\newpage
\setcounter{page}{1}

\section{Introduction}
In this paper we consider the sketching complexity of solving constraint satisfaction problems (CSPs) approximately where the constraints are given by linear threshold functions over a collection of Boolean literals. We introduce these terms below.

\paragraph{CSPs:} Given a Boolean function $f:\{-1,1\}^k \to \{0,1\}$, the Boolean CSP associated with $f$, denoted $\maxf$ is the following optimization problem. Given $m$ constraints $C_1,\ldots,C_m$ on $n$ Boolean variables $X_1,\ldots,X_n$, where each constraint applies $f$ to a sequence of $k$ distinct literals from the set $\{X_1,\ldots,X_n, -X_1,\ldots,-X_n\}$, find the maximum fraction of constraints that can be satisfied by an assignment to the $n$ variables. For an instance $\Psi$ of $\maxf$ we use $\val_\Psi$ to denote this maximum value.
We are interested in approximating $\val_\Psi$ and this task is known to be equivalent to solving a gapped decision version of $\maxf$. For $0 \leq \beta < \gamma \leq 1$ we define the $(\gamma,\beta)$-gapped version of $\maxf$, abbreviated to $(\gamma,\beta)$-$\maxf$, to be the following promise decision problem: Given an instance $\Psi$ satisfying $\val_\Psi\geq \gamma$ or $\val_\Psi < \beta$ decide which one of the two conditions holds.

\paragraph{Sketching algorithms:} The class of algorithms we consider (and rule out) are randomized sketching algorithms. Inputs to these algorithms arrive as a stream of elements, in our case a stream of constraints. We consider algorithms that use some bounded amount of space, denoted $s(n)$, to process the stream and maintain a sketch of their output. When the stream ends the algorithm outputs it verdict based on the current sketch. A key restriction of a sketching algorithm is that its sketch should satisfy the following composability property. Given  two streams $\sigma$ and $\tau$ and a fixing of the randomness, the sketch of their concatenation $S(\sigma \circ \tau)$ should be determined by their sketches $S(\sigma)$ and $S(\tau)$ alone.\footnote{In contrast, a general streaming algorithm maintains a state $S(\sigma \circ \tau)$ that may depend on $S(\sigma)$ and all of $\tau$.} Most existing algorithms for streaming CSPs are sketching algorithms. We say a sketching algorithm solves a (gapped) decision problem if on every input its answer is correct with probability at least $2/3$. 

\paragraph{Approximability and approximation resistance:} 
For $\alpha\in[0,1]$, we say an algorithm is an $\alpha$-approximation algorithm for $\maxf$ if the following holds: on every input instance $\Psi$, the algorithm outputs $v$ such that $\alpha\cdot\val_\Psi\leq v\leq\val_\Psi$ with probability at least $2/3$. Note that the existence of an $\alpha$-approximation algorithm is equivalent to the existence of an algorithm for solving $(\gamma,\beta)$-$\maxf$ for every $\gamma,\beta\in[0,1]$ with $\beta\le \alpha\cdot\gamma$.

For a function $f:\{-1,1\}^k \to \{0,1\}$, define $\rho(f) = 2^{-k}\cdot | \{x \in \{-1,1\}^k | f(x) = 1\}|$.
For every $f$ and every instance $\Psi$ of $\maxf$, a random assignment satisfies $\rho(f)$ fraction of the constraints in expectation and so every $\Psi$ satisfies $\val_\Psi\geq \rho(f)$. Thus the $(1,\rho(f))$-$\maxf$ problem is trivially solvable by the algorithm that always outputs $\val_{\Psi} \geq 1$ (since the set $\{\Psi | \val_\Psi < \rho(f)\}$ is empty). We say $\maxf$ is sketching approximable within space $s(n)$ if there is an $\epsilon > 0$ and a sketching algorithm using at most $s(n)$ space that solves $(1-\epsilon,\rho(f)+\epsilon)$-$\maxf$. We say that $\maxf$ is approximation resistant to space $s(n)$ if for every $\epsilon > 0$, every sketching algorithm for $(1,\rho(f)+\epsilon)$-$\maxf$ requires $\Omega(s(n))$ space.

\subsection{Motivation and related work}
There has been an increasing interest in studying the approximability of CSPs in the streaming setting~\cite{KK15,KKS15,KKSV17,GVV17,GT19,KK19,CGV20,CGSV21-finite,CGSV21-boolean,SSV21,BHP+22,CGS+22}.
In particular, recently Chou, Golovnev, Sudan, and Velusamy~\cite{CGSV21-finite,CGSV21-boolean} gave a dichotomy result for sketching approximability of all finite CSPs. Specifically, they proved the following theorem.

\begin{theorem}[\cite{CGSV21-boolean}] For every $k$, every predicate $f:\{-1,1\}^k \to \{0,1\}$ and every $0 \leq \beta < \gamma \leq 1$ one of the following holds: (1) $(\gamma,\beta)$-$\maxf$ is solvable by an $O(\log(n))$-space sketching algorithm, or (2) for every $\epsilon > 0$, $(\gamma-\epsilon,\beta+\epsilon)$-$\maxf$ is not solvable by any $o(\sqrt{n})$-space sketching algorithm. Furthermore there is a decidable procedure that determines, given $\cF$, $\gamma$ and $\beta$, which of the two conditions hold. 
\end{theorem}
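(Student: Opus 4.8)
The plan is to isolate, for each predicate $f$ and each pair $(\gamma,\beta)$, a single combinatorial quantity $\beta^*_\gamma(f)$ marking the boundary between the two cases, and to prove matching positive and negative results on the two sides of it. The first move is to reduce both directions to statements about \emph{structured random instances}. For a distribution $\cD$ on $\{-1,1\}^k$ let $\mu(\cD)=\bigl(\Exp_{\veca\sim\cD}[a_1],\dots,\Exp_{\veca\sim\cD}[a_k]\bigr)$ be its marginal vector; let $\gamma(\cD)=\Exp_{\veca\sim\cD}[f(\veca)]$, the fraction of constraints satisfied by a \emph{planted} assignment when each constraint's pattern of literal signs is an independent sample from $\cD$; and let $\beta(\cD)$ be the limiting (as the number of constraints grows) value of $\val$ on a random instance whose constraint patterns are i.i.d.\ from $\cD$ with \emph{no} planting --- this limit exists and is an explicit semialgebraic function of $\cD$ by a concentration argument plus LP duality. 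Call $(\cD_Y,\cD_N)$ a \emph{matched pair} if $\mu(\cD_Y)=\mu(\cD_N)$, and set
\[
  \beta^*_\gamma(f)\;=\;\inf\bigl\{\,\beta(\cD_N)\;:\;(\cD_Y,\cD_N)\ \text{a matched pair with}\ \gamma(\cD_Y)\ge\gamma\,\bigr\}.
\]
Modulo a careful treatment of the boundary $\beta=\beta^*_\gamma(f)$, I will argue that case~(1) holds exactly when $\beta\le\beta^*_\gamma(f)$ and case~(2) exactly when $\beta>\beta^*_\gamma(f)$.

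For the positive side, suppose $\beta\le\beta^*_\gamma(f)$. The algorithm is the bias-based sketch: using Indyk's $\ell_1$-sketch in $O(\log n)$ space it maintains an estimate of the (suitably normalized) \emph{bias vector} of the stream --- for each variable, the net signed multiplicity with which it occurs, computed using the positions in which it occurs --- and outputs its verdict by thresholding a fixed functional of this vector. Correctness rests on two analytic facts: first, for \emph{every} instance $\Psi$, $\val_\Psi$ is squeezed between two explicit functions of its bias vector (an LP-relaxation-type upper bound, and the expected performance of a specific bias-guided randomized rounding as a lower bound), so the sketch never over- or under-shoots; second, by convex/LP duality the inequality $\beta\le\beta^*_\gamma(f)$ says precisely that no instance with $\val_\Psi\ge\gamma$ shares the bias profile of an instance with $\val_\Psi<\beta$, so a fixed threshold separates the YES and NO cases. (When $\beta^*_\gamma(f)$ is large only because $\{\Psi:\val_\Psi<\beta\}$ is empty, the algorithm degenerates to the trivial one.)

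For the negative side, suppose $\beta>\beta^*_\gamma(f)$ and fix $\epsilon>0$; choose a near-optimal matched pair $(\cD_Y,\cD_N)$ with $\gamma(\cD_Y)\ge\gamma$ and $\beta(\cD_N)<\beta$. Consider two distributions over streams of constraints on $n$ variables: YES streams use a uniformly random planted assignment and i.i.d.\ patterns from $\cD_Y$, so $\val\ge\gamma(\cD_Y)-o(1)\ge\gamma-\epsilon$ with high probability; NO streams use i.i.d.\ patterns from $\cD_N$ with no planting, so $\val\le\beta(\cD_N)+o(1)<\beta+\epsilon$ with high probability --- the upper bound here requires a union bound over all $2^n$ assignments together with sharp (sub-Gaussian) concentration, which is exactly what the definition of $\beta(\cD_N)$ encodes. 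It remains to show that no $o(\sqrt n)$-space sketching algorithm distinguishes these distributions. Here composability is essential: a space-$s$ sketch yields a one-way, bounded-communication protocol in which the stream is cut into blocks and only the sketch is forwarded between them, and a hybrid argument reduces each block to a copy of a problem resembling Boolean Hidden Matching --- the (streaming / padded) Randomized-Mask-Detection problems $\RMD$, $\sRMD$, $\psRMD$ --- whose one-way communication complexity is $\Omega(\sqrt n)$. The condition $\mu(\cD_Y)=\mu(\cD_N)$ is exactly what makes this embedding valid: the part of a constraint not carrying the hidden matching information must be identically distributed in the YES and NO worlds.

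Finally, decidability: $\beta^*_\gamma(f)$ is the optimum of an explicit optimization over the $2\cdot 2^k$ probabilities defining $\cD_Y,\cD_N$, subject to the bilinear equalities $\mu(\cD_Y)=\mu(\cD_N)$ and the linear constraint $\gamma(\cD_Y)\ge\gamma$, with semialgebraic objective $\beta(\cD_N)$; the feasible set is compact, so the infimum is attained, and by decidability of the first-order theory of the reals (Tarski--Seidenberg) one can decide, for rational $\gamma,\beta$, whether $\beta\le\beta^*_\gamma(f)$. I expect the main obstacle to be the negative direction: the \emph{completeness} claim that bias-based sketches and matched-pair planted instances are simultaneously tight (so that nothing smarter helps on either side), and the communication machinery --- turning a space bound into a protocol bound, matching the ``matched marginals'' condition to the $\Omega(\sqrt n)$ hardness of Boolean Hidden Matching, and controlling the $2^n$-fold union bound in the soundness analysis of the NO instances.
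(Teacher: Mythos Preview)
The paper does not prove this theorem at all: it is quoted verbatim as a result of \cite{CGSV21-boolean} and used as a black box (see the sentence immediately preceding the theorem, and later \cref{thm:cgsv}, which restates the special case $\gamma=1$, $\beta=\rho(f)$ used in the present paper). There is therefore no ``paper's own proof'' to compare your proposal against; you have written a proof sketch for a cited result.

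That said, since you put in the effort: your outline is in the right spirit but diverges from the actual \cite{CGSV21-boolean} argument in a way that matters. In \cite{CGSV21-boolean} the dichotomy is governed not by a threshold $\beta^*_\gamma(f)$ defined via the limiting $\val$ of unplanted random instances, but by whether the convex sets $K^Y_\gamma(f)$ and $K^N_\beta(f)$ of marginal vectors (\cref{def:sets} here) intersect; membership of $\cD_N$ in $S^N_\beta$ requires $\Exp_{\vecb\sim\cD_N}\Exp_{\veca\sim\bern(p)^k}[f(\vecb\odot\veca)]\le\beta$ for \emph{every} $p\in[0,1]$, which is not the same as your ``limiting $\val$ of a random instance'' quantity (the latter would involve a maximum over assignments, hence a union bound over $2^n$ strings, and is not the object they use). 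On the algorithmic side your bias-sketch picture is right and matches \cref{prop:ell1norm} and the surrounding lemmas here; on the lower-bound side the reduction is indeed to $\sRMD/\psRMD$-type problems, and the ``matched marginals'' condition $\vecmu(\cD_Y)=\vecmu(\cD_N)$ is exactly what enables it, but the soundness of the NO instances in \cite{CGSV21-boolean} is controlled by the $S^N_\beta$ condition above rather than by a union bound over all assignments. Your decidability paragraph is essentially correct.
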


We note that a followup paper by the same authors~\cite{CGSV21-finite} extends the result to a more general setting: Specifically they allow non-Boolean variables, allow a set of predicates rather than a single function; and allow the predicates to be applied to variables rather than literals. While their result is more general all results in this paper work in the more restricted setting of \cite{CGSV21-boolean} and so we will describe our results in their language (which can be somewhat simpler for problems that are expressible in their setting). 

While the results of \cite{CGSV21-boolean} imply a dichotomy, to 
explicitly get the optimal sketching approximation ratio for a given predicate $f$, they need to solve an optimization problem which in general needs computer-aided analysis.
In order to get more explicit results one needs to restrict the families of functions considered, and even then it is unclear if there can be a closed-form expression. In the only example we are aware of, Boyland, Hwang, Prasad, Singer, and Velusamy~\cite{BHP+22} gave closed-form expressions for the optimal sketching approximation ratio of some {\em symmetric} Boolean CSPs. This still leaves the question of  exploring the sketching approximability of other subfamilies of CSPs and extracting some qualitative results yielding necessary or sufficient conditions for non-trivial approximability.

\subsection{Main results}
In this paper we study sketching approximability of CSPs on linear threshold functions. Below we define the classes of linear threshold functions and balanced linear threshold functions.
 
\begin{definition}[Linear threshold function]
A linear threshold function, or LTF, is a Boolean function $f:\{-1,1\}^k\rightarrow\{0,1\}$ of the form \[f(x) = \sign\left(\sum_{i=1}^k w_i x_i + \theta \right),\] where $w_1,\dots,w_k,\theta\in \mathbb{R}$. The function $\sign(z)$ has value $1$ if $z> 0$ and $0$ if $z\le 0$; $w_1,\dots,w_k$ are called the weights of $f$ and $\theta$ is the threshold. 
\end{definition}

\begin{definition}[Balanced linear threshold function]
A balanced linear threshold function, or balanced LTF, is an LTF with threshold $0$ and the additional restriction that for every $x \in \{-1,1\}^k$, we have $\sum_{i=1}^k w_i x_i \ne 0$. Specifically, a balanced LTF $f$ satisfies $f(-x) = 1-f(x)$ for every $x$.
\end{definition}

Note that for a balanced LTF~$f$, $\rho(f) = 1/2$, and the goal of approximability is to beat this factor. Balanced LTFs form a technically important class of functions to study visavis CSP approximability. For instance Potechin~\cite{Pot19} studies them in the polynomial time regime giving a (somewhat complex) approximation-resistant function in this class. 
In the sketching setting, interest in this class of functions comes from 
 \cite[Theorem 1.3]{CGSV21-boolean} which shows that if a function $f$ supports one-wise independence (i.e., $f^{-1}$ supports a distribution on $\{-1,1\}^k$ that is uniform on each of the $k$ marginals) then $\maxf$ is approximation resistant to $o(\sqrt{n})$ space streaming algorithms. Balanced LTFs are the most basic class of functions that \emph{do not} support one-wise independence and hence are not covered by this theorem. Studying this class thus offers the possibility of finding new classes of CSPs that are approximation resistant to $o(\sqrt{n})$-space streaming algorithms. 

Our first result shows that every balanced LTF on up to $4$ variables is sketching approximable. (So to search for new approximation resistant functions we need to look at functions on more variables!) We note that there are only finitely many such LTFs, but already this theorem gives the first example of an asymmetric Boolean CSP which is approximable by sketching algorithms.\footnote{Note that \textsf{Max-DICUT} (shown to be sketching approximable in~\cite{CGV20,CGSV21-finite}) is not considered a Boolean CSP in \cite{CGSV21-boolean} since the \textsf{Max-DICUT} constraints are applied on variables and not on literals.}
\begin{restatable}{theorem}{thmthree}\label{thm:4ltfs}
For every balanced LTF $f$ on $k\leq 4$ variables, $\maxf$ is sketching approximable in $O(\log(n))$ space.
\end{restatable}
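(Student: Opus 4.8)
The plan is to invoke the dichotomy and optimization framework of~\cite{CGSV21-boolean}: for every balanced LTF~$f$ on $k \le 4$ variables, it suffices to exhibit a distribution witnessing that the relevant convex program certifies non-trivial sketching approximability. Concretely, recall that~\cite{CGSV21-boolean} characterizes sketching approximability of $\maxf$ in terms of two families of distributions supported on $f^{-1}(1) \subseteq \{-1,1\}^k$ and on $\{-1,1\}^k$ respectively: $\maxf$ is approximation resistant to $o(\sqrt{n})$ space iff the convex hulls of certain marginal vectors intersect (equivalently, iff $f$ ``supports'' a matching pair of distributions), and otherwise it is $O(\log n)$-space approximable with a concrete positive gap. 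So the task reduces to showing that for each balanced LTF on $\le 4$ variables, this intersection does \emph{not} occur, i.e., there is a separating hyperplane (a ``bias-based'' template). Since there are only finitely many balanced LTFs on $\le 4$ variables up to the natural symmetries (permuting coordinates and negating literals), this is in principle a finite check; the content of the theorem is to organize that check cleanly rather than purely by computer search.

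First I would enumerate, up to isomorphism, all balanced LTFs on $k \in \{1,2,3,4\}$ variables. For $k=1$ the only function is $f(x)=\sign(x_1)$ (dictator), which is the classic $\mcut$-type easy case. For $k=2$, balanced LTFs are $\sign(x_1+x_2)$-type with a tie-break, i.e.\ essentially $\maxtwoand$/$\mcut$-like, already known to be approximable. For $k=3$ the balanced LTFs are (up to symmetry) the majority $\maj_3$ and the ``president'' function $\sign(2x_1+x_2+x_3)$ with appropriate tie-breaking; for $k=4$ one gets a short list of weight vectors $(w_1,\dots,w_4)$ with no zero-sum subset, such as $(1,1,1,1)$ (needs a tie-break since $k$ is even — but balanced forces us to perturb, giving e.g.\ $(3,1,1,1)$, $(5,3,3,3)$, $(3,3,1,1)$-type patterns). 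I would then, for each representative $f$, either (i) directly exhibit the separating ``bias vector'' $\lambda \in [-1,1]^k$ — intuitively a weighting of coordinates so that satisfying assignments of $f$ are correlated with $\sum_i \lambda_i x_i$ in a way that a random assignment is not — together with the threshold value that makes the CGSV optimization strictly favor the structured instance over the trivial bound $1/2$; or (ii) appeal to monotonicity/closure properties (e.g.\ if $f$ is a monotone LTF that does not support one-wise independence, a natural candidate is $\lambda$ proportional to the weight vector $w$).

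The main obstacle I anticipate is \emph{not} any single case but making the argument uniform and verifiably complete: one must confirm that the enumeration of balanced LTFs on $\le 4$ variables up to symmetry is exhaustive (this requires care with tie-breaking rules for even $k$, where ``balanced'' rules out the unbiased majority and forces asymmetric weights), and then that each case yields a \emph{strictly} positive approximation gap from the CGSV convex program — i.e.\ the separating hyperplane is strict, not just weak. The cleanest route is probably to note that a balanced LTF $f$ fails to support one-wise independence precisely because some coordinate's marginal is ``pinned away from $0$'' on $f^{-1}(1)$; quantifying this pinning over the finite family gives a uniform $\epsilon > 0$. I would therefore structure the proof as: (1) state the CGSV approximability criterion in the form of a finite-dimensional feasibility/optimization problem specialized to balanced LTFs (using $\rho(f) = 1/2$); (2) reduce to finitely many cases via coordinate permutation and literal negation symmetry; (3) for each case, present the witnessing bias vector and verify the strict inequality, deferring the (routine, computer-checkable) arithmetic; (4) conclude that $\maxf$ is $O(\log n)$-space sketching approximable with gap $(1-\epsilon, 1/2+\epsilon)$ for a uniform $\epsilon$. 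The heavy lifting is entirely in step (3)'s bookkeeping, which I expect to be mechanical once the right family of bias vectors (scaled weight vectors, perturbed at the ``president'' coordinate) is identified.
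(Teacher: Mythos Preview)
Your high-level plan (invoke the CGSV dichotomy, reduce to finitely many balanced LTFs up to symmetry, and for each one certify $K^Y_1 \cap K^N_{1/2} = \emptyset$ by a separating hyperplane) is the right framework and matches the paper. Where your proposal goes wrong is in two places: the enumeration, and the assessment of where the difficulty lies.

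On the enumeration: it is cleaner than you suggest. With $w_1\ge w_2\ge w_3\ge w_4\ge 0$ and the balanced condition $\sum_i \xi_i w_i \neq 0$ for all $\xi\in\{-1,1\}^4$, there are exactly three functions: (i) $w_1 > w_2+w_3+w_4$ gives the dictator $\sign(x_1)$; (ii) $w_2+w_3-w_4 < w_1 < w_2+w_3+w_4$ gives $\mon_4$; (iii) $w_1 < w_2+w_3-w_4$ gives $\maj(x_1,x_2,x_3)$. Your list ``$(3,1,1,1),(5,3,3,3),(3,3,1,1)$'' is off (e.g.\ $(3,3,1,1)$ is not balanced: $3-3+1-1=0$), and more importantly several of the weight vectors you name represent the \emph{same} function.

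On the difficulty: you write that the main obstacle is bookkeeping and that the per-case verification is ``mechanical once the right family of bias vectors is identified.'' For the dictator and $\maj_3$ cases that is fine (both follow from existing results, e.g.\ the $\ell_1$-sketch and the Chow-parameter/approximability criterion respectively). But the entire technical content of the theorem is the $\mon_4$ case, and your sketch does not supply the missing idea there. Your intuition that the separating normal should be the weight vector $(2,1,1,1)$ is correct for $K^Y_1$ (indeed $\widetilde K^Y_1=\{2\mu_1+3\mu'\ge 1\}$), but the hard direction is bounding $K^N_{1/2}$. The key observation you are missing is that for a balanced predicate the NO-condition $\E_{\vecb\sim\cD}\E_{\veca\sim\bern(p)^k}[f(\vecb\odot\veca)]\le 1/2$ for all $p$ forces \emph{equality} for all $p$ (by the symmetry $p\leftrightarrow 1-p$), so membership in $S^N_{1/2}$ is a polynomial identity in $p$. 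Expanding this for $\mon_4$ gives a system of linear equations in the $(u_i,v_i)$ parameters of $\cD$; a specific linear combination of those equations with the normalization constraint yields $2\mu_1+3\mu'\le 2/3$, which separates $\widetilde K^N_{1/2}$ from $\widetilde K^Y_1$. None of this is captured by ``pick a bias vector and verify a strict inequality.''

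Finally, your heuristic that ``a balanced LTF fails to support one-wise independence, and quantifying this pinning gives a uniform $\epsilon$'' is not a valid route: not supporting one-wise independence is \emph{not} sufficient for approximability in the CGSV framework. Indeed the paper's own \cref{thm:main} shows $\mon_k$ for $k\ge 5$ is approximation resistant despite not supporting one-wise independence, so any argument that proceeds solely from ``marginals on $f^{-1}(1)$ are bounded away from $0$'' cannot distinguish $\mon_4$ from $\mon_5$ and must fail.
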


Our next result shows that there do exist balanced LTFs functions on $5$ or more variables that are sketching approximation resistant. The specific family of functions we show this for are the ``Monarchy'' functions. For $k \in \N$, $\mon_k:\{-1,1\}^k \to \{0,1\}$ is given by $\mon_k(x_1,\ldots,x_k) = \sign\left((k-2)x_1+ x_2 + \cdots + x_k \right)$. It may be easily verified that $\mon_k$ is a balanced LTF. We have the following theorem.

\begin{restatable}{theorem}{thmone}\label{thm:main}
For every $k\geq5$, $\textsf{Max-CSP}(\mon_k)$ is sketching approximation resistant to space $o(\sqrt{n})$.
\end{restatable}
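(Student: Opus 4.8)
The plan is to invoke the hardness direction of the sketching dichotomy of \cite{CGSV21-boolean}. For a predicate $f$ with $\rho(f)=\tfrac12$ (which holds for every balanced LTF, in particular each $\mon_k$), their framework reduces proving that $\textsf{Max-CSP}(f)$ is approximation resistant to $o(\sqrt n)$-space sketching to exhibiting two distributions $\cD_Y,\cD_N$ on $\{-1,1\}^k$ such that: (i) $\supp(\cD_Y)\subseteq f^{-1}(1)$; (ii) $\cD_Y$ and $\cD_N$ have the same marginal vector, $\Exp_{a\sim\cD_Y}[a_i]=\Exp_{a\sim\cD_N}[a_i]$ for all $i\in[k]$; and (iii) $\cD_N$ is \emph{unbiased} for $f$, i.e.\ for every $p\in[-1,1]$ the assignment setting each variable to $+1$ independently with probability $\tfrac{1+p}2$ satisfies at most a $\tfrac12$-fraction of a random $\cD_N$-instance in expectation: $\Exp_{a\sim\cD_N}\Exp_{z}[f(a\odot z)]\le\tfrac12$. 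Here (i) makes the YES instances fully satisfiable (assign every variable $+1$); (ii) makes YES and NO instances indistinguishable to low-space sketching algorithms; and (iii) pins the value of a random $\cD_N$-instance at $\tfrac12+o(1)=\rho(f)+o(1)$. So it suffices to produce such a pair for $\mon_k$ for every $k\ge5$.

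To do this I would first use the ``president/overrule'' structure of monarchy: $\mon_k(u)=1$ iff either $u_1=+1$ and at least one of $u_2,\dots,u_k$ equals $+1$, or $u_2=\cdots=u_k=+1$; equivalently $\mon_k^{-1}(1)=\{(1,y):y\in\{-1,1\}^{k-1},\,y\neq(-1,\dots,-1)\}\cup\{(-1,1,\dots,1)\}$. Since the set of marginal vectors realizable by $\cD_Y$ is exactly the convex hull of these $2^{k-1}$ points, by Carath\'eodory it is enough to place $\mu(\cD_N)$ in this polytope and let $\cD_Y$ be any distribution on $\mon_k^{-1}(1)$ with that marginal vector. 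Exploiting the symmetry of $\mon_k$ in coordinates $2,\dots,k$, I would look for $\cD_Y,\cD_N$ invariant under permutations of those coordinates; then $\mu(\cD_N)=(\mu_1,\nu,\dots,\nu)$ and polytope-membership reduces to $\mu_1\in[-1,1]$ together with $\nu$ lying in an explicit $k$-dependent interval (upper end $1$, attained by all mass on $(-1,1,\dots,1)$; lower end coming from spreading the ``president $=+1$'' mass onto the single-$(+1)$ patterns). Plugging the overrule description into condition (iii) and expanding $\Exp_{a\sim\cD_N}\Exp_{z}[\mon_k(a\odot z)]$ as a polynomial in $p$ turns ``$\equiv\tfrac12$'' into a short list of linear equations on the Fourier coefficients $\widehat{\cD_N}(S)=\Exp_{a\sim\cD_N}[\prod_{i\in S}a_i]$: matching the $p^1$-coefficient forces $\sum_{i=2}^k\Exp[a_i]=(1-2^{k-2})\Exp[a_1]$, and the coefficient of each odd power $p^m$ with $m\ge3$ relates the level-$m$ Fourier coefficients supported in $\{2,\dots,k\}$ to the level-$(m-1)$ coefficients containing the index $1$ (even powers vanish automatically, since $\mon_k$ is balanced). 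Note the $p^1$-equation already forces $\Exp[a_1]\ne0$, matching the paper's remark that $\mon_k$ does not support one-wise independence, so $\cD_N$ cannot be the uniform distribution.

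The remaining task — the construction itself — is to exhibit a single symmetric $\cD_N$, described by a few components (the uniform distribution plus small symmetric ``slice''/point-mass corrections) with weights given as explicit rational functions of $k$, that simultaneously (a) solves this small, essentially triangular linear system, (b) is a genuine probability distribution, and (c) has its marginal vector $(\mu_1,\nu,\dots,\nu)$ inside the polytope above. I expect this uniform-in-$k$ verification to be the crux, and the place where the difficulty highlighted in the introduction (handling infinitely many problems at once) resides: one must check that certain explicit rational functions of $k$ stay in $[0,1]$ and that the induced marginals stay inside a small polytope, for \emph{all} $k\ge5$ simultaneously, rather than re-running the \cite{CGSV21-boolean} computer search separately for each $k$. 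The underlying tension is that unbiasedness (iii) drives $\cD_N$ toward uniform (marginals near $0$) while the polytope of realizable marginals excludes a neighborhood of the origin (no one-wise independence), so the construction is forced to sit near the boundary of feasibility; the inequalities are correspondingly tight, which is why the argument breaks for $k\le4$ (where $\textsf{Max-CSP}(\mon_k)$ is in fact approximable, by \cref{thm:4ltfs}) and first goes through at $k=5$.
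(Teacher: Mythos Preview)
Your plan is correct and mirrors the paper's approach exactly: invoke the hardness direction of \cite{CGSV21-boolean} by exhibiting $\cD_Y\in S_1^Y$ and $\cD_N\in S_{1/2}^N$ with equal marginals, exploit the symmetry in coordinates $2,\dots,k$ to reduce to a two-parameter problem, and observe that for a balanced predicate the condition $\Exp_{a\sim\cD_N}\Exp_z[\mon_k(a\odot z)]\le\tfrac12$ becomes an identity $\equiv\tfrac12$ in $p$ and hence a finite list of linear equations. Your degree-one relation $\sum_{i\ge2}\Exp[a_i]=(1-2^{k-2})\Exp[a_1]$ is also correct.

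The genuine gap is that you stop precisely where the work begins. You explicitly label the construction of $\cD_N$ as ``the remaining task'' and ``the crux,'' and offer only an ansatz (``uniform plus small symmetric slice/point-mass corrections''). That ansatz is not what succeeds in the paper, and there is no indication that it would: the paper's $\cD_N$ is not a perturbation of uniform but a distribution concentrated on a few central slices, with $v_i$ proportional to $\binom{k-1}{i}-\binom{k-1}{i+1}$ for $i\ge\lceil (k-1)/2\rceil$ and a single (or two) nonzero $u_i$. Verifying condition~(iii) for this family requires a separate combinatorial identity (the paper's \cref{lem:comb}), and verifying membership in the $K_1^Y$ polytope is a delicate inequality that just barely holds for $k\ge5$ and fails for $k=4$ (as you correctly anticipate). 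The paper in fact needs three separate cases ($k=5$; $k>5$ even; $k>5$ odd) with different explicit constructions. None of this is suggested by, or follows from, your outline.

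One small correction: the $p^1$-equation alone does \emph{not} force $\Exp[a_1]\ne0$; the uniform distribution satisfies it (and indeed lies in $S^N_{1/2}$). What rules out all-zero marginals is the $K_1^Y$ side, namely the polytope constraint $\mu_1(k-2)+\mu'(k-1)\ge1$ (this is \cref{lem:ky1} in the paper), which encodes that $\mon_k$ does not support one-wise independence.
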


Thus we get the first examples of functions that do not support one-wise independence that are approximation resistant to space $o(\sqrt{n})$ sketching algorithms. In fact, the theorem gives infinitely many such examples.  We suspect that the Balanced LTF constructed in \cite{Pot19} should also be approximation-resistant but so far we don't have a proof. The monarchy functions, by virtue of the simplicity allow a simpler analytic proof, though admittedly even in this case we do not have great intuition for the proof and do not know how to extend it to other classes of functions. 

Finally we also give an infinite subclass of balanced LTFs that are approximable using $O(\log(n))$ space. The functions we consider here are what we call ``weak monarchy'' functions.\footnote{Such functions are also sometimes called presidential type predicates~\cite{hp20}.} For $j \leq k \in \N$, let $\wmon_{k,j}:\{-1,1\}^k \to \{0,1\}$ be the function given by $\wmon_{k,j}(x_1,\ldots,x_k) = \sign\left(j\cdot x_1+ x_2 + \cdots + x_k \right)$. It may be easily verified that when $j+k$ is even, then $\wmon_{k,j}$ is a balanced LTF. We have
\begin{restatable}{theorem}{thmtwo}\label{thm:wmon-apx}
For all integers $j\geq 2$ and $k\geq7j^3$ such that $k+j$ is even, $\textsf{Max-CSP}(\wmon_{k, j})$ is sketching approximable in $O(\log(n))$ space.
In particular, for every $j$, there exist infinitely many $k$ such that $\textsf{Max-CSP}(\wmon_{k,j})$ is sketching approximable.
\end{restatable}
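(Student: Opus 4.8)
\section*{Proof proposal for \texorpdfstring{\cref{thm:wmon-apx}}{Theorem 1.7}}

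The plan is to push everything through the dichotomy of \cite{CGSV21-boolean}. Essentially, that framework attaches to a predicate $f:\{-1,1\}^k\to\{0,1\}$ an explicit optimization quantity --- a minimum, over distributions $\mathcal D$ on $\{-1,1\}^k$ whose marginal vector $\vec\mu(\mathcal D)=(\Exp_{x\sim\mathcal D}[x_1],\dots,\Exp_{x\sim\mathcal D}[x_k])$ lies in the marginal polytope $\mathrm{conv}(f^{-1}(1))$, of the best value $\beta(\mathcal D)=\max_{\vec b\in[-1,1]^k}\Exp_{x\sim\mathcal D}\big[\Exp_{z\sim\otimes_i\bern(b_i)}[f(z\odot x)]\big]$ that a biased assignment can force against the hard instance associated with $\mathcal D$ --- and $\maxf$ is $O(\log n)$-space sketching approximable exactly when this minimum exceeds $\rho(f)$. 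Since $\wmon_{k,j}$ is a balanced LTF we have $\rho(\wmon_{k,j})=\tfrac12$, and since a uniformly random assignment always gives $\beta(\mathcal D)\ge\tfrac12$, the entire theorem reduces to exhibiting a single quantitative inequality: a constant $\delta=\delta(j,k)>0$ with $\beta(\mathcal D)\ge\tfrac12+\delta$ for \emph{every} admissible $\mathcal D$.

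The structural phenomenon that makes this work --- and that separates $\wmon_{k,j}$ from $\mon_k=\wmon_{k,k-2}$, which is approximation resistant by \cref{thm:main} --- is that when $k\ge 7j^3$ the president's weight $j$ is tiny compared with the fluctuation scale $\sqrt{k-1}$ of the $k-1$ unit-weight votes: the president is a ``lightweight'' rather than a near-dictator. I would extract two ingredients. (a) A Berry--Esseen / local-CLT estimate showing $\wmon_{k,j}$ is \emph{regular}: writing $w=(j,1,\dots,1)$ for the weight vector, each non-president degree-one Fourier coefficient satisfies $\widehat f(\{\ell\})=\Theta(1/\sqrt k)$ for $\ell\ge2$, while $\widehat f(\{1\})=(1\pm O(j^2/k))\cdot j\cdot\widehat f(\{2\})$; the hypothesis $j=O(k^{1/3})$ keeps the $\pm j$ shift from the president in the bulk so these error terms are $o(1)$. (b) The polytope constraint: since $\wmon_{k,j}^{-1}(1)=\{x:\langle w,x\rangle\ge1\}$ (the inner product is a nonzero integer because $k+j$ is even), any admissible $\mathcal D$ has $\langle w,\vec\mu(\mathcal D)\rangle\ge1$, hence by the triangle inequality $\sum_\ell w_\ell|\mu_\ell|\ge\langle w,\vec\mu(\mathcal D)\rangle\ge1$. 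Combining (a) and (b) along the biased assignment $\vec b=t\cdot\sign(\vec\mu)$ and expanding $\Exp_{x\sim\mathcal D}[\,\cdot\,]=\tfrac12+t\sum_\ell\widehat f(\{\ell\})|\mu_\ell|+(\text{degree}\ge3\text{ terms})$ --- the even-degree part vanishes because a balanced LTF has only odd-degree Fourier mass --- the ``signal'' term is $t\sum_\ell\widehat f(\{\ell\})|\mu_\ell|\ge(1-o(1))\widehat f(\{2\})\cdot\sum_\ell w_\ell|\mu_\ell|=\Omega(t/\sqrt k)$.

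The main obstacle, and the step I expect to be delicate, is bounding the remaining (degree $\ge3$) contribution \emph{uniformly in $k$} for an adversarial $\mathcal D$ --- a naive spectral bound is hopeless since the $\ell_1$ Fourier mass of a majority-like function is exponential in $k$. The clean route is to estimate $\Exp_{x\sim\mathcal D}[\Exp_z[f(z\odot x)]]$ directly: conditioned on $x$, the quantity $\langle w,z\odot x\rangle$ is a sum of independent terms in the randomness $z$, so a Berry--Esseen bound applies with error governed by $\sum_\ell|w_\ell|^3/(\sum_\ell w_\ell^2)^{3/2}$; comparing $\vec b$ with $-\vec b$ (which cancels the Gaussian-tail part of the error) the surviving correction is $O(j^3/k^{3/2})$ --- the cube of the president's weight over the variance to the $3/2$ --- which is exactly why the hypothesis is cubic. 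Choosing $t$ a small absolute constant and using $k\ge 7j^3$ makes the error at most half the signal, giving $\beta(\mathcal D)\ge\tfrac12+\Omega(1/\sqrt k)$; a couple of degenerate cases (e.g.\ $\mathcal D$ supported on a nearly antipodal pair, where the linear term is weak and one instead takes $\vec b$ equal to the dominant atom, which already yields margin $\Omega(1/(j+k))$) are handled separately. Feeding $\delta=\Omega(1/\sqrt k)$ back into the \cite{CGSV21-boolean} dichotomy produces the claimed $O(\log n)$-space algorithm for $(1-\epsilon,\tfrac12+\epsilon)$-$\maxf$ with $\epsilon=\Omega(1/\sqrt k)$, and the ``in particular'' clause is immediate since for each fixed $j\ge2$ every other integer $k\ge 7j^3$ has $k+j$ even.
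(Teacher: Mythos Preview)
Your key structural observation---that for $k\gg j^3$ the degree-one Fourier coefficients of $\wmon_{k,j}$ are essentially proportional to the weight vector $(j,1,\dots,1)$---is exactly the paper's Lemma~5.11. But the way you deploy it has two genuine gaps.

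First, the dichotomy you quote is not the one in \cite{CGSV21-boolean}. In the paper's framework (\cref{def:sets} and \cref{thm:cgsv}) the ``No'' set $S_\beta^N$ is defined with a \emph{single} scalar bias $p$, i.e.\ $a\sim\bern(p)^k$, not a product $\otimes_i\bern(b_i)$. Under single-$p$ bias the degree-one term is $(2p-1)\sum_i\widehat f(\{i\})\mu_i$, not $t\sum_i\widehat f(\{i\})|\mu_i|$; you cannot choose signs coordinatewise. What actually makes the argument go through is the stronger statement that $\sum_i\widehat f(\{i\})\mu_i>0$ for \emph{every} $\vecmu\in K_1^Y$---equivalently that $\sum_i\widehat f(\{i\})x_i>0$ on all of $f^{-1}(1)$, i.e.\ that the Chow-parameter LTF \emph{equals} $\wmon_{k,j}$. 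The paper establishes precisely this by a direct binomial-coefficient computation showing the ratio $\widehat f(\{1\})/\widehat f(\{2\})$ lies strictly in $(j,j+1)$; your looser Berry--Esseen estimate of $(1\pm O(j^2/k))\cdot j$ would need the same sharpening to be useful.

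Second, the Berry--Esseen route for the degree-$\ge3$ remainder is both unnecessary and not justified as stated. Berry--Esseen gives an unsigned $O(1/\sqrt k)$ error on the CDF; ``comparing $\vec b$ with $-\vec b$ to cancel the Gaussian-tail part'' would require an Edgeworth expansion, not Berry--Esseen, and you have not argued validity of that expansion uniformly in $x$ and $\mathcal D$. The paper sidesteps this entirely: once the Chow-LTF equals $f$ (so $\epsilon_0(f)>0$), it does \emph{not} go through the dichotomy at all. Instead it gives an explicit $\ell_1$-bias algorithm (Algorithm~1, \cref{thm:approximablefn}) and bounds the higher-order Fourier mass by the trivial $|\widehat f(S)|\le\rho(f)$ together with $\gamma\le 1/k$ (see the proof of \cref{lemma:lb}); no CLT-type input is needed, and no adversarial $\mathcal D$ or degenerate-case analysis arises.

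So the clean path is: prove $\widehat f(\{1\})/\widehat f(\{2\})\in(j-1,j+1)$ by explicit calculation (this is where the hypothesis $k\ge 7j^3$ is spent), conclude that $\wmon_{k,j}(x)=\sign\bigl(\sum_i\widehat f(\{i\})x_i\bigr)$, and then invoke the general fact (\cref{lem:approximablefn}) that any function equal to its Chow-parameter LTF is sketching approximable.
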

The results above give the first examples of asymmetric Boolean CSPs for which $\maxf$ is sketching approximable. Again we get an infinite family of such functions.

\paragraph{Comparison to the polynomial time regime.} Hast~\cite{h05} proves that (a generalization of) \cref{lem:approximablefn} holds in the polynomial time regime (thus, implying an analogue of \cref{thm:wmon-apx} in the polynomial time regime). Austrin, Benabbas, and Magen~\cite{abm10} prove that $\mon_k$ is approximable in polynomial time, which is in sharp contrast to the result of \cref{thm:main} in the sketching setting. Huang and Potechin~\cite{hp20} show that almost all $\wmon$ predicates are approximable in polynomial time. Finally, Potechin~\cite{Pot19} gives a balanced LTF which is (conditionally) approximation resistant in the polynomial time regime.

\paragraph{Organization of the paper.}
We start with giving formal definitions and stating relevant previous results in~\cref{sec:prelim}. The three main theorems are proved in~\cref{sec:additional},~\cref{sec:proof of main thm}, and~\cref{sec:Theorem2_proof}, respectively.

\section{Preliminaries}\label{sec:prelim}

We use $\N,\R$, and $\R_{\geq0}$ to denote the sets of all natural, real, and non-negative real numbers, respectively. We use $[n]$ to denote the set $\{1,\ldots,n\}$. We write vector variables in boldface, e.g., $\vecx$, and we use $x_i$ to denote their $i$th entry. For two vectors of the same length $\vecx,\vecy\in\R^k$, $\vecx\odot\vecy\in\R^k$ denotes the entry-wise product of $\vecx$ and $\vecy$. For $p\in[0,1]$, $\bern(p)$ denotes the Bernoulli distribution taking value~$1$ with probability~$p$, and value~$-1$ with probability~$1-p$. We adopt the convention that $\binom{n}{k}=0$ for $k<0$ or $k>n$. By $\binom{n}{\leq k}$ we denote the sum $\sum_{i=0}^{k}\binom{n}{i}$.
\subsection{Sketching approximability and approximation resistance}
For a function $f\colon\{-1,1\}^k\to\{0,1\}$, let $\rho(f)=2^{-k}\cdot|\{\veca\in\{-1,1\}^k\, |\, f(\veca)=1\}|$ denote the probability that a uniformly random assignment of the variables satisfies~$f$. 
\begin{definition}[Sketching approximation resistance]
For a function $f\colon\{-1,1\}^k\to\{0,1\}$, we say that $f$ is \emph{sketching approximation resistant} to space $s(n)$ if for every $\epsilon > 0$, every sketching algorithm for $(1,\rho(f)+\epsilon)$-$\maxf$ requires $\Omega(\sqrt{n})$ space.
\end{definition}

\begin{definition}[Sketching approximability]
For a function $f\colon\{-1,1\}^k\to\{0,1\}$, we say that $f$ is \emph{sketching approximable} in space $s(n)$ if there exist $\epsilon > 0$ and a sketching algorithm that solves $(1-\epsilon,\rho(f)+\epsilon)$-$\maxf$ using space $s(n)$.
\end{definition}

At first glance, it seems that if $f$ is not sketching approximation resistant then it's not necessarily sketching approximable. Nonetheless, \cite{CGSV21-boolean} proved that every $f$ is either approximable or approximation resistant.\footnote{Concretely, as the sets $K^Y,K^N$ are closed (see~\cref{lem:convex}), an algorithm for $(1,\rho(f)+\epsilon)$-$\maxf$ implies an algorithm for $(1-\epsilon',\rho(f)+\epsilon)$-$\maxf$ for some $\epsilon'>0$, which in turn implies that $\maxf$ is approximable.}

\subsection{Characterization of approximability from~\texorpdfstring{\cite{CGSV21-boolean}}{[CGSV21]}}
In this work, we focus on CSPs that use a single function $f$ applied to literals. Thus, we will use the machinery from~\cite{CGSV21-boolean} instead of the more general (and more notationally-heavy) version in~\cite{CGSV21-finite}.
For a distribution $\cD \in \Delta(\{-1,1\}^k)$, by $\vecmu(\cD)$ we denote its marginals, i.e., $\vecmu(\cD) = (\mu_1,\ldots,\mu_k)$ where $\mu_i = \Exp_{\vecb\sim\cD}[b_i]$ for all $i\in[k]$.
\begin{definition}[{\cite[Definitions~2.1 and~2.2]{CGSV21-boolean}}]\label{def:sets}
For $\gamma,\beta \in \mathbb{R}$, we define the sets of distributions $S_\gamma^Y$ and $S_\beta^N$ as
\begin{align*}
S_\gamma^Y  = S_\gamma^Y(f) & = \{\cD_Y \in \Delta(\{-1,1\}^k) ~\vert ~\Exp_{\vecb\sim \cD_Y}[f(\vecb)]\ge \gamma \}
\intertext{and}
S_\beta^N = S_\beta^N(f) & = \{\cD_N\in \Delta(\{-1,1\}^k) ~\vert ~\Exp_{\vecb\sim \cD_N}\Exp_{\veca \sim \bern(p)^k}[f(\vecb \odot \veca)]\leq \beta, \forall p\in[0,1] \} \, ,
\end{align*}
and the sets of marginals of these distributions
\begin{align*}
K_\gamma^Y = K_\gamma^Y(f) & = \{~\vecmu(\cD_Y)~\vert ~ \cD_Y \in S_\gamma^Y \}
\intertext{and}
K_\beta^N = K_\beta^N(f)& = \{~\vecmu(\cD_N)~\vert ~ \cD_N \in S_\beta^N \} \, .
\end{align*}
\end{definition}

We will use the following properties of the sets $K_\gamma^Y$ and $K_\beta^N$.
\begin{lemma}[{\cite[Lemma~2.4]{CGSV21-boolean}}]\label{lem:convex}
For every $\gamma,\beta \in [0,1]$ the sets $K_\gamma^N$ and $K_\beta^Y$ are bounded, closed and convex.
\end{lemma}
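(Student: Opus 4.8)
The plan is to deduce all three properties for $K_\gamma^Y$ and $K_\beta^N$ (the sets from \cref{def:sets}) from the corresponding properties of the underlying distribution sets $S_\gamma^Y, S_\beta^N \subseteq \Delta(\{-1,1\}^k)$, using that the marginal map
\[
\vecmu \colon \Delta(\{-1,1\}^k) \to [-1,1]^k, \qquad \vecmu(\cD) = \Big(\Exp_{\vecb\sim\cD}[b_1],\dots,\Exp_{\vecb\sim\cD}[b_k]\Big),
\]
is \emph{linear} in $\cD$ (viewing a distribution as a point of the probability simplex inside $\R^{2^k}$), and hence in particular continuous. Boundedness is then immediate: each coordinate $\mu_i(\cD) = \Exp_{\vecb\sim\cD}[b_i]$ lies in $[-1,1]$ since $b_i\in\{-1,1\}$, so $K_\gamma^Y, K_\beta^N \subseteq [-1,1]^k$.

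For convexity, the point is that the constraints cutting $S_\gamma^Y$ and $S_\beta^N$ out of the simplex are affine in $\cD$. Indeed, $\cD \mapsto \Exp_{\vecb\sim\cD}[f(\vecb)] = \sum_{\vecb\in\{-1,1\}^k}\cD(\vecb)\,f(\vecb)$ is a linear functional of $\cD$, so $S_\gamma^Y$ is the intersection of the convex set $\Delta(\{-1,1\}^k)$ with a closed half-space, hence convex. Likewise, for each fixed $p\in[0,1]$ the map $\cD \mapsto \Exp_{\vecb\sim\cD}\Exp_{\veca\sim\bern(p)^k}[f(\vecb\odot\veca)] = \sum_{\vecb}\cD(\vecb)\cdot\Exp_{\veca\sim\bern(p)^k}[f(\vecb\odot\veca)]$ is linear in $\cD$ (for fixed $\vecb$ and $p$, the inner expectation is just a scalar), so $S_\beta^N$ is the intersection of the simplex with the family of closed half-spaces indexed by $p\in[0,1]$; an arbitrary intersection of convex sets is convex, so $S_\beta^N$ is convex. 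Finally, $K_\gamma^Y = \vecmu(S_\gamma^Y)$ and $K_\beta^N = \vecmu(S_\beta^N)$ are images of convex sets under the linear map $\vecmu$, hence convex.

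For closedness I would additionally use that $\Delta(\{-1,1\}^k)$ is \emph{compact}, being closed and bounded in $\R^{2^k}$. Then $S_\gamma^Y$ is the intersection of a compact set with a closed half-space, hence compact; and $S_\beta^N$ is the intersection of the compact set $\Delta(\{-1,1\}^k)$ with the family of closed half-spaces $\{\cD : \Exp_{\vecb\sim\cD}\Exp_{\veca\sim\bern(p)^k}[f(\vecb\odot\veca)] \le \beta\}$ over all $p\in[0,1]$, hence a closed subset of a compact set and therefore itself compact. Since $\vecmu$ is continuous, $K_\gamma^Y$ and $K_\beta^N$ are continuous images of compact sets, so they are compact, and in particular closed (and, redundantly, bounded).

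The proof is essentially a bookkeeping exercise in elementary convex geometry, so there is no real ``main obstacle''; the only place to pause is the universal quantifier over $p$ in the definition of $S_\beta^N$, but since it merely imposes an (arbitrary) intersection of closed, convex half-spaces on the simplex, it preserves convexity and closedness, and compactness of the simplex guarantees the resulting set—and its image under $\vecmu$—remains compact.
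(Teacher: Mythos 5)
Your proof is correct. The paper itself does not prove \cref{lem:convex} --- it imports it directly from \cite[Lemma~2.4]{CGSV21-boolean} --- but your argument is the standard one and almost certainly matches the original: view $\Delta(\{-1,1\}^k)$ as a compact convex polytope in $\R^{2^k}$, observe that both $\cD\mapsto\Exp_{\vecb\sim\cD}[f(\vecb)]$ and, for each fixed $p$, $\cD\mapsto\Exp_{\vecb\sim\cD}\Exp_{\veca\sim\bern(p)^k}[f(\vecb\odot\veca)]$ are linear functionals, so $S_\gamma^Y$ and $S_\beta^N$ are intersections of the simplex with (possibly infinitely many) closed half-spaces and are therefore compact and convex, and then push these properties through the linear, continuous marginal map $\vecmu$. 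You correctly flag the one point that needs care --- the universal quantifier over $p\in[0,1]$ in $S_\beta^N$ --- and correctly resolve it by noting that an arbitrary intersection of closed convex sets remains closed and convex, with compactness inherited from the ambient simplex. No gaps.
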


With these definitions, we are ready to present the approximability criteria from~\cite{CGSV21-boolean}.\footnote{Strictly speaking the statement in Corollary 1.2 in \cite{CGSV21-boolean} is somewhat different, but their proof of Corollary~1.2 asserts this explicitly.}
\begin{theorem}[{\cite[Corollary~1.2]{CGSV21-boolean}}]\label{thm:cgsv}
For every $k\in\N$ and every function $f:\{-1,1\}^k \to \{0,1\}$, if $K_1^Y(f) \cap K_{\rho(f)}^N(f) = \emptyset$, then $f$ is sketching approximable within space $O(\log(n))$, if $K_1^Y(f) \cap K_{\rho(f)}^N(f) \neq \emptyset$, then $f$ is sketching approximation resistant to space $o(\sqrt{n})$.
\end{theorem}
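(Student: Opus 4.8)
\textbf{Proof plan for \cref{thm:wmon-apx}.}
The plan is to apply the approximability criterion of \cref{thm:cgsv}: it suffices to show that $K_1^Y(\wmon_{k,j}) \cap K_{1/2}^N(\wmon_{k,j}) = \emptyset$, since $\rho(\wmon_{k,j}) = 1/2$ for a balanced LTF. So I need to exhibit a hyperplane (a vector $\veclambda \in \R^k$ and a threshold) separating the two convex sets $K_1^Y$ and $K_{1/2}^N$. Write $f = \wmon_{k,j}$. First I would pin down $K_1^Y(f)$: a distribution $\cD_Y$ with $\Exp_{\vecb\sim\cD_Y}[f(\vecb)] = 1$ is supported entirely on $f^{-1}(1)$, so $K_1^Y$ is exactly the convex hull of $\{\vecb : f(\vecb) = 1\}$. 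Since $f(\vecb) = 1$ forces $j b_1 + b_2 + \cdots + b_k > 0$, every $\vecb$ in the support satisfies $j b_1 + \sum_{i\ge 2} b_i \geq 1$ (integrality), hence every $\vecmu \in K_1^Y$ satisfies $j\mu_1 + \mu_2 + \cdots + \mu_k \geq 1$. Actually I expect the crucial inequality to be on the ``bulk'' coordinates: when $b_1 = -1$ we need $\sum_{i\ge 2} b_i > j$, i.e., at least $(k-1+j)/2 + 1$ of the $k-1$ tail bits are $+1$; and when $b_1 = +1$ at least $(k-1-j)/2$ of them are (in fact no constraint if $j \geq k-1$, but here $k$ is huge). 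Averaging, I would derive a lower bound of the form $\sum_{i=2}^k \mu_i \geq c(k,j)$ valid on all of $K_1^Y$, where $c(k,j)$ is bounded below by something like a small positive constant times $(k-1)$; the hypothesis $k \geq 7j^3$ is what makes this bound strong enough.

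Next I would analyze $K_{1/2}^N(f)$. By symmetry of a balanced LTF, $\vecmu = \mathbf 0$ lies in $K_{1/2}^N$ (take $\cD_N$ uniform), so the two sets are ``close'' only along directions where the $K_1^Y$ bound is weak. The key structural fact I want is an upper bound on $\sum_{i=2}^k \mu_i$ over $K_{1/2}^N$ that contradicts the lower bound from $K_1^Y$. For $\cD_N \in S_{1/2}^N$ the defining condition is $\Exp_{\vecb\sim\cD_N}\Exp_{\veca\sim\bern(p)^k}[f(\vecb\odot\veca)] \leq 1/2$ for all $p \in [0,1]$. I would use the $p \to 1$ (or a well-chosen $p$ slightly less than $1$) regime: as $p\to 1$, $\veca \to \mathbf 1$ and the condition degenerates to $\Exp_{\cD_N}[f(\vecb)] \leq 1/2$, which only says $\cD_N$ puts at most half its mass on $f^{-1}(1)$ — too weak. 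The point of the $\bern(p)$ smoothing is to rule out distributions that are ``correlated with $f$ after noise.'' Concretely I would take $p$ close to $1$ and Taylor-expand: $\Exp_{\veca}[f(\vecb\odot\veca)] \approx f(\vecb) + (1-p)\sum_i (\text{influence-type term})$, so the condition forces a bound linking $\Pr_{\cD_N}[f(\vecb)=1]$ to the marginals. I would choose the separating functional to be $\ell(\vecmu) = \sum_{i=2}^k \mu_i$ (ignoring coordinate $1$ because $j\mu_1$ could be large) and show $\sup_{K_{1/2}^N}\ell < \inf_{K_1^Y}\ell$. An alternative, cleaner route: show directly that no single $\cD$ can be simultaneously in $S_1^Y$ and have marginals matching something in $K_{1/2}^N$ — but since the sets are of marginals, not distributions, I must argue at the level of marginals, which is exactly why \cref{lem:convex} (convexity) and a separating-hyperplane argument are needed rather than a pointwise contradiction.

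The main obstacle I anticipate is getting a tight enough upper bound on $\sum_{i=2}^k \mu_i$ over $K_{1/2}^N$ — this is the step where the $\bern(p)$ averaging must be exploited quantitatively. The natural approach is: given $\cD_N$ with $\vecmu(\cD_N) = \vecmu$, lower-bound $\Exp_{\cD_N}\Exp_{\veca\sim\bern(p)^k}[f(\vecb\odot\veca)]$ from below by something increasing in $\sum_{i\ge 2}\mu_i$, using that $f = \wmon_{k,j}$ is essentially a (biased) majority on the tail and majorities are very sensitive to the mean of their inputs. Since $k \geq 7j^3$, the tail majority dominates the president's vote except on a $2^{-\Omega(k/j^2)}$-fraction of inputs, so I would argue $f(\vecb\odot\veca)$ behaves like $\sign(\sum_{i\ge2} b_i a_i)$ up to small error; then standard anti-concentration / Gaussian-approximation for biased sums gives that if $\sum_{i\ge2}\mu_i$ is a constant fraction of $k$, then for $p$ bounded away from $0$ and $1$ the noised value exceeds $1/2$, contradicting membership in $S_{1/2}^N$. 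Combining: $\inf_{K_1^Y}\sum_{i\ge2}\mu_i \geq c_1(k-1)$ while $\sup_{K_{1/2}^N}\sum_{i\ge2}\mu_i \leq c_2(k-1)$ with $c_2 < c_1$ once $k \geq 7j^3$, so the sets are disjoint and \cref{thm:cgsv} yields the claimed $O(\log n)$-space algorithm. The ``in particular'' clause is immediate since for fixed $j$ every sufficiently large $k$ of the correct parity satisfies $k \geq 7j^3$.
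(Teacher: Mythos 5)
Your proposal is not a proof of the statement it was asked to address. The statement in question, \cref{thm:cgsv}, is Corollary~1.2 of~\cite{CGSV21-boolean}: the paper \emph{cites} this dichotomy as a black box and contains no proof of it. Your write-up explicitly labels itself as a ``proof plan for \cref{thm:wmon-apx}'' and uses \cref{thm:cgsv} as an ingredient. So with respect to the stated target, the proposal is off-topic.

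Treating the proposal as an attempt at \cref{thm:wmon-apx} instead: the route is genuinely different from the paper's. The paper does \emph{not} prove \cref{thm:wmon-apx} by exhibiting a separating hyperplane between $K^Y_1$ and $K^N_{1/2}$ and invoking \cref{thm:cgsv}. It proves something stronger and more explicit: \cref{thm:approximablefn} gives a concrete $O(\log n)$-space sketching algorithm (estimate the $\ell_1$ norm of the $\veclambda$-bias vector with $\veclambda$ the Chow parameters, via \cref{prop:ell1norm}, then output $\rho(f)+\tilde B\tilde\delta/(1+\epsilon')^2$), together with matching value bounds \cref{lemma:lb} and \cref{lemma:ub}. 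The reduction to weak monarchies is then \cref{lem:weakMonarchyFn}: for $k\geq 7j^3$ the Chow parameters of $\wmon_{k,j}$ define the function itself, because $j-1<\widehat f(\{1\})/\widehat f(\{i\})<j+1$. Your approach, if it worked, would only give existence of an algorithm via the dichotomy, while the paper's gives an explicit algorithm with a quantitative approximation ratio.

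There is also a concrete gap in the proposed separating functional. You choose $\ell(\vecmu)=\sum_{i\ge2}\mu_i$, ``ignoring coordinate $1$ because $j\mu_1$ could be large,'' and claim $\inf_{K^Y_1}\ell\geq c_1(k-1)$ for some $c_1>0$. This is false. Take $\vecb\in\{-1,1\}^k$ with $b_1=1$ and the tail coordinates chosen so that $\sum_{i\ge2}b_i=1-j$ (this is feasible since $j+\sum_{i\ge2}b_i=1>0$ so $f(\vecb)=1$, and the parity works out because $k+j$ is even). The point mass on this $\vecb$ lies in $S^Y_1$ and has $\ell(\vecmu)=1-j<0$, so $\inf_{K^Y_1}\ell\leq 1-j$, not $\Omega(k)$. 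Any workable separating functional must weight the president coordinate appropriately: by the analogue of \cref{lem:ky1} for $\wmon_{k,j}$, the correct supporting half-space of $K^Y_1$ is $j\mu_1+(k-1)\mu'\geq 1$, i.e., the normal direction is essentially the weight vector $(j,1,\ldots,1)$ --- which, by \cref{lem:weakMonarchyFn}, is proportional to the Chow parameters. Dropping $\mu_1$ destroys the argument at the very first step. The remainder of the plan (the $K^N_{1/2}$ upper bound via Taylor expansion around $p\to1$, Gaussian approximation, anti-concentration) is sketched too loosely to assess, but it is moot given that the chosen direction does not separate.
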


\subsection{(Weak) Monarchy functions}

\begin{definition}
A monarchy predicate on $k\geq2$ variables $\mon_k\colon\{-1,1\}^k\to\{0,1\}$ is defined as
\[
\mon_k(x_1,\ldots,x_k) = \sign\left((k-2)x_1 + \sum_{i=2}^k x_i\right)
\;.
\]
Here $x_1$ is commonly referred to as the president and the rest of $x_i$s are called citizens.
\end{definition}

\begin{definition}[Weak monarchy functions]
A weak monarchy predicate of order $j$ on $k\geq2$ variables $\wmon_{k, j}\colon\{-1,1\}^k\to\{0,1\}$ is defined as
\[
\wmon_{k, j}(x_1,\ldots,x_k) = \sign\left(j\cdot x_1 + \sum_{i=2}^k x_i\right)
\;.
\]
Similar to ordinary monarchy functions, $x_1$ is commonly referred to as the president and the rest of $x_i$s are called citizens.
\end{definition}

It is straightforward to see that $\mon_k$ is a balanced LTF for every $k\geq2$ and $\wmon_{k,j}$ is a balanced LTF whenever $k+j$ is even.

\subsection{Fourier analysis of Boolean functions}

We will need the following basic notions from Fourier analysis over the Boolean hypercube (see, for instance,~\cite{ryansbook}). 
\begin{definition}[Characteristic functions]
For every $S\subseteq[k]$ such that $|S|\ge 1$, the characteristic function $\chi_S:\{-1,1\}^k\rightarrow \{-1,1\}$ is defined as $\chi_S(x) = \prod_{i\in S} x_i$. The characteristic function corresponding to the empty set is defined as the constant function $\chi_\emptyset(x)=1$ for all $x\in \{-1,1\}^k$.
\end{definition}

\begin{definition}[Fourier expansions]
The Fourier expansion of a Boolean function $f:\{-1,1\}^k\rightarrow \{0,1\}$ is given by 
\[f = \sum_{S\subseteq[k]} \widehat{f}(S) \cdot \chi_S \, ,\]where $ \widehat{f}(S) = \mathbb{E}_{x\sim \mathsf{Unif}\{-1,1\}^k}[f(x)\cdot \chi_S(x)]$ and $\mathsf{Unif}(\{-1,1\}^k)$ denotes the uniform distribution on $\{-1,1\}^k$.
\end{definition}

\begin{definition}[Chow parameters]
The Chow parameters of a Boolean function $f:\{-1,1\}^k\rightarrow \{0,1\}$ are the degree-$0$ Fourier coefficient and the $k$ degree-$1$ Fourier coefficients of $f$, i.e., $\widehat{f}(\emptyset),\widehat{f}(\{1\}),\dots,\widehat{f}(\{k\})$.
\end{definition}

\begin{proposition}\label{prop:Boolean_fn_prop}
For every Boolean function $f:\{-1,1\}^k\rightarrow \{0,1\}$,
\begin{enumerate}
    \item $\rho(f) = \widehat{f}(\emptyset)$,
    \item for every $S\subseteq [k]$, $|\widehat{f}(S)|\le \widehat{f}(\emptyset)$, and
    \item for every $x\in \{-1,1\}^k$, $-\widehat{f}(\emptyset)\cdot k \le \sum_{i=1}^k \widehat{f}(\{i\}) \cdot x_i \le \widehat{f}(\emptyset)\cdot k $.
\end{enumerate}
\end{proposition}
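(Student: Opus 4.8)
\textbf{Proof proposal for Proposition~\ref{prop:Boolean_fn_prop}.}

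The plan is to verify each of the three items directly from the definition of the Fourier coefficients, exploiting only that $f$ takes values in $\{0,1\}$ and that the characteristic functions $\chi_S$ take values in $\{-1,1\}$. For item~(1), I would simply observe that $\widehat{f}(\emptyset) = \Exp_{x\sim\unif}[f(x)\cdot\chi_\emptyset(x)] = \Exp_{x\sim\unif}[f(x)]$, and since $f$ is $\{0,1\}$-valued this expectation equals the probability that $f(x)=1$ under the uniform distribution, which is exactly $\rho(f) = 2^{-k}|\{x : f(x)=1\}|$.

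For item~(2), I would bound $|\widehat{f}(S)|$ by passing the absolute value inside the expectation: $|\widehat{f}(S)| = |\Exp_x[f(x)\chi_S(x)]| \le \Exp_x[|f(x)|\cdot|\chi_S(x)|] = \Exp_x[f(x)]$, using that $f(x)\ge 0$ and $|\chi_S(x)| = 1$ pointwise. By item~(1) the right-hand side is $\widehat{f}(\emptyset)$, which gives the claim. (Note this also covers $S = \emptyset$ trivially with equality.)

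For item~(3), the key point is that the degree-$0$ and degree-$1$ parts of the Fourier expansion together form the "Chow part" of $f$, and evaluating the full expansion at any $x$ recovers $f(x)\in\{0,1\}$. Concretely, fix $x\in\{-1,1\}^k$ and consider the two points $x$ and $-x$. Writing $L(x) := \widehat{f}(\emptyset) + \sum_{i=1}^k \widehat{f}(\{i\})x_i$ for the low-degree truncation, I would use the full expansion $f = \sum_S \widehat{f}(S)\chi_S$ to get $L(x) = \Exp_{S}[\dots]$-type control; more directly, one can average: $\tfrac12(f(x)+f(-x)) = \sum_{|S|\text{ even}}\widehat{f}(S)\chi_S(x)$ and $\tfrac12(f(x)-f(-x)) = \sum_{|S|\text{ odd}}\widehat{f}(S)\chi_S(x)$. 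This isolates the degree-$1$ terms only partially, so instead I would argue via item~(2) termwise: $\bigl|\sum_{i=1}^k \widehat{f}(\{i\})x_i\bigr| \le \sum_{i=1}^k |\widehat{f}(\{i\})| \le k\cdot\widehat{f}(\emptyset)$, which immediately yields the two-sided bound $-\widehat{f}(\emptyset)\cdot k \le \sum_{i=1}^k \widehat{f}(\{i\})x_i \le \widehat{f}(\emptyset)\cdot k$ claimed in item~(3).

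I do not anticipate a genuine obstacle here—all three facts are standard consequences of $f$ being nonnegative and $\{0,1\}$-valued together with $|\chi_S|\equiv 1$—so the only care needed is to state the triangle-inequality steps cleanly and to note that item~(3) follows from item~(2) summed over the $k$ singletons rather than requiring any separate argument. If one wanted the slightly sharper bound coming from $0\le f\le 1$ (namely that $L(x)$ is the conditional-expectation-like projection and hence also lies in a controlled range), that would require looking at $f(x)$ and $f(-x)$ simultaneously, but the stated bound $k\cdot\widehat{f}(\emptyset)$ is weaker and falls out of item~(2) alone, so I would present it that way.
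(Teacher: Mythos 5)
Your proposal is correct and matches the paper's proof essentially line for line: item (1) from the definition of $\widehat{f}(\emptyset)$, item (2) by pushing the absolute value inside the expectation using $f\ge 0$ and $|\chi_S|\equiv 1$, and item (3) by the triangle inequality applied termwise to the degree-$1$ coefficients via item (2). The brief detour in item (3) through $f(x)\pm f(-x)$ is unnecessary, and you rightly discard it in favor of the termwise bound, which is exactly what the paper does.
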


\begin{proof}
The first statement of the proposition follows directly from the definition of $\rho(f)$: $\rho(f)=\mathbb{E}_{x\sim \mathsf{Unif}(\{-1,1\}^k)} [f(x)] =\widehat{f}(\emptyset)$. For the second statement, observe that for all $S\subseteq[k]$,
\begin{align*}
    |\widehat{f}(S)| &= |\mathbb{E}_{x\sim \mathsf{Unif}(\{-1,1\}^k)} [f(x)\cdot \chi_S(x)]| \\
    &\le \mathbb{E}_{x\sim \mathsf{Unif}(\{-1,1\}^k)} [|f(x)\cdot \chi_S(x)|] \\
    &= \mathbb{E}_{x\sim \mathsf{Unif}(\{-1,1\}^k)} [f(x)] \\
    &= \widehat{f}(\emptyset)\, .
\end{align*} It immediately follows that for all $x\in \{-1,1\}^k$, \[\left|\sum_{i=1}^k \widehat{f}(\{i\}) \cdot x_i\right|\le \sum_{i=1}^k |\widehat{f}(\{i\}) \cdot x_i| \le \widehat{f}(\emptyset)\cdot k \, .\]
\end{proof}

\section{Approximability of Balanced LTFs on 4 variables}\label{sec:additional}
In this section, we show that all balanced LTFs on at most~$4$ variables are sketching approximable in $O(\log(n))$ space. We start by proving that $\textsf{Max-CSP}(\mon_4)$ is approximable.

\subsection{Approximability of \texorpdfstring{$\mon_4$}{MON4}}
Recall that by~\cref{thm:cgsv}, it suffices to show that $K^Y_1({\mon_4})\cap K^N_{1/2}(\mon_4)=\emptyset$. For $k\geq2$, the inputs $x_2,\ldots,x_k$ are symmetric, and we will only consider distributions $\cD\in\Delta(\{-1,1\}^k)$ where all vectors having the same sum of coordinates and the same value in the first coordinate have the same probability masses. Concretely, for $\vecx,\vecy\in\{-1,1\}^k$, if $x_1=y_1$ and $\sum_{i}x_i=\sum_{i}y_i$, then $\cD(x)=\cD(y)$. Such a distribution $\cD$ is uniquely specified by a pair of vectors $\vecu=(u_0,\ldots,u_{k-1}), \vecv=(v_0,\ldots,v_{k-1}) \in \R_{\geq0}^k$ with $\sum_{i}u_i+v_i=1$, where for $0\leq i \leq k-1$,

\begin{align*}
u_i &= \Pr\{x_1 = 1 \text{ \;\;\,and exactly $i$ of the rest of $x_i$s are 1}\} \;,\\
v_i &= \Pr\{x_1 = -1 \text{ and exactly $i$ of the rest of $x_i$s are 1}\} \;.
\end{align*}
Note that when $\sum_i u_i+v_i=1$,  $\vecu,\vecv$ define a distribution $\cD$ with marginals $\vecmu(\cD)=(\mu_1,\mu',\dots,\mu')$ where 
\begin{align}\label{eq:mu1}
    \mu_1=\sum_{i=0}^{k-1}(u_i-v_i) \text{ and } \mu'=\sum_{i=0}^{k-1}(\frac{2i}{k-1}-1)(u_i+v_i)\;.
\end{align}

Next we show that for $\mon_k$ functions, restricting our attention to this class of distributions is without loss of generality.
\begin{definition}\label{def:ktilde}
For $\gamma,\beta\in\mathbb{R}$ and $k\geq2$,
\begin{align*}
\widetilde{K}_\gamma^Y(\mon_k) & = \{~(\mu_1,\mu')~\vert ~ (\mu_1,\mu',\ldots,\mu') \in K_\gamma^Y(\mon_k) \}\\
\mbox{ and }\widetilde{K}_\beta^N(\mon_k)& = \{~(\mu_1,\mu')~\vert ~ (\mu_1,\mu',\ldots,\mu') \in K_\beta^N(\mon_k) \} \, .
\end{align*}
\end{definition}
\begin{lemma}\label{lem:intersection}
For $\gamma,\beta\in\mathbb{R}$ and $k\geq2$, 
\[K^Y_\gamma({\mon_k})\cap K^N_\beta(\mon_k)=\emptyset \text{ if and only if } \widetilde{K}^Y_\gamma({\mon_k})\cap \widetilde{K}^N_\beta(\mon_k)=\emptyset \;.
\]
\end{lemma}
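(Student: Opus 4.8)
The ``only if'' direction is immediate: $\widetilde{K}^Y_\gamma \times \{\mu'\}^{k-1} \subseteq K^Y_\gamma$ by definition, so an intersection point of the tilde sets lifts to an intersection point of the full sets, and the contrapositive gives the claim. The content is in the ``if'' direction, i.e. we must show that any point $\vecmu \in K^Y_\gamma(\mon_k) \cap K^N_\beta(\mon_k)$ can be ``symmetrized'' into a point of the form $(\mu_1,\mu',\ldots,\mu')$ lying in the same intersection. The plan is to exploit the fact that $\mon_k$ is invariant under the group $G = \mathrm{Sym}(\{2,\ldots,k\})$ permuting the citizen coordinates, and that the sets $S^Y_\gamma$, $S^N_\beta$, and hence $K^Y_\gamma$, $K^N_\beta$, are $G$-invariant: if $\cD \in S^Y_\gamma$ then $\pi \cdot \cD \in S^Y_\gamma$ for every $\pi \in G$ (since $f(\pi \cdot \vecb) = f(\vecb)$), and likewise for $S^N_\beta$ (here one also uses that $\bern(p)^k$ is exchangeable, so the averaging over $\veca$ commutes with the $G$-action on $\vecb$). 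The marginals transform correspondingly: $\vecmu(\pi \cdot \cD) = \pi \cdot \vecmu(\cD)$.

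First I would take $\vecmu = \vecmu(\cD_Y) = \vecmu(\cD_N)$ witnessing a nonempty intersection, where $\cD_Y \in S^Y_\gamma$ and $\cD_N \in S^N_\beta$. The key step is to replace $\cD_Y$ by its \emph{$G$-average} $\overline{\cD_Y} = \frac{1}{(k-1)!}\sum_{\pi \in G} \pi \cdot \cD_Y$, and similarly $\overline{\cD_N}$. By convexity of $S^Y_\gamma$ (which follows since $\Exp_{\vecb \sim \cD}[f(\vecb)]$ is linear in $\cD$) and its $G$-invariance, $\overline{\cD_Y} \in S^Y_\gamma$; likewise $\overline{\cD_N} \in S^N_\beta$ using that $S^N_\beta$ is an intersection over $p \in [0,1]$ of half-spaces linear in $\cD$, hence convex, and $G$-invariant. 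Now $\overline{\cD_Y}$ is a $G$-symmetric distribution, so its marginals satisfy $\mu_2(\overline{\cD_Y}) = \cdots = \mu_k(\overline{\cD_Y})$; moreover, because marginals are linear in the distribution, $\vecmu(\overline{\cD_Y}) = \frac{1}{(k-1)!}\sum_{\pi \in G}\pi \cdot \vecmu(\cD_Y)$, which is the coordinate-wise symmetrization of the \emph{original} $\vecmu$, namely $(\mu_1, \mu', \ldots, \mu')$ with $\mu' = \frac{1}{k-1}\sum_{i=2}^k \mu_i$. Crucially, this symmetrized marginal vector depends only on $\vecmu$, not on which of $\cD_Y, \cD_N$ we started from — so $\vecmu(\overline{\cD_Y}) = \vecmu(\overline{\cD_N}) =: (\mu_1,\mu',\ldots,\mu')$. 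Hence $(\mu_1,\mu',\ldots,\mu') \in K^Y_\gamma(\mon_k) \cap K^N_\beta(\mon_k)$, which by \cref{def:ktilde} means $(\mu_1,\mu') \in \widetilde{K}^Y_\gamma(\mon_k) \cap \widetilde{K}^N_\beta(\mon_k)$, contradicting the assumed emptiness of the latter. This proves the ``if'' direction by contrapositive.

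The one point that needs genuine (if routine) verification is the $G$-invariance of $S^N_\beta$: one must check that for $\pi \in G$ and $\cD_N \in S^N_\beta$, the distribution $\pi \cdot \cD_N$ still satisfies $\Exp_{\vecb \sim \pi\cdot\cD_N}\Exp_{\veca \sim \bern(p)^k}[f(\vecb \odot \veca)] \le \beta$ for all $p$. This follows by the change of variables $\vecb \mapsto \pi^{-1}\vecb$, $\veca \mapsto \pi^{-1}\veca$: since $\bern(p)^k$ is invariant under the permutation $\pi$ (it is a product of identical Bernoullis in coordinates $2,\ldots,k$ and unchanged in coordinate $1$) and $f(\pi^{-1}(\vecb\odot\veca)) = f((\pi^{-1}\vecb)\odot(\pi^{-1}\veca)) = f(\vecb\odot\veca)$ would need... more carefully, $f$ is $G$-invariant so $f(\vecb\odot\veca) = f(\pi^{-1}(\vecb\odot\veca))$, and relabeling gives the expectation over $\pi\cdot\cD_N$ equals the expectation over $\cD_N$. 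I expect this bookkeeping to be the only mild obstacle; everything else is a direct application of the Krein–Milman-style averaging trick for symmetric optimization problems, and the convexity statements are either immediate from linearity or already recorded in \cref{lem:convex}.
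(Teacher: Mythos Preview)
Your proposal is correct and follows essentially the same symmetrization argument as the paper: use the $\mathrm{Sym}(\{2,\ldots,k\})$-invariance of $\mon_k$ together with convexity to average a witness $\vecmu \in K^Y_\gamma \cap K^N_\beta$ down to the form $(\mu_1,\mu',\ldots,\mu')$. The only cosmetic difference is that the paper averages directly at the level of marginals (invoking convexity of $K^Y_\gamma$ and $K^N_\beta$ from \cref{lem:convex}) rather than averaging the underlying distributions $\cD_Y,\cD_N$ as you do, but the two routes are equivalent since marginals are linear in the distribution.
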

\begin{proof}
First, if $(\mu_1,\mu',\ldots,\mu')\in\widetilde{K}^Y_\gamma({\mon_k})\cap \widetilde{K}^N_\beta(\mon_k)$, then by \cref{def:ktilde}, $(\mu_1,\mu',\ldots,\mu')\in{K}^Y_\gamma({\mon_k})\cap {K}^N_\beta(\mon_k)$.

For the other direction. Assume that there is a vector $\vecmu=(\mu_1,\mu_2,\ldots,\mu_k)\in K^Y_\gamma({\mon_k})\cap {K}^N_\beta(\mon_k)$. Consider two distribution $\cD_Y\in S^Y_\gamma$ and $\cD_N\in S^N_\beta$ yielding the vector $\vecmu=\vecmu(\cD_Y)=\vecmu(\cD_N)$. Given that the variables $x_2, \cdots, x_k$ are symmetric, any distribution that is yielded by permuting $x_2, \cdots, x_k$ in $\cD_Y$ (or $\cD_N$) is also in $S^Y_\gamma$ (or $S^N_\beta$). Note that the marginals of these distributions are also permutations of $\vecmu$. By~\cref{lem:convex}, $K^Y_\gamma$ and $K^N_\beta$ are convex, so they also contain the averages of these vectors: $(\mu_1,\mu',\ldots,\mu')\in K^Y_\gamma({\mon_k})\cap {K}^N_\beta(\mon_k)$ for $\mu'=(\mu_2+\ldots+\mu_k)/(k-1)$. Finally, by \cref{def:ktilde}, $(\mu_1,\mu')\in\widetilde{K}^Y_\gamma({\mon_k})\cap \widetilde{K}^N_\beta(\mon_k)$.
\end{proof}
Next, we characterize the set $\widetilde{K}^Y_1(\mon_k)$.
\begin{lemma}\label{lem:ky1}
For every $k\geq2$, $\widetilde{K}^Y_1({\mon_k})=\{(\mu_1,\mu')\in[-1,1]^2\colon \mu_1(k-2)+\mu'(k-1)\geq1\}$.
\end{lemma}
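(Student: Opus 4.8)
The plan is to directly compute, for a distribution of the form encoded by $(\vecu,\vecv)\in\R_{\geq0}^k\times\R_{\geq0}^k$ with $\sum_i u_i+v_i=1$, when $\Exp_{\vecb\sim\cD}[\mon_k(\vecb)]=1$, and then describe the set of marginals $(\mu_1,\mu')$ achievable by such fully-supported-on-$\mon_k^{-1}(1)$ distributions. First I would record exactly which configurations $\vecx\in\{-1,1\}^k$ satisfy $\mon_k$: writing $i$ for the number of citizens equal to $1$ (so $0\le i\le k-1$), we have $(k-2)x_1 + (2i-(k-1)) > 0$. When $x_1=1$ this reads $2i > 1-k$, which holds for all $i\ge 0$; when $x_1=-1$ it reads $2i > 2k-3$, i.e. $i\ge k-1$, so only $i=k-1$ works. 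Hence $\Exp_{\cD}[\mon_k]=1$ exactly when $v_i=0$ for all $i\le k-2$, i.e. the only allowed $v$-mass is $v_{k-1}$.

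Next I would plug this support restriction into the marginal formulas \eqref{eq:mu1}. With $v_0=\cdots=v_{k-2}=0$ and writing $t = v_{k-1}$ and $s = \sum_{i=0}^{k-1} u_i = 1-t$, we get $\mu_1 = s - t = 1-2t$ and $\mu' = \sum_{i=0}^{k-1}\left(\tfrac{2i}{k-1}-1\right)u_i + t$ (since the $v_{k-1}$ term contributes $(\tfrac{2(k-1)}{k-1}-1)t = t$). So the question reduces to: as $(u_0,\dots,u_{k-1})\in\R_{\geq0}^k$ and $t\ge0$ range over all values with $\sum u_i + t = 1$, what is the set of pairs $(\mu_1,\mu')$? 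The constraint $\mu_1 = 1-2t$ fixes $t=(1-\mu_1)/2$ (so we need $\mu_1\in[-1,1]$, equivalently $t\in[0,1]$), and then $\sum u_i = 1-t = (1+\mu_1)/2$ is fixed, while $\sum_i \left(\tfrac{2i}{k-1}-1\right)u_i$ ranges — by convexity, since the coefficients $\tfrac{2i}{k-1}-1$ run over an arithmetic progression from $-1$ (at $i=0$) to $+1$ (at $i=k-1$) — over the full interval $\left[-\sum u_i,\ \sum u_i\right] = \left[-\tfrac{1+\mu_1}{2},\ \tfrac{1+\mu_1}{2}\right]$. Therefore $\mu' - t$ ranges over that interval, i.e. $\mu' \in \left[t - \tfrac{1+\mu_1}{2},\ t + \tfrac{1+\mu_1}{2}\right] = \left[\tfrac{1-\mu_1}{2} - \tfrac{1+\mu_1}{2},\ \tfrac{1-\mu_1}{2}+\tfrac{1+\mu_1}{2}\right] = [-\mu_1,\ 1]$. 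Wait — I should double-check the lower end against the claimed description; the claimed set is $\{\mu_1(k-2)+\mu'(k-1)\ge 1\}$, so I will instead re-derive the extreme point carefully: the relevant minimum of $\mu_1(k-2)+\mu'(k-1)$ over the achievable set should be exactly $1$, and I would verify this by noting that $\mu_1(k-2)+\mu'(k-1) = \Exp_{\cD}\left[(k-2)b_1 + \sum_{i\ge2} b_i\right]$, which for a distribution supported on $\mon_k^{-1}(1)$ is an expectation of a strictly positive (hence $\ge 1$, since it is an odd integer) integer-valued random variable, giving ``$\ge$''.

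This reframing via the linear functional is cleaner than the coordinate computation, so I would organize the proof around it: (i) $\mon_k$-satisfying assignments are exactly those with $(k-2)x_1+\sum_{i\ge2}x_i\ge 1$, and this quantity is always an odd integer, so $\Exp_\cD[(k-2)b_1+\sum_{i\ge2}b_i]\ge1$ whenever $\cD\in S_1^Y$; expanding linearity of expectation gives $(k-2)\mu_1 + (k-1)\mu'\ge1$, establishing ``$\subseteq$''. (ii) For the reverse inclusion, given any $(\mu_1,\mu')\in[-1,1]^2$ with $(k-2)\mu_1+(k-1)\mu'\ge1$, I would exhibit an explicit $\cD$ supported on $\mon_k^{-1}(1)$ with these marginals: put mass only on the all-ones vector $\mathbf{1}$ (president $+1$, all citizens $+1$, which has $x_1=1,\ \sum_{i\ge2}x_i=k-1$), on the vector with president $+1$ and all citizens $-1$, and on the vector with president $-1$ and all citizens $+1$ — these are the three ``extreme'' satisfying configurations whose $(\mu_1,\mu')$-images are $(1,1)$, $(1,-1)$, $(-1,1)$, and whose convex hull is exactly the triangle $\{(\mu_1,\mu')\in[-1,1]^2 : (k-2)\mu_1+(k-1)\mu'\ge1\}$ (one checks the edge from $(1,-1)$ to $(-1,1)$ lies on the line $(k-2)\mu_1+(k-1)\mu' = 1$... actually on $\mu_1+\mu'=0$, so I must instead use appropriate vertices). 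The main obstacle, and the only place that needs care, is getting these vertices right: I need three (or more) points in the image of $\mon_k^{-1}(1)$ under $\vecx\mapsto(x_1,\tfrac1{k-1}\sum_{i\ge2}x_i)$ whose convex hull is precisely the claimed triangle. The natural candidates are $(1, 1)$ (all ones), $(-1,1)$ (president flipped, needed for the $\mu_1=-1$ corner), and the point realizing the $\mu_1=1$ end of the boundary line $(k-2)\mu_1+(k-1)\mu'=1$, namely $\mu'=\tfrac{1-(k-2)}{k-1}=\tfrac{3-k}{k-1}$, which corresponds to $\sum_{i\ge2}x_i = 3-k$, i.e. exactly one citizen equal to $+1$ — a genuine satisfying assignment when $x_1=1$. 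Taking convex combinations of the three configurations $(1,1)$, $(1,\tfrac{3-k}{k-1})$, $(-1,1)$ and checking that every $(\mu_1,\mu')$ in the claimed set is such a combination with nonnegative coefficients (a $2\times2$ linear solve) completes the argument; I would present that solve explicitly since it is short, and note $\mu'\le 1$ is automatic and the boundary facet is exactly the line $(k-2)\mu_1+(k-1)\mu'=1$.
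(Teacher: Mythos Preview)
Your proposal is correct and follows essentially the same route as the paper: for the $\supseteq$ direction you take convex combinations of the three satisfying configurations corresponding to $u_1$, $u_{k-1}$, $v_{k-1}$ (the paper writes out the explicit coefficients of this combination), and for $\subseteq$ your observation that $(k-2)\mu_1+(k-1)\mu'=\Exp_{\cD}[(k-2)b_1+\sum_{i\ge2}b_i]$ is an expectation of a positive odd integer is a clean repackaging of the paper's direct calculation $\sum_{i\ge1}(2i-1)u_i+v_{k-1}\ge1$.

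One arithmetic slip to fix in your writeup: when $x_1=1$ the inequality $(k-2)+2i-(k-1)>0$ reads $2i>1$, not $2i>1-k$, so only $i\ge1$ works and $u_0$ must vanish as well; this is consistent with the three vertices you eventually choose (the lowest one has exactly one citizen equal to $+1$), but contradicts your earlier sentence ``which holds for all $i\ge0$''.
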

\begin{proof}
For $\mu_1,\mu'\in[-1,1]$ satisfying $\mu_1(k-2)+\mu'(k-1)\geq1$, consider the distribution $\cD_Y$ given by $u_1=\frac{(k-1)(1-\mu')}{2(k-2)}, u_{k-1}=\frac{(k-1)\mu'+(k-2)\mu_1-1}{2(k-2)}, v_{k-1}=(1-\mu_1)/2$, and $u_i=0$ for $i\not\in\{1,k-1\}$ and $v_j=0$ for $j\neq k-1$. Note that $u_1,v_{k-1}\geq0$ from $\mu_1,\mu'\in[-1,1]$, and $u_{k-1}\geq0$ from $\mu_1(k-2)+\mu'(k-1)\geq1$. It is also easy to check that $u_1+u_{k-1}+v_{k-1}=1$ which implies that $\cD_Y$ is a distribution, and that it is supported on the preimages of~$1$ under $\mon_k$. Therefore $(\mu_1,\mu')\in\widetilde{K}^Y_1({\mon_k})$.

For the other direction, a distribution $\cD_Y$ supported on the preimages of~$1$ under $\mon_k$ satisfies ${u_1+\ldots+u_{k-1}+v_{k-1}=1}$. Then, from~\eqref{eq:mu1},
\begin{align*}
    \mu_1(k-2)+\mu'(k-1)
    &=(k-2)\sum_{i=0}^{k-1}(u_i-v_i)+\sum_{i=0}^{k-1}(2i-k+1)(u_i+v_i)\\
    &=\sum_{i=1}^{k-1}(2i-1)u_i+v_{k-1}\\
    &\geq \sum_{i=1}^{k-1}u_i+v_{k-1}
    =1 \;,
\end{align*}
where the second equality uses that $u_0=0$ and $v_j=0$ for $j<k-1$. This concludes the proof of the lemma.
\end{proof}

Now we show that for the $\mon_4$ function, $\widetilde{K}^Y_1$ and $\widetilde{K}^N_{1/2}$ are disjoint, and, thus, $\mon_4$ is approximable in $O(\log(n))$ space.
\begin{lemma}\label{lem:mon4}
$\textsf{Max-CSP}(\mon_4)$ is sketching approximable in $O(\log(n))$ space.
\end{lemma}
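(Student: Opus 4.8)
The plan is to invoke \cref{thm:cgsv} together with \cref{lem:intersection}, so that it suffices to show $\widetilde{K}^Y_1(\mon_4) \cap \widetilde{K}^N_{1/2}(\mon_4) = \emptyset$. From \cref{lem:ky1} with $k=4$, we have the explicit description $\widetilde{K}^Y_1(\mon_4) = \{(\mu_1,\mu') \in [-1,1]^2 \colon 2\mu_1 + 3\mu' \geq 1\}$. So the remaining work is to get a handle on $\widetilde{K}^N_{1/2}(\mon_4)$ — or at least enough of a handle to certify disjointness — and then exhibit a separating inequality (by convexity, a separating hyperplane must exist, so I expect to prove something of the form $a\mu_1 + b\mu' \leq c$ holds on $\widetilde{K}^N_{1/2}$ while $2\mu_1 + 3\mu' \geq 1$ forces $a\mu_1 + b\mu' > c$).

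First I would translate the $N$-side condition into the $(\vecu,\vecv)$ parametrization. A distribution $\cD_N$ given by $\vecu,\vecv \in \R_{\geq 0}^4$ lies in $S^N_{1/2}$ iff for every $p \in [0,1]$, $\Exp_{\vecb \sim \cD_N}\Exp_{\veca \sim \bern(p)^4}[\mon_4(\vecb \odot \veca)] \leq 1/2$. I would expand this expectation: fixing the ``type'' $(i,\text{sign of }x_1)$ of $\vecb$, the inner expectation over $\veca$ becomes a polynomial in $p$ whose coefficients count, over the random sign flips, how often $\mon_4$ evaluates to $1$. Since $\mon_4$ is balanced, the $p=1/2$ value of the whole expression is exactly $1/2$ automatically; the real constraint is that $p=1/2$ is a maximum of this (cubic) polynomial in $p$, which gives a derivative/second-derivative condition at $p=1/2$, or more robustly a global inequality for all $p$. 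I would compute $\mon_4$'s behavior on each of its $8$ input types under random sign flips — this is a small finite calculation — to get the polynomial explicitly in terms of $\vecu,\vecv$.

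Then, rather than characterizing $\widetilde{K}^N_{1/2}$ fully, I would extract just the consequence I need. A natural candidate: because the all-ones and the ``president alone'' inputs are the ``most robustly satisfying'' directions, the $N$-constraint should bound a weighted combination of marginals. Concretely I expect the condition (for all $p$, or at least the first-order condition at $p=1/2$) to yield an inequality like $\mu_1 + c\,\mu' \leq$ const with the constant small enough that it is incompatible with $2\mu_1 + 3\mu' \geq 1$ on $[-1,1]^2$. I would identify the right linear functional by looking at which vertices of $\widetilde{K}^Y_1(\mon_4)$ (the endpoints of the segment $2\mu_1+3\mu'=1$ within the square, namely around $(1,-1/3)$ and $(-1,1)$) need to be excluded, and check that the derived $N$-side inequality excludes exactly those. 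Once the separating inequality is in hand, the disjointness — and hence the lemma, via \cref{thm:cgsv} — follows immediately.

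The main obstacle I anticipate is \emph{not} the algebra of expanding $\mon_4$ under noise (that's bounded and mechanical), but pinning down a clean, provably-valid inequality for $\widetilde{K}^N_{1/2}$: the set is defined by a family of constraints indexed by $p \in [0,1]$, and one must argue that a single well-chosen linear inequality (or a short combination of the $p$-constraints) already suffices to separate it from $\widetilde{K}^Y_1$. If a single hyperplane turns out not to separate them (i.e.\ the sets are disjoint but ``just barely''), the fallback is to use the full $p$-indexed family — e.g.\ show that assuming $2\mu_1+3\mu'\ge 1$ together with $\vecu,\vecv\ge 0$ and $\sum(u_i+v_i)=1$ forces the noise-expectation to exceed $1/2$ at some specific value of $p$ (perhaps $p$ slightly above or below $1/2$, or $p=1$), deriving a contradiction directly. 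Either way the argument is a finite-dimensional feasibility check that I'd expect to close cleanly.
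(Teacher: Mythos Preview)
Your outline matches the paper's approach: reduce via \cref{lem:intersection} to the two-dimensional problem, take the description of $\widetilde{K}^Y_1(\mon_4)$ from \cref{lem:ky1}, write out the noise polynomial $h_{\cD}(p)=\Exp_{\vecb\sim\cD}\Exp_{\veca\sim\bern(p)^4}[\mon_4(\vecb\odot\veca)]$ in the $(\vecu,\vecv)$ coordinates, and extract a linear separating inequality. The one idea you are missing is the following strengthening of the balance observation. You note that balance gives $h_\cD(1/2)=1/2$; but balance actually gives the identity
\[
h_\cD(p)=1-h_\cD(1-p)\qquad\text{for all }p\in[0,1],
\]
since replacing $\veca$ by $-\veca$ swaps $\bern(p)$ with $\bern(1-p)$ and negates the argument of $\mon_4$. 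Consequently $h_\cD(p)\le 1/2$ for all $p$ forces $h_\cD(p)\equiv 1/2$: the infinite family of inequalities collapses to the vanishing of every nonconstant coefficient of $h_\cD$. The paper writes these out (four linear equalities in $(\vecu,\vecv)$), takes a specific combination of two of them together with the normalization $\sum_i(u_i+v_i)=1$ and nonnegativity, and obtains $2\mu_1+3\mu'\le 2/3$ on $\widetilde{K}^N_{1/2}(\mon_4)$, which is immediately incompatible with $2\mu_1+3\mu'\ge 1$.

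Your proposed substitute of ``derivative/second-derivative at $p=1/2$'' is weaker than you expect: the antisymmetry above makes $h_\cD(p)-1/2$ an odd function of $p-1/2$, so $h_\cD''(1/2)=0$ automatically for \emph{every} $\cD$, and the second-derivative test yields nothing. Your parenthetical ``(cubic)'' is likewise correct only because of this antisymmetry (the raw expansion is degree~$4$; the top coefficient cancels precisely because $h_\cD-1/2$ has only odd powers of $p-1/2$). So with your stated tactics you would be left with just $h_\cD'(1/2)=0$ and the endpoint bounds $h_\cD(0),h_\cD(1)\le 1/2$; these may in fact suffice here, but you have not argued that, and in practice once you expand $h_\cD$ you will see the antisymmetry directly and can proceed exactly as the paper does.
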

\begin{proof}
Note that \Cref{lem:ky1} gives that 
$\widetilde{K}^Y_1({\mon_4})=\{(\mu_1,\mu')\in[-1,1]^2\colon 2\mu_1+3\mu'\geq1\}$. We show that $\widetilde{K}^Y_1$ and $\widetilde{K}^N_{1/2}$ are disjoint, and then \cref{lem:intersection} and \cref{thm:cgsv} imply that $\textsf{Max-CSP}(\mon_4)$ is sketching approximable in space $O(\log(n))$. 
Next, we prove that no distribution $\mathcal{D} \in S^N_{1/2}$ has marginals that lie in $\widetilde{K}^Y_1$. 

We start by characterizing $K^N_{1/2}$ (for general $\mon_k$). Take a distribution $\mathcal{D}\in \Delta(\{-1, 1\}^k)$. In order for $\mathcal{D}$ to lie within $S^N_{1/2}$, the following needs to be satisfied:
\begin{eqnarray}
\Exp_{\vecb\sim \cD_N}\Exp_{\veca \sim \bern(p)^k}[f(\vecb \odot \veca)]\leq \beta, \forall p \;.
\end{eqnarray}

Let the function $h_{\mathcal{D}}(p)$ denote the probability of an assignment from $\mathcal{D}$ that has undergone bit flips with respect to $\bern(p)^k$ to satisfy the monarchy predicate with the probability of $\beta = 1/2$ or less. With this definition, $\mathcal{D} \in S^N_{1/2}$ if and only if 
$h_{\mathcal{D}}(p) \leq \frac{1}{2}$ for all $0\leq p \leq 1.$
Note that negating all variables $x_i$ flips the output of the monarchy predicate. Therefore, the negation of a ``true'' assignment is ``false'' and vice versa. This gives that 
$h_{\mathcal{D}}(p) = 1 - h_{\mathcal{D}}(1 - p)$ for all $0\leq p \leq 1$ which implies that $\mathcal{D} \in S^N_{1/2}$ if and only if for all $0\leq p \leq 1$
\[h_{\mathcal{D}}(p) = \frac{1}{2}\;.\]

We now write down the coefficients of the polynomial $h_{\mathcal{D}}(p)$ in terms of $u_i$ and $v_i$ describing the distribution (as used earlier in this section).

If one draws an assignment from $\mathcal{D}$ where $x_1 = 1$ and exactly $i$ of the rest of the variables are~1, the probability of the resulting assignment satisfying the monarchy predicate after the Bernoulli flipping is 
\[p(1 - (1-p)^ip^{k - 1 - i}) + (1-p)^{k - i} p^i\;.\]
Similarly, if $x_1 = -1$ and exactly $i$ of the rest of the variables are~1, the probability of the resulting assignment satisfying the monarchy predicate after the Bernoulli flipping is \[(1-p)(1 - (1-p)^ip^{k - 1 - i}) + (1-p)^{k - 1 - i}p^{i + 1}\;.\]
This gives that 
\begin{eqnarray}
h_{\mathcal{D}}(p) &=& \sum_{i=0}^{k-1} u_i \left[p(1 - (1-p)^ip^{k - 1 - i}) + (1-p)^{k - i} p^i\right]\nonumber\\
&& +\sum_{i=0}^{k-1} v_i \left[(1-p)(1 - (1-p)^ip^{k - 1 - i}) + (1-p)^{k - 1 - i}p^{i + 1}\right]\label{eqn:hClosedForm}
\end{eqnarray}

To prove this lemma, we form the polynomial $h_{\mathcal{D}}(p)$ for $k=4$ and show that no set of $u_i$s and $v_i$s satisfy both $h_{\mathcal{D}}(p) = \frac{1}{2}$ and $2\mu_1+3\mu'\geq1$ (where, by \eqref{eq:mu1}, 
$\mu_1 =  \sum_{i=0}^{3}(u_i-v_i)$
and 
$\mu'=\sum_{i=0}^{3}(\frac{2i}{3}-1)(u_i+v_i)$.)
\begin{eqnarray*}
h_{\mathcal{D}}(p) &=& 
    u_0 \left[p(1 - p^3) + (1-p)^4 \right]\\
&&+ u_1 \left[p(1 - (1-p)p^2) + (1-p)^3 p\right]\\
&&+ u_2 \left[p(1 - (1-p)^2p) + (1-p)^2 p^2\right]\\
&&+ u_3 \left[p(1 - (1-p)^3) + (1-p) p^3\right]\\
&& +v_0 \left[(1-p)(1 - p^3) + (1-p)^3p\right]\\
&& +v_1 \left[(1-p)(1 - (1-p)p^2) + (1-p)^2p^2\right]\\
&& +v_2 \left[(1-p)(1 - (1-p)^2p) + (1-p)p^3\right]\\
&& +v_3 \left[(1-p)(1 - (1-p)^3) + p^4\right]\\
&=& u_0 + v_0 + v_1 + v_2\\ 
&& + p \cdot (-3u_0 + 2u_1 +u_2 - v_1 - 2v_2 + 3v_3)\\
&& + p^2 \cdot (6u_0-3u_1+3u_3-3v_0+3v_2 -6v_3)\\
&& + p^3 \cdot (-4u_0+2u_1-2u_3 +2v_0 - 2v_2+4v_3 )
\end{eqnarray*}

Every distribution (whose marginals are) in $\widetilde{K}^N_{1/2}(\mon_4)$ must satisfy the following system of equations and inequalities,
where \eqref{eqn:hd1}--\eqref{eqn:hd4} are equivalent to $h_{\mathcal{D}}(p) = \frac{1}{2}$, and \eqref{eqn:validDist1}--\eqref{eqn:validDist3} guarantee that $u_i$s and $v_i$s describe a distribution.

\begin{align}
&u_0 + v_0 + v_1 + v_2 = \frac{1}{2}        \label{eqn:hd1}\\
&-3u_0 + 2u_1 +u_2 - v_1 - 2v_2 + 3v_3 = 0  \label{eqn:hd2}\\
&6u_0-3u_1+3u_3-3v_0+3v_2 -6v_3 = 0\label{eqn:hd3}\\
&-4u_0+2u_1-2u_3 +2v_0 - 2v_2+4v_3 = 0  \label{eqn:hd4}\\
&\sum_{i=0}^3 (u_i+v_i) = 1\label{eqn:validDist1}\\
&u_i \geq 0, \quad \forall 0 \leq i \leq 3\label{eqn:validDist2}\\
&v_i \geq 0, \quad \forall 0 \leq i \leq 3\label{eqn:validDist3}
\end{align}

Summing up \eqref{eqn:hd2} multiplied by $3$, \eqref{eqn:hd4} multiplied by $-13/6$, and \eqref{eqn:validDist1} multiplied by $2/3$, we have that
\begin{align*}
2/3 
&= u_0/3+7u_1/3+11u_2/3+5u_3-11v_0/3-7v_1/3-v_2+v_3\\
&\geq -u_0 + u_1 + 3u_2 + 5u_3 - 5v_0 - 3v_1 - v_2 + v_3\\
&= 2\mu_1+3\mu' \;,
\end{align*}
where the last equality uses \eqref{eq:mu1}. By \cref{lem:ky1}, $\widetilde{K}^Y_1({\mon_4})=\{(\mu_1,\mu')\in[-1,1]^2\colon 2\mu_1+3\mu'\geq1\}$, and from the above inequality every vector $(\mu_1,\mu')\in \widetilde{K}^N_{1/2}({\mon_4})$ satisfies $2\mu_1+3\mu'\leq2/3$. This implies that $\widetilde{K}^Y_1({\mon_4})\cap\widetilde{K}^N_{1/2}({\mon_4})=\emptyset$, and finishes the proof.
\end{proof}

\subsection{Balanced LTFs on 4 variables}
In this section, we prove \cref{thm:4ltfs}.
\thmthree*
We remark that there are non-balanced LTFs on fewer than four variables that are approximation resistant. For example, if $f(x_1,x_2)=x_1 \textsf{ OR } x_2$, then $\maxf$ is approximation resistant to space $o(n)$ even in the larger class of streaming algorithms (see, e.g., Corollary~4.2 in~\cite{CGV20}).
\begin{proof}[Proof of Theorem~\ref{thm:4ltfs}]
After relabeling and negating some of the variables of $f$, we can assume that $f(x_1,x_2,x_3,x_4)=\sign(w_1 x_1+w_2 x_2+w_3 x_3+w_4 x_4)$, where $w_1\geq w_2\geq w_3\geq w_4\geq 0$ (if $f$ depends on $i<4$ variables, then we set $w_{i+1}=\ldots=w_4=0$).
Since $f$ is balanced, $\xi_1 w_1 + \xi_2 w_2 +\xi_3 w_3 +\xi_4 w_4 \neq 0$ for all $\xi_i\in\{-1,1\}$. Now consider the following three cases.
\begin{itemize}
    \item If $w_1>w_2+w_3+w_4$, then $f=\sign(x_1)$ is a dictator function, so $\maxf$ can be trivially $(1-\eps)$-approximated in $O(\log(n)/\eps^2)$ space by an $\ell_1$-sketch algorithm~\cite{Indyk,KNW10}.
    \item If $w_2+w_3-w_4<w_1<w_2+w_3+w_4$, then $f=\mon_4$ is a monarchy function on $k=4$ variables. Indeed, in this case only the sum of the votes of the three last variables overrules the vote of the first variable. By \cref{lem:mon4}, $\maxf$ is sketching approximable in $O(\log(n))$ space.
    \item If $w_1<w_2+w_3-w_4$, then $f=\maj(x_1,x_2,x_3)$ is the majority function on $3$ variables. Indeed, the sum of any two weights of the first three variables outweighs the sum of the remaining weights. In this case, $\maxf$ is known to be sketching approximable in space $O(\log(n))$ (this follows from the characterization of sketching approximable symmetric functions in \cite[Lemma~2.14]{CGSV21-boolean} and the fact that a balanced LTF doesn't support one-wise independent distributions). 
    
    Another way to see that the majority function is sketching approximable is via \cref{thm:approximablefn}. Indeed, since majority is a symmetric function, the (non-empty) Chow parameters of the majority function are all equal and non-zero (see, e.g., \cite[Theorem~5.19]{ryansbook} for the exact values of the Fourier coefficients of the majority function). Then the Chow parameters define the majority function itself, and, by \cref{thm:approximablefn}, $\maxf$ is sketching approximable in space $O(\log(n))$. \qedhere
\end{itemize}
\end{proof}

\section{Approximation resistance of Monarchy Functions}\label{sec:proof of main thm}
In this section, we prove \cref{thm:main}: we show that for $k\geq 5$, the $\mon_k$ function is approximation resistant. Recall that by~\cref{lem:intersection} it suffices to show that $\widetilde{K}^Y_1({\mon_k})\cap \widetilde{K}^N_{1/2}(\mon_k)\neq\emptyset$ for $k\geq5$. 

In the following we show that for $k\geq5$, there exist vectors $(\vecu,\vecv)$ with certain properties that will be useful in showing that $\widetilde{K}^Y_1({\mon_k})\cap \widetilde{K}^N_{1/2}(\mon_k)\neq\emptyset$.

\begin{lemma}\label{lem:main sufficient}
For every $k\geq5$, there exists $\vecu,\vecv\in\R^k_{\geq0}$ satisfying the following conditions.
\begin{enumerate}[(i)]
\item $\sum_i (u_i+v_i)=1$, i.e., $\vecu,\vecv$ define a distribution $\cD$.
In particular, the marginals of $\cD$ is $(\mu_1,\mu',\dots,\mu')$ where $\mu_1=\sum_i(u_i-v_i)$, and $\mu'=\sum_i(\frac{2i}{k-1}-1)(u_i+v_i)$.
\item $\vecu$ and $\vecv$ satisfy 
\begin{align*}
&(1/2-\delta)\sum_{i=0}^{k-1}u_i+(1/2+\delta)\sum_{i=0}^{k-1}v_i\\
+&\sum_{i=0}^{k-1}u_i\left(- (1/2+\delta)^i(1/2-\delta)^{k-i} + (1/2-\delta)^i(1/2+\delta)^{k-i}\right)\nonumber\\
+&\sum_{i=0}^{k-1}v_i\left(- (1/2+\delta)^{i+1}(1/2-\delta)^{k-1-i}+(1/2-\delta)^{i+1}(1/2+\delta)^{k-1-i}\right)\nonumber\\
=&1/2
\end{align*}
for every $\delta\in[-1/2,1/2]$. In particular, this implies that $\cD\in S^N_{1/2}$.
\item $p'\geq1-\frac{k-2}{k-1}p_1$ where $p'=\Pr_{\vecx\sim\cD}[x_2=1]=\frac{1}{k-1}\left(\sum_iiu_i+\sum_iiv_i\right)$ and $p_1=\Pr_{\vecx\sim\cD}[x_1=1]=\sum_iu_i$.
In particular, this implies the existence of $\cD_Y\in S^Y_1$ and $\vecmu(\cD_Y)=(\mu_1,\mu',\dots,\mu')$.
\end{enumerate}
\end{lemma}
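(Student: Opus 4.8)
The plan is to construct the pair $(\vecu,\vecv)$ explicitly and then verify conditions (i)--(iii) by direct computation. The key structural observation is that condition (ii), which asks that a certain degree-$k$ polynomial in $\delta$ be identically $1/2$ on $[-1/2,1/2]$, is equivalent to asking that the polynomial $h_{\cD}(p)$ (from \eqref{eqn:hClosedForm}, with $p = 1/2+\delta$) be identically $1/2$; by the symmetry argument $h_{\cD}(p) = 1-h_{\cD}(1-p)$ already used in the proof of \cref{lem:mon4}, it suffices to enforce that the coefficients of $p^1,\dots,p^{k-1}$ in $h_{\cD}(p)$ all vanish and that the constant term is $1/2$. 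So condition (ii) really amounts to a homogeneous linear system on $(\vecu,\vecv)$ together with one affine normalization, and (i) is just one more affine equation $\sum_i(u_i+v_i)=1$. The point of the lemma is that this linear system, intersected with the nonnegative orthant and with the halfspace in (iii), is nonempty for every $k\ge 5$.

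First I would look for a sparse solution: a distribution supported on just a few of the $u_i,v_i$. The natural candidates are the ``extreme'' coordinates $u_0,u_1,u_{k-1},v_0,v_1,v_{k-1}$, since those are the ones appearing in the $\mon_k$ preimage analysis in \cref{lem:ky1} and in the $\mon_4$ calculation. I would posit an ansatz such as $\cD$ being a mixture of: the all-ones-ish configurations $x_1=1$ with $i=k-1$ citizens voting yes (coordinate $u_{k-1}$), the all-minus configurations $v_0$, and one or two ``balanced'' configurations (e.g.\ $u_1$ and $v_{k-2}$, or $u_0$ and $v_{k-1}$) chosen so that the $h_{\cD}(p)$ polynomial collapses. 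Then (ii) becomes a small, explicit linear system in the handful of nonzero weights; I would solve it, check the weights come out nonnegative (this will force the constraint $k\ge 5$ — for $k=3,4$ some weight goes negative, consistent with \cref{lem:mon4}), and then compute $p_1 = \sum_i u_i$ and $p' = \frac{1}{k-1}\sum_i i(u_i+v_i)$ and verify $p' \ge 1 - \frac{k-2}{k-1}p_1$. The last inequality should reduce, after clearing denominators, to a polynomial inequality in $k$ that holds for all $k\ge 5$.

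For the ``in particular'' clauses: condition (ii) giving $\cD \in S^N_{1/2}$ is immediate once $h_{\cD}\equiv 1/2$, by \cref{def:sets}. Condition (iii) giving $\cD_Y\in S^Y_1$ with the same marginals $(\mu_1,\mu',\dots,\mu')$ is where \cref{lem:ky1} comes in: that lemma characterizes $\widetilde{K}^Y_1(\mon_k)$ as exactly the pairs $(\mu_1,\mu')\in[-1,1]^2$ with $(k-2)\mu_1+(k-1)\mu'\ge 1$, and a short computation shows $(k-2)\mu_1+(k-1)\mu' \ge 1$ is equivalent to $p' \ge 1 - \frac{k-2}{k-1}p_1$ after substituting $\mu_1 = 2p_1-1$ and $\mu' = \frac{2(k-1)p'}{k-1} - 1 = 2p'-1$ (here I'm using that the marginal of each citizen equals $p'$, and $\mu' = 2p'-1$). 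I should double-check that algebra, but it is routine; I'd also need $\mu_1,\mu'\in[-1,1]$, which follows automatically since $p_1,p'$ are probabilities.

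The main obstacle is finding the right sparse ansatz: the polynomial identity $h_{\cD}(p)\equiv 1/2$ is a genuine constraint (it kills $k-1$ degrees of freedom), so a support of size roughly $k$ is needed, and with a generic size-$k$ support one does not get closed-form nonnegative weights that visibly work for all $k\ge 5$ simultaneously. I expect the resolution is that the support is the full family $\{v_0, u_1, u_2, \dots, u_{k-1}\}$ or $\{u_1,\dots,u_{k-1},v_{k-1}\}$ — something with one ``anchor'' on the $v$ side and a structured spread on the $u$ side — and that the linear system \eqref{eqn:hd2}-type conditions telescope. Concretely I would set up the coefficient-extraction from \eqref{eqn:hClosedForm} in general $k$ (each term $u_i[p(1-(1-p)^ip^{k-1-i}) + (1-p)^{k-i}p^i]$ and $v_i[(1-p)(1-(1-p)^ip^{k-1-i}) + (1-p)^{k-1-i}p^{i+1}]$ contributes known binomial coefficients), impose the vanishing of coefficients $p^1,\dots,p^{k-1}$, and hope the resulting system has a nonnegative solution with a clean form; verifying that is the crux, and once it is in hand, (i) and (iii) are quick. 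If no clean closed form emerges, the fallback is to exhibit the solution as a convex combination of a base solution valid for, say, $k=5$ with an explicitly parametrized family that handles the increment $k\to k+1$, reducing the whole thing to a finite check plus an inductive step.
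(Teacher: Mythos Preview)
Your framework is correct: condition (ii) is exactly the polynomial identity $h_{\cD}(p)\equiv 1/2$, and (iii) is the $\widetilde K^Y_1$ membership from \cref{lem:ky1} rewritten via $\mu_1=2p_1-1$, $\mu'=2p'-1$. But the proposal stops short of the actual content of the lemma, which is the explicit construction, and your guesses about its shape are off. The paper does \emph{not} use extreme coordinates or a spread on the $\vecu$ side with a single $v$-anchor. It does the reverse: a single (or, for odd $k$, two adjacent) $u$-anchor at the \emph{middle} index, together with a structured spread on $\vecv$ over the upper half $i\ge \lceil k/2\rceil$, namely
\[
v_i \;\propto\; \binom{k-1}{i}-\binom{k-1}{i+1}\qquad (i\ge \lceil (k-1)/2\rceil),
\]
with the proportionality constant chosen so that $\sum_i v_i=(T+2)/(2T)$ for $T=\binom{k}{k/2}-2$ (even $k$) or $T=2\binom{k-1}{(k-1)/2}-2$ (odd $k$), and then $u_{\text{middle}}=(T-2)/(2T)$. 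The case $k=5$ is handled separately with an ad~hoc five-entry distribution; there is no induction $k\to k+1$.

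The reason this particular $\vecv$ works is a combinatorial identity that your coefficient-extraction plan would not by itself reveal: for every $m$ and every $\delta\in[-1/2,1/2]$,
\[
\sum_{i=\lceil m/2\rceil}^{m}\Bigl(\tbinom{m}{i}-\tbinom{m}{i+1}\Bigr)\Bigl[(1/2+\delta)^{i+1}(1/2-\delta)^{m-i}-(1/2-\delta)^{i+1}(1/2+\delta)^{m-i}\Bigr]=2\delta,
\]
which collapses the entire $\vecv$-sum in (ii) to a single linear term $-2\delta/T$; the $u$-anchor at the middle contributes only $1/2+2\delta/T$ (its odd-in-$\delta$ high-order terms cancel pairwise), and the two pieces add to $1/2$. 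Without this identity --- or an equivalent closed form --- the linear system you describe has no visibly nonnegative solution uniformly in $k$, so the ``hope the system telescopes'' step is precisely the missing idea. Once the construction is in hand, (i) is a one-line telescoping sum and (iii) reduces to an elementary inequality of the form $2^{k-2}\ge k+T/2$ (even $k$) or $\binom{k-1}{\le (k-1)/2}\ge \tfrac32\binom{k-1}{(k-1)/2}+k-2$ (odd $k$), both routine for $k\ge 6,7$ respectively.
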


Now, we are ready to prove~\cref{thm:main} using~\cref{lem:main sufficient} and~\cref{thm:cgsv}.

\thmone*
\begin{proof}
For every $k\geq5$, let $\vecu,\vecv\in\R^k_{\geq0}$, and $\mu_1,\mu'\in[-1,1]$ be the vectors given by~\cref{lem:main sufficient}. Note that condition (i) guarantees that $\vecu,\vecv$ define a distribution $\cD$ with marginal $(\mu_1,\mu',\dots,\mu')$.

First, we show that condition (ii) is a sufficient condition for $(\mu_1,\mu')\in \widetilde{K}^N_{1/2}$. Recall that $\cD_N\in S_{1/2}^N(\mon_k)$ if for every $\delta\in[-1/2,1/2]$, $\E_{\vecb\in\cD_N}\E_{\veca\sim\bern(1/2+\delta)}[\mon_k(\vecb\odot \veca)]=1/2$. Since $\Pr_\vecx[\mon_k(\vecx)=1]=\Pr_\vecx[x_1=1]-\Pr_\vecx[\vecx=10^{k-1}]+\Pr_\vecx[\vecx=01^{k-1}]$, we have that
\[
    \E_{\vecb\in\cD_N}\E_{\veca\sim\bern(1/2+\delta)}[\mon_k(\vecb\odot \veca)]=\Pr_{\vecb,\veca}[\vecb_1\odot\veca_1=1]-\Pr_{\vecb,\veca}[\vecb\odot\veca=1(-1)^{k-1}]+\Pr_{\vecb,\veca}[\vecb\odot\veca=(-1)1^{k-1}] \,.
\]
We compute these three probabilities in terms of $\vecu,\vecv,\delta$. 
\begin{align*}
   \Pr_{\vecb,\veca}[\vecb_1\odot\veca_1=1]&=(1/2-\delta)\sum_{i=0}^{k-1}u_i+(1/2+\delta)\sum_{i=0}^{k-1}v_i\;,\\
    \Pr_{\vecb,\veca}[\vecb\odot\veca=1(-1)^{k-1}]
    &=\sum_{i=0}^{k-1}u_i(1/2+\delta)^i(1/2-\delta)^{k-i}+\sum_{i=0}^{k-1}v_i(1/2+\delta)^{i+1}(1/2-\delta)^{k-1-i} \;,\\
    \Pr_{\vecb,\veca}[\vecb\odot\veca=(-1)1^{k-1}]
    &=\sum_{i=0}^{k-1}u_i(1/2-\delta)^i(1/2+\delta)^{k-i}+\sum_{i=0}^{k-1}v_i(1/2-\delta)^{i+1}(1/2+\delta)^{k-1-i}  \;.\\
\end{align*}
Note that condition (ii) implies that 
\[
\Pr_{\vecb,\veca}[\vecb_1\odot\veca_1=1] + \Pr_{\vecb,\veca}[\vecb\odot\veca=1(-1)^{k-1}] + \Pr_{\vecb,\veca}[\vecb\odot\veca=(-1)1^{k-1}] = \frac{1}{2}
\]
for every $\delta\in[-1/2,1/2]$ as desired. This implies that $\cD\in S_{1/2}^N(\mon_k)$. As condition (i) gives $\vecmu(\cD_N)=(\mu_1,\mu',\ldots,\mu')$, we have $(\mu_1,\mu')\in \widetilde{K}^N_{1/2}$ as desired.

Next, as $p'=\frac{\mu'+1}{2}$ and $1-\frac{k-2}{k-1}p_1=1-\frac{(k-2)(\mu_1+1)}{2(k-1)}$, condition (iii) implies $\mu_1(k-2)+\mu'(k-1)\geq1$. By~\cref{lem:ky1}, this implies that $(\mu_1,\mu')\in \widetilde{K}_1^Y(\mon_k)$ as desired.

To sum up,~\cref{lem:main sufficient} gives us
$(\mu_1,\mu')\in \widetilde{K}^Y_1\cap \widetilde{K}^N_{1/2}$ for every $k\geq5$
and~\cref{lem:intersection} implies $(\mu_1,\mu',\ldots,\mu')\in K^Y_1\cap K^N_{1/2}$. By~\cref{thm:cgsv}, we conclude that $\mon_k$ is sketching approximation resistant to space $o(\sqrt{n})$ and, hence, complete the proof of~\cref{thm:main}.
\end{proof}

\subsection{Proof of Lemma~\ref{lem:main sufficient}}
In the proof of \cref{lem:main sufficient} we will use the following combinatorial identity.
\begin{lemma}\label{lem:comb}
For every $\delta\in[-1/2,1/2]$ and $m\in\N$,
\begin{align*}
&\;\;\;\sum_{i=\ceil{m/2}}^{m}(1/2+\delta)^{i+1}(1/2-\delta)^{m-i}\left(\binom{m}{i}-\binom{m}{i+1}\right)\\
&-\sum_{i=\ceil{m/2}}^{m}(1/2-\delta)^{i+1}(1/2+\delta)^{m-i}\left(\binom{m}{i}-\binom{m}{i+1}\right)\\
&=2\delta \,.
\end{align*}
\end{lemma}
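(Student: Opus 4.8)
\textbf{Proof proposal for Lemma~\ref{lem:comb}.}

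The plan is to recognize the two sums as closely related to the expectation of a telescoping quantity under a binomial distribution, and then exploit the telescoping to collapse the sums. Fix $m\in\N$ and write $a = 1/2+\delta$, $b = 1/2-\delta$, so $a+b=1$ and $a-b = 2\delta$; the target identity becomes
\[
\sum_{i=\ceil{m/2}}^{m}\left(\binom{m}{i}-\binom{m}{i+1}\right)\left(a^{i+1}b^{m-i} - b^{i+1}a^{m-i}\right) = a-b \,.
\]
First I would observe that the inner difference of binomials $\binom{m}{i}-\binom{m}{i+1}$ is exactly the type of coefficient that makes Abel summation (summation by parts) telescope: writing $T_i = a^{i+1}b^{m-i} - b^{i+1}a^{m-i}$, the sum $\sum_i \left(\binom{m}{i}-\binom{m}{i+1}\right) T_i$ can be reorganized as $\sum_i \binom{m}{i}(T_i - T_{i-1})$ plus boundary terms, using that $\binom{m}{\ceil{m/2}} - \binom{m}{\ceil{m/2}+1} \ge 0$ and that $\binom{m}{m+1}=0$ by the paper's convention.

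The key step is then to compute $T_i - T_{i-1}$ and check it has a clean form. We have $a^{i+1}b^{m-i} - a^{i}b^{m-i+1} = a^{i}b^{m-i}(a-b)$, and symmetrically $b^{i+1}a^{m-i} - b^{i}a^{m-i+1} = b^{i}a^{m-i}(a - b)$ — wait, sign: $b^{i+1}a^{m-i}-b^i a^{m-i+1} = b^i a^{m-i}(b-a) = -b^i a^{m-i}(a-b)$. So $T_i - T_{i-1} = (a-b)\bigl(a^i b^{m-i} + b^i a^{m-i}\bigr)$, and therefore $\sum_{i} \binom{m}{i}(T_i-T_{i-1})$ involves $(a-b)\sum_i \binom{m}{i}\bigl(a^i b^{m-i} + b^i a^{m-i}\bigr)$, which is $(a-b)$ times a tail of the binomial expansion of $(a+b)^m = 1$. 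The two halves (the $a^i b^{m-i}$ part summed over large $i$ and the $b^i a^{m-i}$ part) should combine, after reindexing $i \mapsto m-i$ in one of them, to cover all of $\{0,\dots,m\}$ exactly once, giving $(a-b)\cdot 1 = a-b = 2\delta$. The boundary term at $i=\ceil{m/2}$ must be handled: I expect the middle-index term to either vanish (when $m$ is odd, $T_{\ceil{m/2}-1}$ paired against the coefficient) or contribute in a way that exactly accounts for the double-counting of the central term $i = m/2$ when $m$ is even. I would treat the parity of $m$ in two cases if needed, though ideally the $\binom{m}{i}-\binom{m}{i+1}$ weighting is precisely what makes both cases uniform.

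The main obstacle I anticipate is bookkeeping at the summation boundary $i=\ceil{m/2}$: getting the Abel-summation boundary term right and verifying that reindexing $i\mapsto m-i$ in the $b^i a^{m-i}$ sum glues the two tails into a complete sum over $0\le i\le m$ without overlap or omission, especially distinguishing $m$ even from $m$ odd (where $\ceil{m/2}$ behaves differently relative to $m/2$). An alternative, perhaps cleaner, route that avoids summation by parts altogether: split each of the two original sums at the index shift directly — in $\sum_{i\ge\ceil{m/2}}\binom{m}{i} a^{i+1}b^{m-i}$ versus $\sum_{i\ge\ceil{m/2}}\binom{m}{i+1}a^{i+1}b^{m-i}$, reindex the second by $i' = i+1$ so it becomes $\sum_{i'\ge\ceil{m/2}+1}\binom{m}{i'}a^{i'}b^{m-i'+1}$, and combine with the first sum's generic term $\binom{m}{i}a^{i+1}b^{m-i} = \binom{m}{i}a^i b^{m-i}\cdot a$ against $\binom{m}{i}a^i b^{m-i}\cdot b$ — this again produces $a - b$ factors on matched terms and a leftover boundary term at $i=\ceil{m/2}$. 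Doing this for both the $a$-heavy and $b$-heavy sums and subtracting, the boundary terms should cancel or sum to the central binomial contribution, leaving $(a-b)\sum_{i=0}^m \binom{m}{i}a^i b^{m-i}\cdot(\text{indicator bookkeeping}) = a-b$. Either way the computation is short once the reindexing is set up; I would present whichever of the two is cleaner after checking the $m$ even/odd boundary explicitly on small cases ($m=1,2,3$) to pin down the constant.
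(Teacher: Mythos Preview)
Your approach is correct and genuinely different from the paper's. Writing $a=1/2+\delta$, $b=1/2-\delta$, your Abel-summation plan works exactly as you outline: with $T_i=a^{i+1}b^{m-i}-b^{i+1}a^{m-i}$, one gets $T_i-T_{i-1}=(a-b)(a^ib^{m-i}+b^ia^{m-i})$, and after reindexing $i\mapsto m-i$ in the second piece the two tails glue to $(a+b)^m=1$ up to the central term(s), which are exactly cancelled by the boundary term $\binom{m}{\ceil{m/2}}T_{\ceil{m/2}}$. The parity split you anticipate is needed but routine (in the even case the single term $i=m/2$ is missing; in the odd case the two central terms are, and $T_{(m+1)/2}$ contributes $(a-b)(a+b)$ times the right factor).

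The paper instead gives a probabilistic one-liner: introduce $X_1,\dots,X_{m+1}$ i.i.d.\ $\bern(1/2+\delta)$ and observe that $\E[X_1\cdot\one_i]=a^ib^{m+1-i}\bigl(\binom{m}{i-1}-\binom{m}{i}\bigr)$, where $\one_i$ indicates exactly $i$ ones among $X_1,\dots,X_{m+1}$. A short reindexing shows the LHS equals $\sum_{i}\E[X_1\cdot\one_i]=\E[X_1]=2\delta$. Your argument is more elementary (no auxiliary random variables) and makes the telescoping transparent, at the cost of an explicit parity case split; the paper's argument explains \emph{why} the answer is $2\delta$ and handles the boundary uniformly via the symmetry $\binom{m}{\floor{m/2}}=\binom{m}{\ceil{m/2}}$, at the cost of spotting the probabilistic interpretation.
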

\begin{proof}
Let $X_1,\ldots,X_{m+1}$ be independent identically distributed random variables, each having the distribution $\bern(1/2+\delta)$. For $j\in\{0,\ldots,m+1\}$, let $\one_j$ be the indicator of the event that exactly $j$ variables from $X_1,\ldots,X_{m+1}$ are ones.
First observe that for $i\in\{0,\ldots,m+1\}$,
\[
\E[x_1\cdot \one_i] = (1/2+\delta)^{i}(1/2-\delta)^{m+1-i}\left(\binom{m}{i-1}-\binom{m}{i}\right) \,.
\]
Using the above, we are going to show that the left hand side of the equation in~\cref{lem:main sufficient} equals to $\sum_{i=0}^{m+1}\E[x_1\cdot\one_i]=\E[x_1]=2\delta$. By changing summations' limits and updating the binomial coefficients accordingly, we have
\begin{align*}
    &\;\;\;\sum_{i=\ceil{m/2}}^{m}(1/2+\delta)^{i+1}(1/2-\delta)^{m-i}\left(\binom{m}{i}-\binom{m}{i+1}\right)\\
&-\sum_{i=\ceil{m/2}}^{m}(1/2-\delta)^{i+1}(1/2+\delta)^{m-i}\left(\binom{m}{i}-\binom{m}{i+1}\right) \, .\\
&=\sum_{i=0}^{\floor{m/2}}(1/2+\delta)^{m-i+1}(1/2-\delta)^{i}\left(\binom{m}{i}-\binom{m}{i-1}\right)\\
&-\sum_{i=\ceil{m/2}+1}^{m+1}(1/2-\delta)^{i}(1/2+\delta)^{m-i+1}\left(\binom{m}{i-1}-\binom{m}{i}\right) \, .\\
\intertext{Using $\binom{m}{\floor{m/2}}=\binom{m}{\ceil{m/2}}$, we update the first summation's limits:}
&=\sum_{i=0}^{\ceil{m/2}}(1/2+\delta)^{m-i+1}(1/2-\delta)^{i}\left(\binom{m}{i}-\binom{m}{i-1}\right)\\
&-\sum_{i=\ceil{m/2}+1}^{m+1}(1/2-\delta)^{i}(1/2+\delta)^{m-i+1}\left(\binom{m}{i-1}-\binom{m}{i}\right)\\
&=\sum_{i=0}^{m}(1/2-\delta)^{i}(1/2+\delta)^{m-i+1}\left(\binom{m}{i}-\binom{m}{i-1}\right)\\
 &=\sum_{i=0}^{m+1} \E[x_1\cdot \one_i] 
 = \E[x_1] 
 = 2\delta \,,
\end{align*}
which concludes the proof.
\end{proof}

We are now ready to prove~\cref{lem:main sufficient}.
\begin{proof}[Proof of~\cref{lem:main sufficient}]
We prove this lemma by considering three cases: $k=5$, $k>5$ is even, and $k>5$ is odd.
\paragraph{Case I: $k=5$.}
In this case, we consider the following pair of vectors
\begin{align*}
    \vecu = (u_0, u_1, u_2, u_3, u_4)=&\left(0,0, 0,  0, \frac{1}{3}\right) \;,\\
    \vecv = (v_0, v_1, v_2, v_3, v_4)=&\left(0, 0, \frac{1}{3}, \frac{1}{6}, \frac{1}{6}\right) \;.
\end{align*} 
\begin{enumerate}[(i)]
\item It's straightforward to verify that $\sum_i(u_i+v_i)=1$.
\item For $\delta\in[-1/2,1/2)$, using the substitution $y=(1/2+\delta)/(1/2-\delta)$, we have
    \begin{align*}
         &(1/2-\delta)\sum_{i=0}^{k-1}u_i+(1/2+\delta)\sum_{i=0}^{k-1}v_i\\
         +&\sum_{i=0}^{k-1}u_i\left(- (1/2+\delta)^i(1/2-\delta)^{k-i} + (1/2-\delta)^i(1/2+\delta)^{k-i}\right)\\
         +&\sum_{i=0}^{k-1}v_i\left(- (1/2+\delta)^{i+1}(1/2-\delta)^{k-1-i}+(1/2-\delta)^{i+1}(1/2+\delta)^{k-1-i}\right)\\
         =& 1/2+\delta/3\\
         +&1/3(-(1/2+\delta)^4(1/2-\delta)+(1/2-\delta)^4(1/2+\delta))\\
         +&1/3(-(1/2+\delta)^3(1/2-\delta)^2+(1/2-\delta)^3(1/2+\delta)^2)\\
         +&1/6(-(1/2+\delta)^4(1/2-\delta)+(1/2-\delta)^4(1/2+\delta))\\
         +&1/6(-(1/2+\delta)^5+(1/2-\delta)^5)\\
         =&1/2+\delta/3+(1/2-\delta)^5(-y^4/3+y/3-y^3/3+y^2/3-y^4/6+y/6-y^5/6+1/6)\\
         =&1/2+\delta/3-(1/2-\delta)^5(y-1)(y+1)^4/6\\
         =&1/2+\delta/3-(1/2-\delta)^5\left(\frac{2\delta}{1/2-\delta}\right)\left(\frac{1}{1/2-\delta}\right)^4/6\\
         =&1/2+\delta/3-2\delta/6=1/2\;.
    \end{align*}
    For $\delta=1/2$, it's easy to see that the sum above equals~$1/2$, too.
\item Since $p_1 = \sum_iu_i=1/3$ and $p'=\frac{1}{k-1}\left(\sum_iiu_i+\sum_iiv_i\right)=19/24$, the inequality $p'\geq 1-\frac{k-2}{k-1}p_1$ holds.

\end{enumerate}
\paragraph{Case II: $k>5$ is even.}
Let $T=\binom{k}{k/2}-2$. Consider the vectors $\vecu,\vecv\in\R_{\geq0}^k$ as follows.
\begin{align*}
    u_i=\left\{\begin{array}{ll}
\frac{T-2}{2T}    & ,\  \text{if }i=k/2\\
0    & ,\ \text{otherwise.}
\end{array}
\right.
~~~ \text{and} ~~~
 v_i=\left\{\begin{array}{ll}
\frac{\binom{k-1}{i}-\binom{k-1}{i+1}}{T}    & ,\  \text{if } i\geq k/2\\
0    & ,\ \text{otherwise.}
\end{array}
\right.
\end{align*}
\begin{enumerate}[(i)]
    \item Note that
    \[\sum_{i=0}^{k-1}v_i=\frac{1}{T}\sum_{i=k/2}^k\left(\binom{k-1}{i}-\binom{k-1}{i+1}\right)=\frac{1}{T}\binom{k-1}{k/2}=\frac{1}{2T}\binom{k}{k/2}=\frac{T+2}{2T} \,.
    \]
    Thus,
    \[
    \sum_{i=0}^{k-1}u_i+\sum_{i=0}^{k-1}v_i
    =\frac{T-2}{2T}+\frac{T+2}{2T}=1 \,.
    \]
    \item 
    From the definition of $\vecu$ and $\vecv$, using $\sum_{i=0}^{k-1}v_i=\frac{T+2}{2T}$ and applying \cref{lem:comb} with $m=k-1$, we have that for every $\delta\in[-1/2,1/2]$,
    \begin{align*}
    &(1/2-\delta)\sum_{i=0}^{k-1}u_i+(1/2+\delta)\sum_{i=0}^{k-1}v_i\\
    +&\sum_{i=0}^{k-1}u_i\left(- (1/2+\delta)^i(1/2-\delta)^{k-i} + (1/2-\delta)^i(1/2+\delta)^{k-i}\right)\\
    +&\sum_{i=0}^{k-1}v_i\left(- (1/2+\delta)^{i+1}(1/2-\delta)^{k-1-i}+(1/2-\delta)^{i+1}(1/2+\delta)^{k-1-i}\right)\\
    =&\left(\frac{T-2}{2T}\left(1/2-\delta
    \right)+\frac{T+2}{2T}(1/2+\delta)\right)+0\\
    +&\frac{1}{T}\sum_{i=k/2}^{k-1}\left(\binom{k-1}{i}-\binom{k-1}{i+1}\right)\left(-(1/2+\delta)^{i+1}(1/2-\delta)^{k-1-i}+(1/2-\delta)^{i+1}(1/2+\delta)^{k-1-i}\right)\\
    =&\left(1/2+2\delta/T\right)-2\delta/T=1/2\;.
    \end{align*}
    \item From the definition of $\vecu$ and $\vecv$, we have that $p_1 = \sum_{0}^{k-1}u_i=\frac{T-2}{2T}$.
    \begin{align*}
        p' &= \frac{1}{k-1}\left(\sum_{0}^{k-1}iu_i+\sum_{0}^{k-1}iv_i\right)\\
        &=\frac{1}{T(k-1)}\left((T-2)k/4+\sum_{k/2}^{k-1}i\left(\binom{k-1}{i}-\binom{k-1}{i+1}\right)\right)\\
        &=\frac{1}{T(k-1)}\left((T-2)k/4+(k/2-1)\binom{k-1}{k/2}+\sum_{i=k/2}^{k-1}i\binom{k-1}{i} -\sum_{i=k/2-1}^{k-1}i\binom{k-1}{i+1}\right)\\
        &=\frac{1}{T(k-1)}\left((T-2)k/4+(k/2-1)\binom{k-1}{k/2}+\sum_{i=k/2}^{k-1}i\binom{k-1}{i} -\sum_{i=k/2}^{k}(i-1)\binom{k-1}{i}\right)\\
        &=\frac{1}{T(k-1)}\left((T-2)k/4+(k/2-1)\binom{k-1}{k/2}+\sum_{i=k/2}^{k-1}\binom{k-1}{i}\right)\\
        &=\frac{1}{T(k-1)}\left((T-2)k/4+(k/2-1)\binom{k-1}{k/2}+2^{k-2}\right) \, .\\
        \intertext{Using $\binom{k-1}{k/2}=\frac{1}{2}\binom{k}{k/2}=(T+2)/2$}
        &=\frac{1}{T(k-1)}\left((T-2)k/4+(k/2-1)(T+2)/2+2^{k-2}\right) \, .\\
        \intertext{Using $2^{k-2}\geq k+(\binom{k}{k/2}-2)/2=k+T/2$ for $k\geq6$}
        &\geq\frac{Tk/2+k-1}{T(k-1)}\\
        &=1-\frac{Tk-2T-2k+2}{2T(k-1)}\\
        &>1-\frac{k-2}{k-1}\cdot\frac{T-2}{2T}\\
        &=1-\frac{k-2}{k-1}p_1\,.
    \end{align*}
\end{enumerate}
\paragraph{Case III: $k>5$ is odd.}
Let $T=2\binom{k-1}{\frac{k-1}{2}}-2$. Consider the vectors $\vecu,\vecv\in\R_{\geq0}^k$ as follows.
\begin{align*}
    u_i=\left\{\begin{array}{ll}
\frac{T-2}{4T}    & ,\  \text{if } i=\frac{k-1}{2}\text{ or }i=\frac{k+1}{2}\\
0    & ,\ \text{otherwise.}
\end{array}
\right.
~~~ \text{and} ~~~
 v_i=\left\{\begin{array}{ll}
\frac{\binom{k-1}{i}-\binom{k-1}{i+1}}{T}    & ,\  \text{if } i\geq \frac{k-1}{2}\\
0    & ,\ \text{otherwise.}
\end{array}
\right.
\end{align*}
\begin{enumerate}[(i)]
    \item Similarly to Case~II, $\sum_{i=0}^{k-1}v_i=\frac{T+2}{2T}$ and $
    \sum_{i=0}^{k-1}u_i+\sum_{i=0}^{k-1}v_i=1$.
    \item Using $\sum_{i=0}^{k-1}v_i=\frac{T+2}{2T}$ and \cref{lem:comb} with $m=k-1$, we conclude that for every $\delta\in[-1/2,1/2]$,
    \begin{align*}
    &(1/2-\delta)\sum_{i=0}^{k-1}u_i+(1/2+\delta)\sum_{i=0}^{k-1}v_i\\
    +&\sum_{i=0}^{k-1}u_i\left(-(1/2+\delta)^i(1/2-\delta)^{k-i} + (1/2-\delta)^i(1/2+\delta)^{k-i}\right)\\
    +&\sum_{i=0}^{k-1}v_i\left(- (1/2+\delta)^{i+1}(1/2-\delta)^{k-1-i}+(1/2-\delta)^{i+1}(1/2+\delta)^{k-1-i}\right)\\
    =&\left(\frac{2(T-2)}{4T}(1/2-\delta)
    +\frac{T+2}{2T}(1/2+\delta)\right)\\
    +&\frac{T-2}{4T}\left(-(1/2+\delta)^{\frac{k-1}{2}}(1/2-\delta)^{\frac{k+1}{2}} + (1/2-\delta)^{\frac{k-1}{2}}(1/2+\delta)^{\frac{k+1}{2}}\right)\\
    +&\frac{T-2}{4T}\left(-(1/2+\delta)^{\frac{k+1}{2}}(1/2-\delta)^{\frac{k-1}{2}} + (1/2-\delta)^{\frac{k+1}{2}}(1/2+\delta)^{\frac{k-1}{2}}\right)\\
    +&\frac{1}{T}\sum_{i=(k-1)/2}^{k-1}\left(\binom{k-1}{i}-\binom{k-1}{i+1}\right)\left(-(1/2+\delta)^{i+1}(1/2-\delta)^{k-1-i}+(1/2-\delta)^{i+1}(1/2+\delta)^{k-1-i}\right)\\
    =&\left(1/2+2\delta/T\right)
    -2\delta/T=1/2\;.
    \end{align*}
    \item Similarly to the previous case, $p_1 = \sum_{0}^{k-1}u_i=2\frac{T-2}{4T}=\frac{T-2}{2T}$, and
    \begin{align*}
        p' &= \frac{1}{k-1}\left(\sum_{i=0}^{k-1}iu_i+\sum_{i=0}^{k-1}iv_i\right)\\
        &=\frac{1}{T(k-1)}\left(\frac{(T-2)}{4}\times\left(\frac{k - 1}{2} + \frac{k + 1}{2}\right) +\sum_{i=(k-1)/2}^{k-1}i\left(\binom{k-1}{i}-\binom{k-1}{i+1}\right)\right)\\
        &=\frac{1}{T(k-1)}\left((T-2)k/4+\left(\frac{k-1}{2}-1\right)\binom{k-1}{(k-1)/2}\right.\\
        &\left.\;\;\;\;\;\;\;\;\;\;\;\;\;\;\;\;\;\;\;\;\;\;+\sum_{i=(k-1)/2}^{k-1}i\binom{k-1}{i} -\sum_{i=(k-1)/2-1}^{k-1}i\binom{k-1}{i+1}\right)\\
        &=\frac{1}{T(k-1)}\left((T-2)k/4+\left(\frac{k-1}{2}-1\right)\binom{k-1}{(k-1)/2}\right.\\
        &\left.\;\;\;\;\;\;\;\;\;\;\;\;\;\;\;\;\;\;\;\;\;\;+\sum_{i=(k-1)/2}^{k-1}i\binom{k-1}{i} -\sum_{i=(k-1)/2}^{k}(i-1)\binom{k-1}{i}\right)\\
        &=\frac{1}{T(k-1)}\left((T-2)k/4+\left(\frac{k-1}{2}-1\right)\binom{k-1}{(k-1)/2}+\sum_{i=(k-1)/2}^{k-1}\binom{k-1}{i}\right)\\
        &=\frac{1}{T(k-1)}\left((T-2)k/4+\left(\frac{k-1}{2}-1\right)\binom{k-1}{(k-1)/2}+\binom{k-1}{\leq(k-1)/2}\right) \, .\\
        \intertext{Using $\binom{k-1}{(k-1)/2}=(T+2)/2$}
        &=\frac{1}{T(k-1)}\left((T-2)k/4+\left(\frac{k-1}{2}-1\right)(T+2)/2+\binom{k-1}{\leq(k-1)/2}\right)\\
        &=\frac{1}{T(k-1)}\left(Tk/2-3T/4-3/2
        +\binom{k-1}{\leq(k-1)/2}\right)\\
        \intertext{Using $\binom{k-1}{\leq(k-1)/2}\geq \frac{3}{2}\binom{k-1}{\frac{k-1}{2}}+k-2=3T/4+k-1/2$ which holds for every $k\geq 7$}
        &\geq\frac{Tk/2+k-2}{T(k-1)}\\
        &=1-\frac{Tk-2T-2k+4}{2T(k-1)}\\
        &=1-\frac{k-2}{k-1}\cdot\frac{T-2}{2T}\\
        &=1-\frac{k-2}{k-1}p_1\,.
    \end{align*}
\end{enumerate}
This concludes the proof of~\cref{lem:main sufficient}.
\end{proof}
    
\section{Chow parameters and the approximability of weak monarchies}\label{sec:Theorem2_proof}

In this section, we prove that infinitely many weak monarchy functions are sketching approximable within $O(\log(n))$ space. We first prove in \cref{sec:chowparam,sec:lemma_proofs} that every LTF defined by its Chow parameters (i.e., degree-$1$ Fourier coefficients as weights and threshold $0$) is sketching approximable within $O(\log(n))$ space. And later in \cref{sec:weakmonarchy}, we prove that infinitely many weak monarchy functions are balanced LTFs defined by their Chow parameters. 

\subsection{Approximability of LTFs defined by their Chow parameters}\label{sec:chowparam}

\begin{theorem}\label{lem:approximablefn}
For every Boolean function $f:\{-1,1\}^k\rightarrow \{0,1\}$ of the form \[f(x) = \sign \left(\sum_{i=1}^k \widehat{f}(\{i\})x_i \right)\;,\] $\maxf$ is sketching approximable in $O(\log(n))$ space.
\end{theorem}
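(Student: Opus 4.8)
The plan is to apply the criterion of \cref{thm:cgsv}: it suffices to show that $K_1^Y(f) \cap K_{\rho(f)}^N(f) = \emptyset$. Write $c_i = \widehat{f}(\{i\})$ for $i \in [k]$ and $c_0 = \widehat{f}(\emptyset) = \rho(f)$, so that $f(x) = \sign\left(\sum_{i=1}^k c_i x_i\right)$. The intuition is that for such an $f$ the degree-$1$ Fourier coefficients ``point in the direction of'' the satisfying assignments, so a \yes-distribution must have marginals correlated with the vector $(c_1,\dots,c_k)$, while an \no-distribution cannot. Concretely, I would establish two one-sided bounds on the linear functional $\vecmu \mapsto \sum_{i=1}^k c_i \mu_i$: a lower bound of roughly $\sum_i |c_i|$ (or at least something strictly positive and bounded below) valid for every $\vecmu \in K_1^Y(f)$, and an upper bound strictly smaller than that, valid for every $\vecmu \in K_{\rho(f)}^N(f)$. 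Since both sets are convex and closed (\cref{lem:convex}), separating them by this single linear functional proves disjointness.

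For the \yes side: if $\cD_Y$ is supported on $f^{-1}(1)$, then for every $\vecb \in \supp(\cD_Y)$ we have $\sum_i c_i b_i > 0$, and in fact $\sum_i c_i b_i \ge \sum_i |c_i| - 2\max_j |c_j| \cdot (\text{something})$ — more carefully, I expect the right statement is that $\sum_i c_i b_i$ equals $\sum_i |c_i|$ only at the ``aligned'' point and that one can lower-bound $\Exp_{\vecb \sim \cD_Y}\left[\sum_i c_i b_i\right] = \sum_i c_i \mu_i$ by comparing with the value of this sum under the uniform distribution restricted to $f^{-1}(1)$, which by definition of the Chow parameters is $\frac{1}{\rho(f)} \sum_i \widehat{f}(\{i\})^2 = \frac{1}{\rho(f)}\sum_i c_i^2$. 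The key point: $\sum_i c_i b_i \ge$ this quantity is \emph{not} automatic, so instead I would argue directly that $\sum_i c_i \mu_i^Y \ge \sum_i c_i^2 / \rho(f)$ is forced, or find a cleaner threshold. For the \no side: for $\cD_N \in S_{\rho(f)}^N$, using $p = 1/2$ in the definition gives $\Exp_{\vecb \sim \cD_N}\Exp_{\veca \sim \bern(1/2)^k}[f(\vecb \odot \veca)] \le \rho(f)$; but $\Exp_{\veca \sim \bern(1/2)^k}[f(\vecb \odot \veca)] = \sum_S \widehat{f}(S) \Exp_\veca[\chi_S(\vecb \odot \veca)]$, and $\Exp_{\veca \sim \bern(1/2)^k}[\chi_S(\vecb\odot\veca)] = 0$ for $S \ne \emptyset$, so this just says $\rho(f) \le \rho(f)$, giving nothing. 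Instead I should differentiate at $p = 1/2$: the condition $h_{\cD_N}(p) \le \rho(f)$ for all $p$ with equality at $p = 1/2$ (by the balanced-type symmetry $h(p) + h(1-p) = 2\rho(f)$ when $\rho(f)=1/2$, or more generally since $p=1/2$ is interior) forces $h'_{\cD_N}(1/2) = 0$, and computing $h'_{\cD_N}(1/2)$ in terms of the Fourier expansion should yield $\sum_i c_i \mu_i^N = 0$ (the derivative picks out exactly the degree-$1$ part). That would give $\sum_i c_i \mu_i = 0$ on the \no side.

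So the clean version of the argument is: (1) for $\vecmu \in K_{\rho(f)}^N(f)$, show $\sum_{i=1}^k c_i \mu_i = 0$ by differentiating the \no-condition at $p=1/2$ and using that only the degree-$1$ Fourier coefficients survive in the derivative; (2) for $\vecmu \in K_1^Y(f)$, show $\sum_{i=1}^k c_i \mu_i > 0$ — indeed if $\cD_Y$ is supported on $f^{-1}(1)$ then each term $\sum_i c_i b_i > 0$ with $\vecb \in \supp(\cD_Y)$, hence the expectation is strictly positive, and since $K_1^Y$ is compact the minimum of this continuous linear functional over it is attained and strictly positive; (3) conclude $K_1^Y(f) \cap K_{\rho(f)}^N(f) = \emptyset$ and invoke \cref{thm:cgsv}. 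The main obstacle I anticipate is step~(1): making precise that the derivative of $h_{\cD_N}$ at $p = 1/2$ equals (a positive multiple of) $\sum_i c_i \mu_i^N$ requires carefully expanding $\Exp_{\veca \sim \bern(p)^k}[\chi_S(\vecb \odot \veca)] = \prod_{i \in S}(2p-1) b_i = (2p-1)^{|S|}\chi_S(\vecb)$, whose $p$-derivative at $1/2$ is $|S|\cdot 2 \cdot \mathds{1}[|S|=1]\chi_S(\vecb)$ plus higher-order-vanishing terms, so that $h'_{\cD_N}(1/2) = 2\sum_{i=1}^k \widehat f(\{i\}) \Exp_{\vecb \sim \cD_N}[b_i] = 2\sum_i c_i \mu_i^N$; then I need that $h_{\cD_N}$ attains an interior maximum at $1/2$, which follows because $h_{\cD_N}(1/2) = \rho(f)$ (degree-$0$ term only) and $h_{\cD_N}(p) \le \rho(f)$ everywhere, forcing the derivative to vanish. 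A subtlety is whether $h_{\cD_N}(1/2) = \rho(f)$ always holds; it does, since at $p=1/2$ the flipped assignment $\vecb \odot \veca$ is uniform regardless of $\vecb$. I am fairly confident this all goes through; the remaining work is bookkeeping with the Fourier expansion and is routine given \cref{prop:Boolean_fn_prop}.
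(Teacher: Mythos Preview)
Your proposal is correct and takes a genuinely different route from the paper. The paper does \emph{not} invoke \cref{thm:cgsv}; instead it builds an explicit sketching algorithm (Algorithm~5.1) that estimates the $\veclambda$-bias $B_{\veclambda}(\Psi)$ with $\veclambda=(\widehat f(\{1\}),\dots,\widehat f(\{k\}))$ via an $\ell_1$-sketch, and proves its output is a $(\rho(f)+\epsilon^*(f)-\epsilon)$-approximation by sandwiching $\val_\Psi$ between two Fourier-analytic bounds (\cref{lemma:lb,lemma:ub}). Your argument instead shows $K_1^Y(f)\cap K_{\rho(f)}^N(f)=\emptyset$ directly by separating the two convex sets with the linear functional $\vecmu\mapsto\sum_i \widehat f(\{i\})\mu_i$: on the \no\ side you observe $h_{\cD_N}(1/2)=\rho(f)$ always, so the constraint $h_{\cD_N}\le\rho(f)$ forces $h'_{\cD_N}(1/2)=0$, and the Fourier expansion $h_{\cD_N}(p)=\sum_S\widehat f(S)(2p-1)^{|S|}\Exp_{\cD_N}[\chi_S]$ gives $h'_{\cD_N}(1/2)=2\sum_i\widehat f(\{i\})\mu_i^N$; on the \yes\ side every $\vecb\in f^{-1}(1)$ has $\sum_i\widehat f(\{i\})b_i\ge\epsilon_0(f)>0$, so $\sum_i\widehat f(\{i\})\mu_i^Y\ge\epsilon_0(f)$. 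This is shorter and cleaner for the bare approximability claim. What the paper's approach buys is an explicit algorithm together with a quantitative approximation ratio $\rho(f)+\epsilon^*(f)$ (\cref{thm:approximablefn}), which your abstract separation argument does not directly yield.
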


\begin{definition}\label{def:epsilon}
Define $\epsilon_0(f) = \min\{\sum_{i=1}^k \widehat{f}(\{i\})\cdot x_i : f(x)=1\}$. Define $\epsilon^*(f) = \min\{\frac{\epsilon_0(f)}{3k},\frac{2\epsilon_0(f)^2}{9\rho(f)k^2}\}$.
\end{definition}

We will use the following theorem to prove \autoref{lem:approximablefn}.

\begin{theorem}\label{thm:approximablefn}
For every Boolean function $f:\{-1,1\}^k\rightarrow \{0,1\}$ and every $\epsilon>0$, there exists an $O(\log(n))$ space $(\rho(f)+\epsilon^*(f)-\epsilon)$-approximation algorithm for $\mcsp(f)$. \end{theorem}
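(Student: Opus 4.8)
\textbf{Proof plan for \autoref{thm:approximablefn}.}

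The plan is to apply the dichotomy criterion of \autoref{thm:cgsv}: I will show that for $\gamma = \rho(f) + \epsilon^*(f) - \epsilon$ and $\beta = \rho(f)$ (well, more precisely for the gapped problem with these parameters, so I want $K_\gamma^Y(f) \cap K_\beta^N(f)$ to be empty for a suitable $\gamma$), the relevant sets of marginals are disjoint, which yields an $O(\log n)$ space sketching algorithm solving the gapped problem. Equivalently, I want to exhibit a linear functional on $\R^k$ (namely the Chow-parameter vector $\widehat f(\{1\}),\dots,\widehat f(\{k\})$) that separates $K_\gamma^Y(f)$ from $K_\beta^N(f)$. The first step is to lower bound $\langle \widehat f_{\le 1}, \vecmu(\cD_Y)\rangle$ for any $\cD_Y \in S_\gamma^Y$: since $f(x) = \sign(\sum_i \widehat f(\{i\}) x_i)$, every satisfying $x$ has $\sum_i \widehat f(\{i\}) x_i \ge \epsilon_0(f)$, and every non-satisfying $x$ has $\sum_i \widehat f(\{i\}) x_i \le -\epsilon_0(f)$ (by the balanced/sign structure, or at least $\le 0$), while $|\sum_i \widehat f(\{i\}) x_i| \le \rho(f) k$ always by \autoref{prop:Boolean_fn_prop}(3). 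Hence if $\cD_Y$ puts mass at least $\gamma$ on $f^{-1}(1)$, then $\Exp_{\vecb\sim\cD_Y}[\sum_i \widehat f(\{i\}) b_i] \ge \gamma\,\epsilon_0(f) - (1-\gamma)\rho(f) k$, i.e. $\langle \widehat f_{\le 1}, \vecmu(\cD_Y)\rangle \ge \gamma \epsilon_0 - (1-\gamma)\rho k$.

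The second step is to upper bound $\langle \widehat f_{\le 1}, \vecmu(\cD_N)\rangle$ for $\cD_N \in S_\beta^N$ with $\beta = \rho(f)$. Here I use the defining property of $S_\beta^N$: for every $p \in [0,1]$, $\Exp_{\vecb\sim\cD_N}\Exp_{\veca\sim\bern(p)^k}[f(\vecb\odot\veca)] \le \rho(f)$. The trick is to differentiate (or take a finite difference of) this expectation in $p$ at $p = 1/2$. Writing $g(p) = \Exp_{\vecb}\Exp_{\veca\sim\bern(p)^k}[f(\vecb\odot\veca)]$, we have $g(1/2) = \rho(f)$ (since at $p=1/2$ the flipped vector is uniform regardless of $\vecb$), and $g(p) \le \rho(f)$ for all $p$, so $p = 1/2$ is an interior maximum and $g'(1/2) = 0$. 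Now $g'(1/2)$, computed via the Fourier expansion $f = \sum_S \widehat f(S)\chi_S$, is (up to a positive constant depending on $k$) exactly $\sum_i \widehat f(\{i\}) \Exp_{\vecb}[b_i] = \langle \widehat f_{\le 1}, \vecmu(\cD_N)\rangle$ — because under $\bern(p)^k$ flipping, $\Exp[\chi_S(\vecb\odot\veca)] = \chi_S(\vecb)(2p-1)^{|S|}$, so $g(p) = \sum_S \widehat f(S)(2p-1)^{|S|}\Exp_{\vecb}[\chi_S(\vecb)]$ and $g'(1/2) = 2\sum_i \widehat f(\{i\})\Exp_{\vecb}[b_i]$. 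Hence $\langle \widehat f_{\le 1}, \vecmu(\cD_N)\rangle = 0$ for every $\cD_N \in S_\rho^N$. (I should double check whether I only get $g'(1/2)\le 0$ from one side and $\ge 0$ from the other — since the maximum over $[0,1]$ at an interior point forces the derivative to vanish, this is fine, modulo differentiability of the polynomial $g$, which is clear.)

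Combining the two steps: any common marginal vector $\vecmu \in K_\gamma^Y(f) \cap K_\rho^N(f)$ would need $0 \ge \gamma\epsilon_0 - (1-\gamma)\rho k$, i.e. $\gamma \le \frac{\rho k}{\epsilon_0 + \rho k}$. A short calculation shows $\rho + \epsilon^*(f) \le \frac{\rho k}{\epsilon_0 + \rho k}$ fails — that is, for $\gamma = \rho + \epsilon^*(f) - \epsilon$ with $\epsilon^*(f) = \min\{\epsilon_0/(3k), 2\epsilon_0^2/(9\rho k^2)\}$ we get $\gamma\epsilon_0 - (1-\gamma)\rho k > 0$, so the intersection is empty; the choice of the two terms in the $\min$ is exactly what makes this inequality go through (one term handles the regime where $\epsilon_0$ is comparable to $\rho k$, the other the regime where $\epsilon_0$ is much smaller). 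Then \autoref{thm:cgsv} gives an $O(\log n)$-space sketching algorithm for $(1, \rho+\epsilon^*-\epsilon)$... more carefully, for the $(\gamma, \rho)$-gapped problem, and since this holds for the specific $\gamma = \rho + \epsilon^* - \epsilon$, unfolding the definition of approximation (an algorithm distinguishing $\val \ge \gamma$ from $\val < \rho$, combined with the trivial algorithm for larger $\val$) yields a $(\rho + \epsilon^* - \epsilon)$-approximation. The main obstacle I anticipate is making the derivative-at-$1/2$ argument fully rigorous — in particular confirming that the one-sided information "$g(p)\le\rho$ for all $p\in[0,1]$ and $g(1/2)=\rho$" genuinely forces $g'(1/2)=0$ (it does, as $1/2$ is interior) and that the Fourier computation of $g'(1/2)$ correctly isolates the Chow parameters with no contribution from higher-degree terms (it does, since $(2p-1)^{|S|}$ has vanishing derivative at $p=1/2$ for $|S|\ge 2$ and the constant term contributes nothing) — and then checking the purely arithmetic inequality relating $\epsilon^*(f)$ to $\epsilon_0(f), \rho(f), k$.
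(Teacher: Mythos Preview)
Your two ingredients are individually correct and pleasant --- in particular the observation that for any $\cD_N\in S_{\rho(f)}^N$ the polynomial $g(p)=\Exp_{\vecb\sim\cD_N}\Exp_{\veca\sim\bern(p)^k}[f(\vecb\odot\veca)]$ attains its maximum $\rho(f)$ at the interior point $p=1/2$, forcing $g'(1/2)=2\sum_i \widehat f(\{i\})\mu_i(\cD_N)=0$, is a clean way to see that the Chow vector is orthogonal to $K_{\rho}^N$. Combined with your Step~1 at $\gamma=1$ this already gives $K_1^Y\cap K_\rho^N=\emptyset$ and hence, via \autoref{thm:cgsv}, the \emph{qualitative} statement of \autoref{lem:approximablefn}. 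But it does not give the quantitative \autoref{thm:approximablefn}, for two independent reasons.

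First, the arithmetic step fails. Your hyperplane argument only shows $K_\gamma^Y\cap K_\rho^N=\emptyset$ when $\gamma>\gamma_0:=\frac{\rho k}{\epsilon_0+\rho k}$, and $\gamma_0$ is typically \emph{much larger} than $\rho+\epsilon^*$, not smaller. For instance with $\rho=1/2$, $k=4$, $\epsilon_0=1/4$ one gets $\gamma_0=8/9$ while $\rho+\epsilon^*\approx 0.502$; in general $\gamma_0=1-\frac{\epsilon_0}{\epsilon_0+\rho k}$ is close to $1$ whereas $\rho+\epsilon^*$ is close to $\rho$. So the claimed inequality ``$\gamma\epsilon_0-(1-\gamma)\rho k>0$ at $\gamma=\rho+\epsilon^*-\epsilon$'' is simply false.

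Second, and more structurally, separating $K_\gamma^Y$ from $K_\rho^N$ is not what an $\alpha$-approximation requires. An $\alpha$-approximation is equivalent to solving \emph{every} $(\gamma,\beta)$-gapped problem with $\beta<\alpha\gamma$; since $\val_\Psi\ge\rho$ always, the pair $(\gamma,\rho)$ is trivially solvable regardless, and the content lies in $\beta>\rho$. But your derivative argument uses precisely that $g(1/2)=\rho$ equals the upper bound $\beta$, so $p=1/2$ is a maximizer; once $\beta>\rho$ this fails and you get no control on $\langle\widehat f_{\le 1},\vecmu(\cD_N)\rangle$ for $\cD_N\in S_\beta^N$. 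The ``unfolding'' you sketch at the end --- turning a $(\gamma,\rho)$-distinguisher into a $\gamma$-approximation --- does not work: when $\rho\le\val_\Psi<\gamma$ the distinguisher is unconstrained, so no bound of the form $\alpha\,\val_\Psi\le v\le\val_\Psi$ with $\alpha\approx\gamma$ follows.

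The paper's proof takes a completely different, directly algorithmic route. It defines the bias $B_{\veclambda}(\Psi)$ (the $\ell_1$-norm of the signed, Chow-weighted variable incidence vector), estimates it to within $(1\pm\epsilon')$ by an $\ell_1$-sketch, and outputs $v=\rho+\tilde B\tilde\delta/(1+\epsilon')^2$ with $\tilde\delta=\min\{1/(3k),\,2\tilde B/(9\rho k^2)\}$. Correctness rests on two lemmas bounding $\val_\Psi$ by $B_{\veclambda}(\Psi)$: a lower bound $\val_\Psi\ge\rho+B_{\veclambda}\delta(\Psi)$ obtained by analyzing a $\bern((1+\gamma)/2)^n$-biased random assignment via the Fourier expansion of $f$ (this is the ingredient with no analogue in your plan), and an upper bound $\val_\Psi\le\frac{B_{\veclambda}+\rho k}{\epsilon_0+\rho k}$ which is exactly your Step~1 inequality rewritten instance-wise. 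The ratio $v/\val_\Psi\ge\rho+\epsilon^*-\epsilon$ then follows from a case analysis on whether $B_{\veclambda}\gtrless\epsilon_0$, and this is where the two branches of $\epsilon^*=\min\{\epsilon_0/(3k),\,2\epsilon_0^2/(9\rho k^2)\}$ arise. In short: your Step~1 reappears as the upper-bound lemma, but the crucial lower-bound lemma (biased rounding) and the direct per-instance analysis have no counterpart in the dichotomy-based plan.
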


First we show how to prove \cref{lem:approximablefn} using \cref{thm:approximablefn}.
\begin{proof}[Proof of~\cref{lem:approximablefn}]
If $f(x)$ is the constant zero function, then it's trivially approximable in $O(\log(n))$ space. Otherwise, when $f(x) = \sign \left(\sum_{i=1}^k \widehat{f}(\{i\})\cdot x_i \right)$, we have $\epsilon_0(f) = \min\{\sum_{i=1}^k \widehat{f}(\{i\})\cdot x_i : f(x)=1\}>0$ and hence $\epsilon^*(f)>0$ by their definitions. Now for $\eps=\epsilon^*(f)/2$, \cref{thm:approximablefn} implies that there is a $(\rho(f)+\epsilon^*(f)/2)$-approximation algorithm for $\mcsp(f)$, and finishes the proof.
\end{proof}

Before we prove \cref{thm:approximablefn}, we will describe some useful definitions and lemmas from \cite{CGSV21-boolean}. 

Let $f:\{-1,1\}^k\rightarrow\{0,1\}$ be a Boolean constraint function of arity $k$ and $X_1,\dots,X_n$ be variables. A constraint $C$ consists of $\vecj=(j_1,\dots,j_k)\in[n]^k$ and $\vecb=(b_1,\dots,b_k)\in\{-1,1\}^k$ where the $j_i$'s are distinct. The constraint $C$ reads as requiring $f(\vecb\odot\veccx|_\vecj)=f(b_1X_{j_1},\dots,b_kX_{j_k})=1$. A \textsf{Max-CSP}($f$) instance $\Psi$ contains $m$ constraints $C_1,\dots,C_m$ with non-negative weights $w_1,\ldots,w_m$ where $C_i=(\vecj(i),\vecb(i))$ and $w_i \in \R$ for each $i\in[m]$. For an assignment $\vecsigma\in\{-1,1\}^n$, the value $\val_\Psi(\vecsigma)$ of $\vecsigma$ on $\Psi$ is the fraction of weight of constraints satisfied by $\vecsigma$, i.e., $\val_\Psi(\vecsigma)=\tfrac{1}{W}\sum_{i\in[m]}w_i \cdot f(\vecb(i)\odot\vecsigma|_{\vecj(i)})$, where $W = \sum_{i=1}^m w_i$. The optimal value of $\Psi$ is defined as $\val_\Psi=\max_{\vecsigma\in\{-1,1\}^n}\val_\Psi(\vecsigma)$.

\begin{definition}[Bias (vector)]
For $\veclambda = (\lambda_1,\ldots,\lambda_k) \in \R^k$, and instance $\Psi = (C_1,\ldots,C_m; w_1,\ldots,w_m)$ of $\maxf$ where $C_i = (\vecj(i),\vecb(i))$ and $w_i \geq 0$, we let the {\em $\veclambda$-bias vector} of $\Psi$, denoted $\bias_{\veclambda}(\Psi)$, be the vector in $\R^n$ given by 
\[
\bias_{\veclambda}(\Psi)_\ell = \frac{1}W \cdot \sum_{i \in [m], t \in [k] : j(i)_t = \ell} \lambda_t w_i \cdot b(i)_t \, ,
\]
for $\ell \in [n]$, where $W = \sum_{i \in [m]} w_i$. 
The $\veclambda$-bias of $\Psi$, denoted $B_{\veclambda}(\Psi)$, is the $\ell_1$ norm of $\bias_{\veclambda}(\Psi)$, i.e., $B_{\veclambda}(\Psi) = \sum_{\ell=1}^n |\bias_{\veclambda}(\Psi)_\ell|$.
\end{definition}

\begin{lemma}[{\cite[Lemma~4.7]{CGSV21-boolean}}]\label{lem:bias}
For every $\veclambda \in \mathbb{R}^k$, we have $B_{\veclambda}(\Psi) = \max_{a\in \{-1,1\}^n} \langle a,  \bias_{\veclambda}(\Psi)\rangle$.
\end{lemma}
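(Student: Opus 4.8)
The statement to prove is the identity
\[
B_{\veclambda}(\Psi) = \max_{a\in\{-1,1\}^n}\langle a, \bias_{\veclambda}(\Psi)\rangle,
\]
where by definition $B_{\veclambda}(\Psi) = \sum_{\ell=1}^n |\bias_{\veclambda}(\Psi)_\ell|$ is the $\ell_1$ norm of the bias vector. So this is really just the elementary fact that for any fixed vector $w\in\R^n$, we have $\|w\|_1 = \max_{a\in\{-1,1\}^n}\langle a, w\rangle$; I would simply apply it with $w = \bias_{\veclambda}(\Psi)$.

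\textbf{Key steps.} First I would fix the vector $w := \bias_{\veclambda}(\Psi)\in\R^n$ so that $B_{\veclambda}(\Psi) = \sum_{\ell=1}^n |w_\ell|$ by definition. For the upper bound $\max_{a}\langle a, w\rangle \le \sum_\ell |w_\ell|$: for any $a\in\{-1,1\}^n$, $\langle a, w\rangle = \sum_\ell a_\ell w_\ell \le \sum_\ell |a_\ell w_\ell| = \sum_\ell |w_\ell|$, using $|a_\ell|=1$. For the matching lower bound, exhibit the maximizer: take $a^*_\ell = \sign(w_\ell)$ when $w_\ell\neq 0$ and $a^*_\ell = 1$ (arbitrarily) when $w_\ell = 0$; then $a^*\in\{-1,1\}^n$ and $\langle a^*, w\rangle = \sum_\ell a^*_\ell w_\ell = \sum_{\ell: w_\ell\neq 0} |w_\ell| = \sum_\ell |w_\ell|$. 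Combining the two bounds gives $\max_a \langle a, w\rangle = \sum_\ell |w_\ell| = B_{\veclambda}(\Psi)$, which is the claim.

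\textbf{Main obstacle.} There is essentially no obstacle here — this is a one-line linear-algebra observation (duality between the $\ell_1$ and $\ell_\infty$ norms, restricted to the cube of $\ell_\infty$-radius $1$). The only thing to be a little careful about is the sign convention when a coordinate $w_\ell$ is zero, which is handled by choosing an arbitrary sign there; it does not affect the inner product. So the entire proof is the two displayed inequalities above together with the explicit choice of $a^*$.
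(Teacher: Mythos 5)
Your proof is correct, and since the paper simply cites this as Lemma~4.7 of~\cite{CGSV21-boolean} without reproducing its proof, there is nothing to compare against; your argument is the standard one: pointwise bound $\langle a,w\rangle\le\sum_\ell|w_\ell|$ for all $a\in\{-1,1\}^n$, matched by choosing $a^*_\ell=\sign(w_\ell)$ (arbitrary when $w_\ell=0$).
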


\begin{lemma}[{\cite[Lemma~4.4]{CGSV21-boolean}}]\label{prop:ell1norm}
For every vector $\veclambda \in \R^k$ and $\epsilon > 0$, there exists a $O(\log(n))$ 
space sketching algorithm $\cA$ that on input a stream $\sigma_1,\ldots,\sigma_\ell$, representing an instance $\Psi = (C_1,\ldots,C_m;w_1,\ldots,w_m)$, outputs a $(1\pm \epsilon)$-approximation to $B_{\veclambda}(\Psi)$, i.e., for every $\Psi$, $(1-\epsilon)B_{\veclambda}(\Psi) \leq \cA(\Psi) \leq (1+\epsilon)B_{\veclambda}(\Psi)$, with probability at least $2/3$. 
\end{lemma}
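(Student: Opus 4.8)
The plan is to reduce the computation of $B_{\veclambda}(\Psi)$ to $\ell_1$-norm estimation in the turnstile streaming model and then invoke a known linear $\ell_1$-sketch. The first step is to observe, via \cref{lem:bias} (or directly from the definition of $B_{\veclambda}$), that $B_{\veclambda}(\Psi) = \sum_{\ell=1}^n |\bias_{\veclambda}(\Psi)_\ell| = \|\bias_{\veclambda}(\Psi)\|_1$. Writing $W = \sum_{i\in[m]} w_i$ and $\vecv(\Psi) := W\cdot \bias_{\veclambda}(\Psi)\in\R^n$, we get $B_{\veclambda}(\Psi) = \|\vecv(\Psi)\|_1/W$. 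The key point I would establish is that $\vecv(\Psi)$ is a \emph{linear} function of the stream: a single constraint $C_i=(\vecj(i),\vecb(i))$ of weight $w_i$ contributes $\sum_{t\in[k]}\lambda_t\, w_i\, b(i)_t\cdot\vece_{j(i)_t}$ to $\vecv(\Psi)$, where $\vece_\ell$ denotes the $\ell$-th standard basis vector. Hence each stream element triggers at most $k$ additive coordinate updates, so $\vecv(\Psi)$ evolves exactly as a turnstile-model vector, while $W$ is a scalar maintained exactly by summation.

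Next I would apply the optimal linear $\ell_1$-sketch of Indyk~\cite{Indyk} and Kane--Nelson--Woodruff~\cite{KNW10}: for accuracy parameter $\epsilon'$ and failure probability $\delta$ there is a (pseudo)random linear map $\Pi$ of sketch-dimension $O(\epsilon'^{-2}\log(1/\delta))$, storable together with its defining seed in $O_{\epsilon',\delta}(\log n)$ bits, from which one recovers a value in $[(1-\epsilon')\|\vecv\|_1,(1+\epsilon')\|\vecv\|_1]$ with probability $1-\delta$ for any fixed $\vecv$. Running this on the stream of updates to $\vecv(\Psi)$ with $\epsilon'=\epsilon/3$ and $\delta=1/3$, maintaining $W$ exactly in parallel, and outputting $\cA(\Psi):=\widehat L/W$ (with $\widehat L$ the estimate of $\|\vecv(\Psi)\|_1$), gives $\cA(\Psi)\in[(1-\epsilon')B_{\veclambda}(\Psi),(1+\epsilon')B_{\veclambda}(\Psi)]\subseteq[(1-\epsilon)B_{\veclambda}(\Psi),(1+\epsilon)B_{\veclambda}(\Psi)]$ with probability at least $2/3$, in $O(\log n)$ space. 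I would also record that linearity of $\Pi$ yields exactly the composability property demanded by the sketching model: $\Pi\vecv(\sigma\circ\tau)=\Pi\vecv(\sigma)+\Pi\vecv(\tau)$ and $W$ is additive, so the sketch of a concatenation is determined by the two sub-sketches under a fixed setting of the coins.

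I do not anticipate a deep obstacle; the work is bookkeeping, and the trickiest parts are exactly those standard to streaming algorithms. Two points require care: (a) ensuring that the coordinate updates $\lambda_t w_i b(i)_t$ and their partial sums have $\mathrm{poly}(n)$-bounded bit-complexity, so that each sketch counter fits in $O(\log n)$ bits — this is the standard regime for streaming CSPs, where weights are, say, integers of magnitude $\mathrm{poly}(n)$; and (b) storing the randomness that defines $\Pi$ within the space budget, which is precisely what the limited-independence / Nisan's-PRG analyses underlying~\cite{Indyk,KNW10} deliver, and which is also what makes the sketch well-defined and composable for a fixed choice of the algorithm's random coins. Beyond invoking these off-the-shelf $\ell_1$-sketching results with the right parameters, no further argument is needed.
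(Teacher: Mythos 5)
The paper imports this lemma from \cite{CGSV21-boolean} (their Lemma~4.4) without reproving it, so there is no in-paper argument to compare against. Your reduction---viewing $W\cdot\bias_{\veclambda}(\Psi)$ as a linear turnstile vector updated by at most $k$ coordinate increments per constraint, tracking $W$ exactly, applying a linear $\ell_1$-sketch \cite{Indyk,KNW10}, and noting that linearity of the sketch gives the required composability---is correct and is essentially the proof given in that reference.
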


Below, we describe \cref{alg:Bias} and show that it is an $O(\log(n))$ space $(\rho(f)+\epsilon^*(f)-\epsilon)$-approximation algorithm for $\mcsp(f)$.
\begin{algorithm}[H]
	\caption{A sketching $(\rho(f)+\epsilon^*(f)-\epsilon)$-approximation algorithm for $\mcsp(f)$}
	\label{alg:Bias}
\begin{algorithmic}[1]
		\Input a stream $\sigma_1,\ldots,\sigma_\ell$ representing an instance $\Psi$ of $\maxf$ where $\sigma_i = ((\vecj(i),\vecb(i)),w_i)$.
		    \State  Let $\veclambda = (\widehat{f}(\{1\}),\dots,\widehat{f}(\{k\}))\in \mathbb{R}^k $ and $\epsilon' = \epsilon/8$.
		    \State Use the algorithm $\cA$ from \cref{prop:ell1norm} to  compute $\tilde{B}$ to be a $(1\pm\epsilon')$ approximation to $B_{\veclambda}(\Psi)$, i.e., $(1-\epsilon')B_{\veclambda}(\Psi) \leq \tilde{B} \leq (1+\epsilon')B_{\veclambda}(\Psi)$ with probability at least $2/3$.
		    \State Let $\tilde{\delta} = \min\{\frac{1}{3k},\frac{2\tilde{B}}{9\rho(f)k^2}\}$.
		    \State {\bf Output:} $v= \rho(f) + \frac{\tilde{B}\tilde{\delta}}{(1+\epsilon')^2}$.
	\end{algorithmic}
\end{algorithm}
It is clear that the algorithm above runs in $O(\log(n))$ space (in particular by \cref{prop:ell1norm} for Step 2). We now turn to analyzing the correctness of the algorithm.

\subsubsection{Analysis of the correctness of Algorithm~\ref{alg:Bias}}
Before we analyse \cref{alg:Bias}, we establish some upper and lower bounds on $\val_{\Psi}$ in terms of $B_{\veclambda}(\Psi)$ where $\veclambda=(\widehat{f}(\{1\}),\dots,\widehat{f}(\{k\}))$.

\begin{lemma}[Lower bound on $\val_{\Psi}$]\label{lemma:lb}
Let $f:\{-1,1\}^k\rightarrow\{0,1\}$ be a Boolean function, and $\Psi$ be an instance of $\mcsp(f)$. Then \[\val_{\Psi}\ge \rho(f) + B_{\veclambda}(\Psi) \delta(\Psi)  \, ,\] where $\veclambda=(\widehat{f}(\{1\}),\dots,\widehat{f}(\{k\}))$ and $\delta(\Psi) = \min\{\frac{1}{3k},\frac{2B_{\veclambda}(\Psi)}{9\rho(f)k^2}\}$.
\end{lemma}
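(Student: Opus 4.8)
The plan is to lower-bound $\val_\Psi$ by exhibiting a concrete assignment whose value is at least $\rho(f) + B_{\veclambda}(\Psi)\,\delta(\Psi)$. The natural candidate is the ``biased plurality'' assignment: by \cref{lem:bias}, there is $a \in \{-1,1\}^n$ with $\langle a, \bias_{\veclambda}(\Psi)\rangle = B_{\veclambda}(\Psi)$; I would consider the \emph{random} assignment $\vecsigma$ that, independently for each coordinate $\ell$, sets $\sigma_\ell = a_\ell$ with probability $\tfrac12 + \delta$ and $\sigma_\ell = -a_\ell$ with probability $\tfrac12 - \delta$, for a parameter $\delta = \delta(\Psi)$ to be optimized. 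Since $\val_\Psi \ge \Exp_{\vecsigma}[\val_\Psi(\vecsigma)] = \tfrac1W\sum_i w_i \Exp[f(\vecb(i)\odot\vecsigma|_{\vecj(i)})]$, it suffices to lower-bound $\Exp[f(\vecb(i)\odot\vecsigma|_{\vecj(i)})]$ for each constraint.

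The key computation is a Fourier-analytic estimate of this expectation for a single constraint. Writing the constraint variables after the flip as $\delta$-biased $\pm1$ variables (biased toward the ``good'' direction $a_{j(i)_t}\cdot b(i)_t$... but more precisely toward whatever sign $a$ dictates), I would expand $f = \sum_{S} \widehat{f}(S)\chi_S$ and take expectations. The degree-$0$ term contributes exactly $\widehat{f}(\emptyset) = \rho(f)$. The degree-$1$ terms contribute $\sum_{t=1}^k \widehat{f}(\{t\})\cdot b(i)_t \cdot (2\delta)\cdot a_{j(i)_t}$, and summing the weighted version of this over all constraints gives exactly $2\delta \cdot \langle a, \bias_{\veclambda}(\Psi)\rangle = 2\delta\, B_{\veclambda}(\Psi)$ — this is the point where the choice $\veclambda = (\widehat{f}(\{1\}),\dots,\widehat{f}(\{k\}))$ is used crucially. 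The higher-degree terms ($|S|\ge 2$) are the error: each such term is at most $|\widehat{f}(S)|(2\delta)^{|S|} \le \rho(f)(2\delta)^2$ in absolute value (using \cref{prop:Boolean_fn_prop}(2), $|\widehat{f}(S)|\le\rho(f)$), and there are at most $2^k$ of them, so I'd bound the total error crudely by something like $\rho(f)\cdot(2\delta)^2 \cdot (\text{poly in } k)$. Actually a cleaner route: group terms by degree $d\ge2$, getting $\sum_{d\ge2}\binom{k}{d}\rho(f)(2\delta)^2(2\delta)^{d-2} \le \rho(f)(2\delta)^2 \cdot 2^k$, but to match the constants $\tfrac{1}{3k}$ and $\tfrac{2B_{\veclambda}}{9\rho(f)k^2}$ in $\delta(\Psi)$ one wants the sharper bound — I expect the intended estimate bounds the degree-$\ge2$ contribution by roughly $\rho(f)(2\delta)^2\binom{k}{\le k}$-type quantity or, better, uses that the relevant error is $O(\rho(f)\delta^2 k^2)$. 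Summing over all constraints with weights, the error over the whole instance stays bounded by the same per-constraint bound (it's a convex combination), so $\val_\Psi \ge \rho(f) + 2\delta B_{\veclambda}(\Psi) - (\text{error}(\delta))$.

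Finally I would optimize over $\delta$: the net gain is $2\delta B_{\veclambda}(\Psi) - c\,\rho(f)\delta^2 k^2$ for an appropriate constant $c$, which is increasing in $\delta$ up to $\delta \approx \tfrac{B_{\veclambda}(\Psi)}{c\rho(f)k^2}$, and one also needs $\delta \le \tfrac{1}{2}$ (and the crude Fourier bound to be meaningful requires $2\delta \le \tfrac{1}{k}$-ish so that the geometric tail $\sum_{d\ge2}(2\delta)^{d-2}\binom{k}{d}$ is controlled). Choosing $\delta(\Psi) = \min\{\tfrac{1}{3k}, \tfrac{2B_{\veclambda}(\Psi)}{9\rho(f)k^2}\}$ is exactly the value that makes both constraints hold while keeping the gain at least $B_{\veclambda}(\Psi)\delta(\Psi)$ (in the regime $\delta = \tfrac{2B_{\veclambda}}{9\rho k^2}$ the quadratic loss eats at most half the linear gain; in the regime $\delta = \tfrac{1}{3k}$ one checks directly). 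The main obstacle is getting the error bound on the degree-$\ge2$ Fourier mass tight enough — a naive $2^k$ bound is too lossy to yield the stated $\delta(\Psi)$, so I'd need the more careful per-degree summation together with the constraint $2k\delta \le \tfrac{2}{3}<1$ to sum the tail as a convergent geometric-type series dominated by its $d=2$ term.
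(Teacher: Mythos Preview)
Your proposal is correct and follows essentially the same approach as the paper: take $a$ achieving $\langle a,\bias_{\veclambda}(\Psi)\rangle=B_{\veclambda}(\Psi)$, consider a $\delta$-biased random assignment toward $a$, Fourier-expand to get $\rho(f)$ plus a linear gain of order $\delta B_{\veclambda}$ minus a degree-$\ge 2$ error bounded (via \cref{prop:Boolean_fn_prop}) by a geometric sum dominated by its $d=2$ term $\rho(f)\binom{k}{2}(2\delta)^2$, and then choose $\delta$. The only cosmetic difference is that the paper sets the bias parameter to $\gamma=3\delta(\Psi)$ (so the linear gain is $\gamma B_{\veclambda}$ and the error is shown to be at most $\tfrac{2}{3}\gamma B_{\veclambda}$), whereas you set it to $\delta(\Psi)$ directly; both parametrizations yield the stated bound once you carry out the geometric-tail estimate you outline.
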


\begin{lemma}[Upper bound on $\val_{\Psi}$]\label{lemma:ub}
Let $f:\{-1,1\}^k\rightarrow\{0,1\}$ be a Boolean function, $\epsilon_0(f)$ be as defined in \cref{def:epsilon}, and $\Psi$ be an instance of $\mcsp(f)$. Then
\[\val_{\Psi} \le \frac{B_{\veclambda}(\Psi)+\rho(f)\cdot k}{\epsilon_0(f)+\rho(f)\cdot k}\, ,\] where $\veclambda=(\widehat{f}(\{1\}),\dots,\widehat{f}(\{k\}))$.
\end{lemma}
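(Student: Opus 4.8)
The plan is to obtain a pointwise linear upper bound on $f$ over the Boolean cube and then lift it to a bound on $\val_\Psi$ using the bias machinery recalled from \cite{CGSV21-boolean}. Write $L(x)=\sum_{i=1}^k\widehat{f}(\{i\})x_i$ for the linear form with the Chow parameters as coefficients. The first step is to prove that for every $x\in\{-1,1\}^k$,
\[
f(x)\;\le\;\frac{L(x)+\rho(f)\cdot k}{\epsilon_0(f)+\rho(f)\cdot k}\,.
\]
To justify this, first note the denominator is positive: by \cref{prop:Boolean_fn_prop}(3) one has $L(x)\ge-\rho(f)k$ for all $x$, hence $\epsilon_0(f)\ge-\rho(f)k$ (the degenerate equality case, e.g.\ $f$ identically zero, is handled separately since then $\val_\Psi=0$). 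Then split into two cases: if $f(x)=1$, the displayed inequality is equivalent to $L(x)\ge\epsilon_0(f)$, which is exactly \cref{def:epsilon}; if $f(x)=0$, it is equivalent to $L(x)\ge-\rho(f)k$, which holds by \cref{prop:Boolean_fn_prop}(3) together with $\rho(f)=\widehat{f}(\emptyset)$.

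The second step applies this bound to an optimal assignment $\vecsigma\in\{-1,1\}^n$ (i.e.\ $\val_\Psi=\val_\Psi(\vecsigma)$). Plugging $x=\vecb(i)\odot\vecsigma|_{\vecj(i)}$ into the pointwise bound for each constraint $C_i=(\vecj(i),\vecb(i))$ and averaging with weights $w_i/W$ gives
\[
\val_\Psi=\frac1W\sum_{i=1}^m w_i\, f(\vecb(i)\odot\vecsigma|_{\vecj(i)})\;\le\;\frac{1}{\epsilon_0(f)+\rho(f)k}\left(\frac1W\sum_{i=1}^m w_i\sum_{t=1}^k\widehat{f}(\{t\})\,b(i)_t\,\sigma_{j(i)_t}+\rho(f)k\right).
\]
The third step reindexes the inner double sum by the variable index $\ell\in[n]$ that each slot $(i,t)$ refers to; with $\veclambda=(\widehat{f}(\{1\}),\dots,\widehat{f}(\{k\}))$ this is precisely $\sum_{\ell=1}^n\sigma_\ell\cdot\bias_{\veclambda}(\Psi)_\ell=\langle\vecsigma,\bias_{\veclambda}(\Psi)\rangle$, which by \cref{lem:bias} is at most $B_{\veclambda}(\Psi)$. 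Substituting this into the previous display yields the claimed inequality.

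The argument is essentially an ``LP relaxation'' of $\mcsp(f)$, so there is no genuinely hard step; the computation in Steps 2--3 is routine bookkeeping (the same reindexing used to define $\bias_{\veclambda}$). The only point that needs a moment of care is Step 1: getting the correct ``floor'' bound $L(x)\ge-\rho(f)k$ on the linear form on $f^{-1}(0)$ from \cref{prop:Boolean_fn_prop}, and confirming that $\epsilon_0(f)+\rho(f)k>0$ so that clearing the denominator preserves the direction of the inequality.
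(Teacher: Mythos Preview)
Your proposal is correct and follows essentially the same approach as the paper. The paper takes an optimal assignment $\veca^*$, splits the constraints into the satisfied set $S$ and its complement, bounds $\sum_t \widehat f(\{t\})\,b(i)_t\,a^*_{j(i)_t}$ from below by $\epsilon_0(f)$ on $S$ and by $-\rho(f)k$ off $S$ (via \cref{prop:Boolean_fn_prop}), and then uses $B_{\veclambda}(\Psi)\ge\langle\veca^*,\bias_{\veclambda}(\Psi)\rangle$; your pointwise inequality $f(x)\le (L(x)+\rho(f)k)/(\epsilon_0(f)+\rho(f)k)$ is exactly the same two-case bound packaged differently before averaging.
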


We defer the proofs of \cref{lemma:lb} and \cref{lemma:ub} to \cref{sec:lemma_proofs}. We now show the correctness of \cref{alg:Bias} using these lemmas. 

\subsubsection{Proof of Theorem~\ref{thm:approximablefn}}
\begin{proof}[Proof of~\cref{thm:approximablefn}]
First, by \cref{prop:ell1norm}, with probability at least $2/3$, $\tilde{B}$ is a $(1\pm\epsilon')$ approximation to $B_{\veclambda}(\Psi)$, i.e., $(1-\epsilon')B_{\veclambda}(\Psi) \leq \tilde{B} \leq (1+\epsilon')B_{\veclambda}(\Psi)$. Next, we show that with probability at least $2/3$, (i) $v\leq \val_{\Psi}$ and (ii) $v \geq \left( \rho(f)+\epsilon^*(f) -\epsilon \right)\cdot\val_\Psi$.
\paragraph{(i) $\bm{v\leq\val_\Psi}$.}
We have \[v= \rho(f) + \frac{\tilde{B}\tilde{\delta}}{(1+\epsilon')^2} \le \rho(f) +  B_{\veclambda}(\Psi) \delta(\Psi)\le \val_{\Psi}\, ,\] where the last inequality follows from \cref{lemma:lb}.

\paragraph{(ii) $\bm{v \geq \left( \rho(f)+\epsilon^*(f) -\epsilon \right)\cdot\val_\Psi}$.}
We have
\begin{equation}\label{eqn:lbv1}
    v = \rho(f) + \frac{\tilde{B}\tilde{\delta}}{(1+\epsilon')^2} \ge \rho(f) + B_{\veclambda}(\Psi) \delta(\Psi) \left(\frac{1-\epsilon'}{1+\epsilon'}\right)^2 \ge \rho(f) + B_{\veclambda}(\Psi) \delta(\Psi) (1-\epsilon)\, ,
\end{equation}
where the last inequality follows from the choice of $\epsilon'$. Let us first consider the case when $B_{\veclambda}(\Psi)\ge \epsilon_0(f)$. We have
\begin{equation}\label{eqn:lbv2}
    B_{\veclambda}(\Psi) \delta(\Psi) \ge \epsilon_0(f)\cdot \min\left\{\frac{1}{3k},\frac{2\epsilon_0(f)}{9\rho(f)k^2}\right\} \ge \epsilon^* \, ,
\end{equation}
where the last equality follows from the definition of $\epsilon^*(f)$ in \cref{def:epsilon}.

Combining \cref{eqn:lbv1} and \cref{eqn:lbv2}, we get 
\[v \ge \rho(f) + \epsilon^*(f) (1-\epsilon) \ge (\rho(f) + \epsilon^*(f) -\epsilon) \val_{\Psi} \, , \] where the last inequality follows from $\val_{\Psi}\le 1$.

Now, let us consider the case when $B_{\veclambda}(\Psi) <  \epsilon_0(f)$.  It follows from \cref{prop:Boolean_fn_prop} that $\epsilon_0(f)\le \rho(f) k$. Therefore, \[\frac{2 B_{\veclambda}(\Psi)}{9\rho(f)k^2} \le \frac{2\epsilon_0(f)}{9\rho(f)k^2} \le \frac{2}{9k} < \frac{1}{3k}\, ,\] and so $\delta(\Psi) = \frac{2 B_{\veclambda}(\Psi)}{9\rho(f)k^2} $. Combining \cref{eqn:lbv1} and \cref{lemma:ub}, we have
\[\frac{v}{\val_{\Psi}}\ge (1-\epsilon)\left(\frac{\rho(f)+\frac{2B_{\veclambda}(\Psi)^2}{9\rho(f)k^2}}{\rho(f)+\frac{B_{\veclambda}(\Psi)}{k}}\right) \left(\rho(f)+\frac{\epsilon_0(f)}{k}\right)\, .\]
We show that for $0\le B_{\veclambda}(\Psi)\le \epsilon_0(f)$, \begin{equation}\label{eqn:minima}\frac{\rho(f)+\frac{2B_{\veclambda}(\Psi)^2}{9\rho(f)k^2}}{\rho(f)+\frac{B_{\veclambda}(\Psi)}{k}} \ge \frac{\rho(f)+\frac{2\epsilon_0(f)^2}{9\rho(f)k^2}}{\rho(f)+\frac{\epsilon_0(f)}{k}} \, .\end{equation} This immediately implies that
\[\frac{v}{\val_{\Psi}}\ge (1-\epsilon) \left(\rho(f)+\frac{2\epsilon_0(f)^2}{9\rho(f)k^2}\right) \ge (1-\epsilon) (\rho(f)+\epsilon^*(f)) > \rho(f)+\epsilon^*(f) -\epsilon \, .\]
Consider the function $g(p) = \frac{\rho(f)+\frac{2p^2}{9\rho(f)}}{\rho(f)+p}$. In order to show \cref{eqn:minima}, it suffices to show that in the range $p\in [0,\frac{\epsilon_0(f)}{k}]$, $g(p)$ attains the minimum value at $p=\frac{\epsilon_0(f)}{k}$, i.e, $g'(p)<0$ in this range. We have
$g'(p)= \frac{\left(\frac{2(p+\rho(f))^2}{9\rho(f)} - \frac{11\rho(f)}{9}\right)}{(\rho(f)+p)^2}$ and for $p\in [0,\frac{\epsilon_0(f)}{k}]$, we have \[\left(\frac{2(p+\rho(f))^2}{9\rho(f)} - \frac{11\rho(f)}{9}\right)\le \left(\frac{2(\epsilon_0(f)/k+\rho(f))^2}{9\rho(f)} - \frac{11\rho(f)}{9}\right) \le \frac{8\rho(f)}{9} - \frac{11\rho(f)}{9} = -\frac{\rho(f)}{3} < 0 \, .\] This completes the proof of~\cref{thm:approximablefn}.
\end{proof}

\subsection{Proofs of Lemma~\ref{lemma:lb} and Lemma~\ref{lemma:ub}}\label{sec:lemma_proofs}
In this section, we prove \cref{lemma:lb} and \cref{lemma:ub}.

\begin{proof}[Proof of \cref{lemma:lb}]
Let $\bern(p)\in \Delta(\{-1,1\})$ denote the Bernoulli distribution where $1$ is sampled with probability $p$. Given an instance $\Psi = (C_1,\ldots,C_m; w_1,\ldots,w_m)$ of $\maxf$ where $C_i = (\vecj(i),\vecb(i))$ and $w_i \geq 0$, let $\gamma = 3\cdot\delta(\Psi) =\min \{\frac{1}{k},\frac{2 B_{\veclambda}(\psi)}{3\rho(f)k^2}\}$. Let $\vecsigma = \arg\max_{\veca \in \{-1,1\}^n} \langle \veca, \bias_{\veclambda}(\Psi)\rangle$. It follows from \cref{lem:bias} that $B_{\veclambda}(\Psi) = \langle \vecsigma, \bias_{\veclambda}(\Psi) \rangle$. In order to prove the lemma, we will show that \[\mathbb{E}_{\veca\sim\left(\bern(\frac{1+\gamma}{2})\right)^n}[\val_\Psi(\veca\odot \vecsigma)] \ge \rho(f) + B_{\veclambda}(\Psi) \delta(\Psi)\, .\] The lemma then directly follows from the fact that $\val_\Psi \ge \mathbb{E}_{\veca\sim\left(\bern(\frac{1+\gamma}{2})\right)^n}[\val_\Psi(\veca\odot \vecsigma)] $. 

We have
\begin{align*}
\mathbb{E}_{\veca\sim\left(\bern(\frac{1+\gamma}{2})\right)^n}[\val_\Psi(\veca\odot \vecsigma)] 
&= \mathbb{E}_{\veca\sim\left(\bern(\frac{1+\gamma}{2})\right)^n}\left[\frac{1}{W}\sum_{i=1}^m w_i \cdot f(\veca|_{\vecj(i)}\odot\vecsigma|_{\vecj(i)}\odot \vecb(i))\right]\\
&= \mathbb{E}_{\veca\sim\left(\bern(\frac{1+\gamma}{2})\right)^n}\left[\frac{1}{W}\sum_{i=1}^m w_i \cdot \sum_{S\subseteq [k]}\widehat{f}(S) \cdot \chi_S(\veca|_{\vecj(i)}\odot\vecsigma|_{\vecj(i)}\odot \vecb(i))\right]\\
&~~~~~~~~~~~~~~~~~~~~~~~~~\mbox{(Fourier expansion of $f$)}\\
&= \frac{1}{W}\sum_{i=1}^m w_i \cdot \sum_{S\subseteq [k]}\widehat{f}(S) \cdot \mathbb{E}_{\veca\sim\left(\bern(\frac{1+\gamma}{2})\right)^n}\left[\chi_S(\veca|_{\vecj(i)}\odot\vecsigma|_{\vecj(i)}\odot \vecb(i))\right]\\
&~~~~~~~~~~~~~~~~~~~~~~~~~\mbox{(Linearity of expectation)}\\
&= \frac{1}{W}\sum_{i=1}^m w_i \cdot \sum_{S\subseteq [k]}\widehat{f}(S) \cdot \chi_S(\vecb(i)\odot\vecsigma|_{\vecj(i)}) \cdot \mathbb{E}_{\veca\sim\left(\bern(\frac{1+\gamma}{2})\right)^n}\left[\chi_S(\veca|_{\vecj(i)})\right]\\
&~~~~~~~~~~~~~~~~~~~~~~~~~\mbox{(Since $\chi_S(a\odot b) = \chi_S(a) \cdot \chi_S(b)$)}\\
&= \frac{1}{W}\sum_{i=1}^m w_i \cdot \sum_{S\subseteq [k]}\widehat{f}(S) \cdot \chi_S(\vecb(i)\odot\vecsigma|_{\vecj(i)}) \cdot \gamma^{|S|}\\
&~~~~~~~~~~~~~~~~~~~~~~~~~\mbox{(Since $\mathbb{E}_{\veca\sim \left(\bern(\frac{1+\gamma}{2})\right)^n}[a_\ell]=\gamma$ for all $\ell\in [n]$)}\\
&= \widehat{f}(\emptyset) +  \frac{1}{W} \sum_{i\in [m]} w_i \sum_{t\in [k]} \widehat{f}(\{t\}) \cdot b(i)_t \cdot \sigma_{j(i)_t}\cdot \gamma \\
&~~~+ \frac{1}{W}\sum_{i=1}^m w_i \cdot \sum_{S\subseteq [k]:|S|\ge 2}\widehat{f}(S) \cdot \chi_S(\vecb(i)\odot\vecsigma|_{\vecj(i)}) \cdot \gamma^{|S|} \\
&= \widehat{f}(\emptyset) + \sum_{\ell \in [n]} \left(\frac{1}{W}\sum_{i\in[m],t\in[k]:j(i)_t=\ell} \widehat{f}(\{t\}) \cdot w_i \cdot b(i)_t \right) \sigma_\ell \cdot \gamma\\
&~~~+ \frac{1}{W}\sum_{i=1}^m w_i \cdot \sum_{S\subseteq [k]:|S|\ge 2}\widehat{f}(S) \cdot \chi_S(\vecb(i)\odot\vecsigma|_{\vecj(i)}) \cdot \gamma^{|S|}\\
&~~~~~~~~~~~~~~~~~~~~~~~~~\mbox{(Rearranging the summations)}\\
&\ge \widehat{f}(\emptyset) + \gamma \langle \bias_{\veclambda}(\Psi),\vecsigma \rangle - \sum_{S\subseteq [k]:|S|\ge 2}|\widehat{f}(S)|\cdot \gamma^{|S|}\\
&~~~~~~~~~~~~~~~~~~~~~~~~~\mbox{(By the definition of $\veclambda$, $\bias_{\veclambda}(\Psi)$), and $|\chi_S(\cdot)|\le 1$)}\\
&\ge \rho(f) + \gamma \cdot B_{\veclambda}(\Psi) - \rho(f) \sum_{S\subseteq [k]:|S|\ge 2}\gamma^{|S|}\\
&~~~~~~~~~~~~~~~~~~~~~~~~~\mbox{(By the definition of $\vecsigma$ and \cref{prop:Boolean_fn_prop})}\\
&= \rho(f) + \gamma \cdot B_{\veclambda}(\Psi) - \rho(f)\sum_{r=2}^k \binom{k}{r}\cdot \gamma^r \, .\\
\end{align*}

We now prove that $\rho(f)\sum_{r=2}^k \binom{k}{r}\cdot \gamma^r \le \frac{2\gamma}{3}\cdot B_{\veclambda}(\Psi)$. Consider the combinatorial identity $\binom{k}{r} = \frac{k\cdot \binom{k-1}{r-1}}{r}$. Since $\gamma \le \frac{1}{k}$ and $r\ge 2$, we have \[\binom{k}{r}\gamma^{r} =\frac{k\cdot \binom{k-1}{r-1}}{r} \cdot \gamma^{r} \le \frac{1}{2}\cdot \binom{k-1}{r-1}\cdot \gamma^{r-1} < \frac{1}{2}\cdot \binom{k}{r-1}\cdot \gamma^{r-1}\, . \] Hence $\sum_{r=2}^k \binom{k}{r}\cdot \gamma^r\le 2 \cdot \binom{k}{2} \cdot \gamma^2$. Since $\gamma \le \frac{2B_{\veclambda}(\Psi)}{3\rho(f)k^2}$, we have 
\[\rho(f)\cdot\sum_{r=2}^k \binom{k}{r}\cdot \gamma^r  \le \rho(f)\cdot 2 \cdot \binom{k}{2} \cdot \gamma^2 \le \frac{2B_{\veclambda}(\Psi)\cdot \gamma}{3} \, . \] 
Recall that $\gamma=3\delta(\Psi)$. Finally, we conclude that
\[ \val_\Psi \ge \mathbb{E}_{\veca\sim\left(\bern(\frac{1+\gamma}{2})\right)^n}[\val_\Psi(\veca\odot \vecsigma)] \ge \rho(f) + \frac{\gamma}{3} \cdot B_{\veclambda}(\Psi) = \rho(f) + B_{\veclambda}(\Psi)\delta(\Psi) \, .\]
\end{proof}

\begin{proof}[Proof of \cref{lemma:ub}]

Let $\Psi = (C_1,\ldots,C_m; w_1,\ldots,w_m)$ be an instance of $\maxf$ where $C_i = (\vecj(i),\vecb(i))$ and $w_i \geq 0$. Let $\veca^*\in \{-1,1\}^n $ denote the assignment that satisfies the maximum weight of constraints in $\Psi$, i.e., $\veca^* = \arg\max_{\veca\in \{-1,1\}^n} \val_\Psi(\veca)$. It follows from \cref{lem:bias} that $B_{\veclambda}(\Psi) \ge \langle \veca^*,\bias_{\veclambda}(\Psi) \rangle$. Let $S$ be the set of indices corresponding to constraints of $\Psi$ satisfied by $\veca^*$, i.e., $S = \{i\in [m]: f(\veca^*|_{\vecj(i)}\odot \vecb(i))=1\}$. We have
\begin{align*}
\langle \veca^*, \bias_{\veclambda}(\Psi)\rangle &= \sum_{\ell\in [n]} a^*_\ell \cdot \frac{1}W \cdot \sum_{i \in [m], t \in [k] : j(i)_t = \ell} \lambda_t w_i b(i)_t\\
&= \frac{1}{W} \sum_{i\in [m]} w_i \sum_{t\in [k]} \lambda_t \cdot b(i)_t \cdot a^*_{j(i)_t}\\
&~~~~~~~~~~~~~~~~~~~~~~~~~\mbox{(Exchanging the summations)} \\
&= \frac{1}{W} \sum_{i\in [m]} w_i \sum_{t\in [k]} \widehat{f}(\{t\}) \cdot b(i)_t \cdot a^*_{j(i)_t}\\
&~~~~~~~~~~~~~~~~~~~~~~~~~\mbox{(~$\veclambda = (\widehat{f}(\{1\}),\dots,\widehat{f}(\{k\}))$~)}\\
&= \frac{1}{W} \sum_{i\in S} w_i \sum_{t\in [k]} \widehat{f}(\{t\}) \cdot b(i)_t \cdot a^*_{j(i)_t} + \frac{1}{W} \sum_{i\notin S} w_i \sum_{t\in [k]} \widehat{f}(\{t\}) \cdot b(i)_t \cdot a^*_{j(i)_t}\\
&\ge \frac{1}{W} \sum_{i\in S} w_i \cdot \epsilon_0(f) - \frac{1}{W} \sum_{i\notin S} w_i \cdot \rho(f) \cdot k\\
&~~~~~~~~~~~~~~~~~~~~~~~~~\mbox{(By the definition of $S$ and $\epsilon_0(f)$, and \cref{prop:Boolean_fn_prop})}\\
&= \val_{\Psi} \cdot \epsilon_0(f) - (1-\val_{\Psi})\rho(f) \cdot k\\
&~~~~~~~~~~~~~~~~~~~~~~~~~\mbox{(By the definition of $\veca^*$)}\, .\\
\end{align*}
Therefore, we get
\[B_{\veclambda}(\Psi) \ge \val_{\Psi} \cdot \epsilon_0(f) - (1-\val_{\Psi})\rho(f)\cdot k \, .\] Rearranging the terms, we get
\[\val_{\Psi} \le \frac{B_{\veclambda}(\Psi)+\rho(f)\cdot k}{\epsilon_0(f)+\rho(f)\cdot k}\, .\]
\end{proof}

\subsection{Approximability of weak monarchy functions}\label{sec:weakmonarchy}
In this section, we analyze the streaming approximability of $\maxf$ where $f$ is a weak monarchy function. Note that in order for $\wmon_{k, j}$ to be a balanced LTF, the total number of votes, i.e., $j + k - 1$, needs to be odd. Therefore, we make such assumption throughout the rest of this section.

\begin{lemma}\label{lem:weakMonarchyFn}
For all integers $j \geq 2$ and  $k\geq 7j^3$ such that $k+j$ is even, 
\[\wmon_{k, j}(x) = \sign \left(\sum_{i=1}^k \widehat{\wmon_{k, j}}(\{i\})x_i \right).\]
\end{lemma}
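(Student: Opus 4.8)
The plan is to compare $\wmon_{k,j}$ with the LTF $g(x) = \sign\bigl(\sum_{i=1}^k \widehat{\wmon_{k,j}}(\{i\})\, x_i\bigr)$ input by input. By the symmetry of the citizens all citizen Chow parameters coincide, so write $w = \widehat{\wmon_{k,j}}(\{1\})$ and $c = \widehat{\wmon_{k,j}}(\{2\}) = \cdots = \widehat{\wmon_{k,j}}(\{k\})$. Both $\wmon_{k,j}$ and $g$ then have the form $\sign(a x_1 + b S)$ with $S = \sum_{i=2}^k x_i$, for $(a,b) = (j,1)$ and $(a,b) = (w,c)$ respectively. Since $k+j$ is even, $S$ ranges over integers of parity $k-1$, the opposite parity to $j$; thus $S$ never equals $\pm j$, and for $x_1 = 1$ the value $\sign(jx_1 + S) = \sign(S+j)$ is $0$ for $S \le -(j+1)$ and $1$ for $S \ge -(j-1)$, while for $x_1 = -1$ it is $0$ for $S \le j-1$ and $1$ for $S \ge j+1$. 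Consequently, provided $c > 0$, the functions $\wmon_{k,j}$ and $g$ agree on all of $\{-1,1\}^k$ as soon as $j-1 < w/c < j+1$. So the lemma reduces to showing $c > 0$ and $|w/c - j| < 1$.

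Next I would compute the Chow parameters exactly. Conditioning on $x_1$ and using that a signed sum and its negation are identically distributed gives $w = \tfrac12 \Pr[\,|S| \le j-1\,]$, where $S$ is a sum of $k-1$ independent uniform $\pm1$ variables; conditioning on $x_2$ and then on $x_1$ similarly gives $c = \tfrac12 \Pr[\,S' = j\,]$, where $S'$ is a sum of $k-2$ independent uniform $\pm1$ variables. Rewriting these as binomial coefficients and setting $N = (k-j)/2$ (an integer since $k+j$ is even, with $0 \le N-1 \le k-2$ and $N+j-1 \le k-1$), this becomes
\[
w = 2^{-k}\sum_{t=0}^{j-1}\binom{k-1}{N+t}, \qquad c = 2^{-(k-1)}\binom{k-2}{N-1} > 0 ,
\]
so $c>0$ is immediate (alternatively it follows from monotonicity of $\wmon_{k,j}$), and
\[
\frac{w}{c} \;=\; \frac12 \sum_{t=0}^{j-1} r_t, \qquad r_t := \frac{\binom{k-1}{N+t}}{\binom{k-2}{N-1}} .
\]

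It then remains to show $\bigl|\tfrac12\sum_{t=0}^{j-1} r_t - j\bigr| < 1$, i.e.\ that each $r_t$ is very close to $2$. Pascal's rule gives $r_0 = 2 + \tfrac{j-1}{N}$, and $\tfrac{r_{t+1}}{r_t} = \tfrac{N+j-1-t}{N+t+1} = 1 + \tfrac{j-2-2t}{N+t+1}$, so $|r_{t+1} - r_t| < r_t\cdot \tfrac{j}{N}$ for $0 \le t \le j-2$. The hypothesis $k \ge 7j^3$ (with $j \ge 2$) forces $N = (k-j)/2 \ge 3j^3$, and a short induction on $t$ — using $|r_t - 2| \le 1$ (hence $r_t \le 3$) as the inductive hypothesis and telescoping $r_{t+1} - 2 = (r_0 - 2) + \sum_{s\le t}(r_{s+1}-r_s)$ — yields $|r_t - 2| \le \tfrac{(j-1)(1+3j)}{N} < \tfrac1j$ for every $t \in \{0,\dots,j-1\}$. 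Summing, $|w/c - j| = \tfrac12\bigl|\sum_{t=0}^{j-1}(r_t-2)\bigr| < \tfrac12 \cdot j \cdot \tfrac1j = \tfrac12 < 1$, which combined with the reduction of the first paragraph finishes the proof.

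The main obstacle is this last estimate. The naive bound that each $r_t$ lies within some fixed constant of $2$ is useless here: there are roughly $j$ summands, so the deviation of each $r_t$ from $2$ must be shown to be $O(1/j)$. One genuinely needs that the $\le 2j$ binomial coefficients appearing in $w$ and $c$ all lie in a window of width $O(j)$ near the center of a binomial profile on $\Theta(k)$ points, so that — when $k \gg j^3$ — each ratio $r_t$ is within $O(j/N) = O(1/j^2)$ of $2$. Quantifying this is exactly what the cubic bound $k \ge 7j^3$ buys, and the telescoping induction above is the cleanest way I see to propagate the per-step multiplicative error $1 + O(j/N)$ over $j$ steps without it compounding badly; the cube (rather than, say, a square) is what gives enough slack for the crude constants in that induction.
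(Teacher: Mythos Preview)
Your proof is correct and follows essentially the same strategy as the paper: compute the two Chow parameters explicitly, reduce the claim to the inequality $j-1 < w/c < j+1$, and then control ratios of nearby binomial coefficients using the hypothesis $k \ge 7j^3$. The only real difference is in the final estimate: the paper bounds the numerator above and below by $\tfrac{j}{2}$ times an extremal binomial term and then bounds that single ratio, whereas you write $w/c = \tfrac12\sum_{t=0}^{j-1} r_t$ and run a short telescoping induction to show each $r_t$ is within $O(1/j)$ of $2$; both routes exploit the same fact that the relevant binomials sit in a window of width $O(j)$ near the center of a row of length $k \gg j^3$.
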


Note that \cref{lem:weakMonarchyFn} along with \cref{lem:approximablefn} directly conclude~\cref{thm:wmon-apx} restated below.

\thmtwo*

\begin{proof}[Proof of \cref{lem:weakMonarchyFn}]
We start by finding the Chow parameters of $\wmon_{k, j}$. As mentioned earlier, we only consider the case where $k + j$ is even. For the president,
\begin{eqnarray*}
\widehat{\wmon_{k, j}}(\{1\}) 
&=& \Pr\{x_1 = 1, \wmon_{k, j}(x) = 1\} \times 1\\
&& \Pr\{x_1 = -1, \wmon_{k, j}(x) = 1\} \times (-1)\\
&& \Pr\{x_1 = 1, \wmon_{k, j}(x) = 0\} \times 0\\
&& \Pr\{x_1 = -1, \wmon_{k, j}(x) = 0\} \times 0\\
&=& \frac{1}{2^{k}}\left(\binom{k-1}{\geq \frac{k+j}{2}-j}
- \binom{k-1}{\geq \frac{k+j}{2}}
\right)\\
&=& \frac{1}{2^{k}}\left(\binom{k-1}{\geq \frac{k-j}{2}}
- \binom{k-1}{\geq \frac{k+j}{2}}
\right)\\
&=& \frac{1}{2^{k}}\left(
2^{k-1} - 2\binom{k-1}{<\frac{k-j}{2}}
\right) \, .
\end{eqnarray*}

For citizen $x_i$ ($i > 1$),
\begin{eqnarray*}
\widehat{\wmon_{k, j}}(\{i\}) &=& \Pr\{x_1 = 1, x_i = 1,  \wmon_{k, j}(x) = 1\} \times 1\\
&& \Pr\{x_1 = 1, x_i = -1,  \wmon_{k, j}(x) = 1\} \times (-1)\\
&& \Pr\{x_1 = -1, x_i = 1,  \wmon_{k, j}(x) = 1\} \times 1\\
&& \Pr\{x_1 = -1, x_i = -1,  \wmon_{k, j}(x) = 1\} \times (-1)\\
&& \Pr\{x_1 = 1, x_i = 1,  \wmon_{k, j}(x) = 0\} \times 0\\
&& \Pr\{x_1 = 1, x_i = -1,  \wmon_{k, j}(x) = 0\} \times 0\\
&& \Pr\{x_1 = -1, x_i = 1,  \wmon_{k, j}(x) = 0\} \times 0\\
&& \Pr\{x_1 = -1, x_i = -1,  \wmon_{k, j}(x) = 0\} \times 0\\
&=& \frac{1}{2^{k}}\left(
\binom{k-2}{\geq \frac{k+j}{2}-j-1}
- \binom{k-2}{\geq \frac{k+j}{2}-j}
+ \binom{k-2}{\geq \frac{k+j}{2}-1}
- \binom{k-2}{\geq \frac{k+j}{2}}
\right)\\
&=& \frac{1}{2^{k}}\left(
\binom{k-2}{\frac{k+j}{2}-j-1}
+ \binom{k-2}{\frac{k+j}{2}-1}
\right)\\
&=& \frac{\binom{k-2}{\frac{k-j}{2}-1}}{2^{k-1}} \, .
\end{eqnarray*}

Note that in order for functions $\wmon_{k, j}(x)$ and $\sign \left(\sum_{i=1}^k \widehat{\wmon_{k, j}}(\{i\})x_i\right)$ to be the same, it suffices to have
$$
j - 1 < \frac{\widehat{\wmon_{k, j}}(\{1\})}{\widehat{\wmon_{k, j}}(\{i\})}< j + 1 \, .\footnote{Indeed, letting $w=\frac{\widehat{\wmon_{k, j}}(\{1\})}{\widehat{\wmon_{k, j}}(\{i\})}$, if $j-1<w<j+1$, then we have that $\sign \left(\sum_{i=1}^k \widehat{\wmon_{k, j}}(\{i\})x_i\right)=\sign \left(wx_1+\sum_{i=2}^k x_i\right)=\sign\left(j x_1 +\sum_{i=2}^k x_i\right)=\wmon_{k, j}(x)$.}
$$
Thus, in the rest of the proof, we find values for $k$ that guarantee the bounds above. We start with the upper-bound:

\begin{equation}
\frac{\widehat{\wmon_{k, j}}(\{1\})}{\widehat{\wmon_{k, j}}(\{i\})} 
= \frac{2^{k-2} - \binom{k-1}{<\frac{k-j}{2}}}{\binom{k-2}{\frac{k-j}{2}-1}}
\leq \frac{
\frac{j}{2} \cdot \binom{k-1}{\lfloor\frac{k-1}{2}\rfloor}
}{\binom{k-2}{\frac{k-j}{2}-1}} \, .
\end{equation}

The last inequality holds as below:
\begin{itemize}
    \item If $k-1$ is odd: $2^{k-2} = \sum_{i=0}^{\frac{k-2}{2}} \binom{k-1}{i}
    \Rightarrow 
    2^{k-2} - \binom{k-1}{<\frac{k-j}{2}} =\sum_{i=\frac{k-j}{2}}^{\frac{k-2}{2}} \binom{k-1}{i} \leq \frac{j}{2} \cdot \binom{k-1}{\frac{k-2}{2}}$
    
    \item If $k-1$ is even: $2^{k-2} = \frac{1}{2}\binom{k-1}{\frac{k-1}{2}}+\sum_{i=0}^{\frac{k-3}{2}} \binom{k-1}{i}
    \Rightarrow 
    2^{k-2} - \binom{k-1}{<\frac{k-j}{2}} \leq \frac{j}{2} \cdot \binom{k-1}{\frac{k-1}{2}}$
\end{itemize}

Therefore, 
\begin{eqnarray*}
\frac{\widehat{\wmon_{k, j}}(\{1\})}{\widehat{\wmon_{k, j}}(\{i\})} 
&\leq& \frac{j}{2} \cdot \frac{\frac{(k-1)!}{(\lfloor\frac{k-1}{2}\rfloor)!(\lceil\frac{k-1}{2}\rceil)!}}
{\frac{(k-2)!}{(\frac{k-j}{2}-1)!(\frac{k+j}{2}-1)!}}
= \frac{j}{2}\cdot \frac{k-1}{\lfloor\frac{k-1}{2}\rfloor} \cdot \frac{(\frac{k+j}{2}-1) \cdots (\lceil\frac{k-1}{2}\rceil + 1) }{(\lfloor\frac{k-1}{2}-1\rfloor) \cdots (\frac{k-j}{2})}\\
&\leq& \frac{j}{2} \cdot 2\left(1 + \frac{1}{k-2}\right) \cdot \left(\frac{\lceil\frac{k-1}{2}\rceil + 1}{\frac{k-j}{2}}\right)^{\lfloor\frac{j - 1}{2}\rfloor}\\
&\leq& j \cdot \left(1 + \frac{1}{k-2}\right) \cdot 
\left(\frac{k + 2}{k-j}\right)^{\frac{j - 1}{2}}\\
&=& j \cdot \left(1 + \frac{1}{k-2}\right) \cdot 
\left(1 + \frac{j + 2}{k-j}\right)^{\frac{j - 1}{2}}
\leq j \cdot \left(1 + \frac{j + 2}{k-j}\right)^{j} \, .
\end{eqnarray*}
For any given $j$, $\left(1 + \frac{j + 2}{k-j}\right)^{j}$ tends to 1 as $k$ goes to $\infty$. Therefore, there exits some $K_0$ such that for all $k \geq K_0$, 
$ \frac{\widehat{\wmon_{k, j}}(\{1\})}{\widehat{\wmon_{k, j}}(\{i\})}  < j + 1.$ More precisely, we take $k$ to be at least $K_0 = 2j^3 + 4j^2 + j \leq 7j^3$. This way, 

\begin{eqnarray*}
\frac{\widehat{\wmon_{k, j}}(\{1\})}{\widehat{\wmon_{k, j}}(\{i\})} 
&\leq&  j \cdot \left(1 + \frac{1}{2j^2}\right)^{j}
\leq  j \cdot \left(1
+ j\cdot \frac{1}{2j^2}
+ j^2\cdot \frac{1}{(2j^2)^2}
+ j^3\cdot \frac{1}{(2j^2)^3}
+ \cdots
\right)\\
&=&  j \cdot \left(1
+ \frac{1}{2j}
+ \frac{1}{(2j)^2}
+ \frac{1}{(2j)^3}
+ \cdots
\right)\\
&<& j \cdot \left(1 + \frac{1}{j}\right)
.
\end{eqnarray*}

We now proceed to the lower bound. 
\begin{equation}
\frac{\widehat{\wmon_{k, j}}(\{1\})}{\widehat{\wmon_{k, j}}(\{i\})} 
= \frac{2^{k-2} - \binom{k-1}{<\frac{k-j}{2}}}{\binom{k-2}{\frac{k-j}{2}-1}}
\geq \frac{
\frac{j}{2} \cdot \binom{k-1}{\frac{k-j}{2}}
}{\binom{k-2}{\frac{k-j}{2}-1}} \, .
\end{equation}

Similar to the upper-bound case, the last inequality can be observed as follows:
\begin{itemize}
    \item If $k-1$ is odd: $2^{k-2} = \sum_{i=0}^{\frac{k-2}{2}} \binom{k-1}{i}
    \Rightarrow 
    2^{k-2} - \binom{k-1}{<\frac{k-j}{2}} =\sum_{i=\frac{k-j}{2}}^{\frac{k-2}{2}} \binom{k-1}{i} \geq \frac{j}{2} \cdot \binom{k-1}{\frac{k-j}{2}}$
    
    \item If $k-1$ is even: $2^{k-2} = \frac{1}{2}\binom{k-1}{\frac{k-1}{2}}+\sum_{i=0}^{\frac{k-3}{2}} \binom{k-1}{i}
    \Rightarrow 
    2^{k-2} - \binom{k-1}{<\frac{k-j}{2}} \geq \frac{j}{2} \cdot \binom{k-1}{\frac{k-j}{2}}$
\end{itemize}

Therefore,
\begin{eqnarray*}
\frac{\widehat{\wmon_{k, j}}(\{1\})}{\widehat{\wmon_{k, j}}(\{i\})} 
&\geq& \frac{j}{2} \cdot \frac{
 \binom{k-1}{\frac{k-j}{2}}
}{\binom{k-2}{\frac{k-j}{2}-1}} 
= \frac{j}{2} \cdot \frac{
\binom{k-2}{\frac{k-j}{2}}+\binom{k-2}{\frac{k-j}{2}-1}
 }{\binom{k-2}{\frac{k-j}{2}-1}} 
 = \frac{j}{2} \cdot \left(1 + \frac{
\binom{k-2}{\frac{k-j}{2}} }{\binom{k-2}{\frac{k-j}{2}-1}} \right)
 \\
 &=&\frac{j}{2} \cdot \left(1 + \frac{\frac{k+j}{2}-1}{\frac{k-j}{2}} \right)
 =\frac{j}{2} \cdot \left(1 + \frac{k+j-2}{k-j} \right)\\
 &=&j \cdot \left(1 + \frac{j-1}{k-j} \right) \, .
\end{eqnarray*}

This lower bound is larger than $j$ for every $k > j$. Thus, for every $k \geq 2j^3 + 4j^2 + j$, 
$j \leq \frac{\widehat{\wmon_{k, j}}(\{1\})}{\widehat{\wmon_{k, j}}(\{i\})} < j + 1$ which implies that 
$\wmon_{k, j}(x) = \sign \left(\sum_{i=1}^k \widehat{\wmon_{k, j}}(\{i\})x_i \right)$, and concludes the proof.
\end{proof}

\section*{Acknowledgments}
We thank the anonymous reviewers for their helpful and constructive comments.

\bibliography{references}
\bibliographystyle{alpha}
\end{document}